\newcounter{asyfigcntr}
    \par\stepcounter{asyfigcntr}%
\DeclareMathOperator{\closest}{\mathit{closest}}
\DeclareMathOperator{\unit}{unit}
\DeclareMathOperator{\spacing}{\mathit{spacing}}
\DeclareMathOperator{\vol}{vol}
\newcommand{\R}{\mathbb{R}}
\newcommand{\eps}{\epsilon}
\newcommand{\MO}{\mathcal O}
\newcommand{\vmag}[1]{\left\| #1 \right\|}
\newcommand{\paren}[1]{\left( #1 \right)}
\newcommand{\BIGO}[1]{\MO\left(#1\right)}
\newcommand{\ceil}[1]{\left\lceil #1 \right\rceil}
\title{New Approximation Algorithms for Touring Regions} %TODO Please add
\author{Benjamin Qi\footnote{\label{note1}These authors contributed equally.}}{Massachusetts Institute of Technology, Cambridge, MA, USA}{bqi343@gmail.com}{https://orcid.org/0000-0002-0721-2036}{}
\author{Richard Qi\cref{note1}}{Massachusetts Institute of Technology, Cambridge, MA, USA}{rqi@mit.edu}{}{}
\author{Xinyang Chen}{Massachusetts Institute of Technology, Cambridge, MA, USA}{chenxy@mit.edu}{}{}
\authorrunning{B. Qi, R. Qi, and X. Chen}
\keywords{shortest paths, convex bodies, fat objects, disks} %TODO mandatory; please add comma-separated list of keywords
\begin{document}

\maketitle

\begin{abstract}
We analyze the \emph{touring regions problem}: find a ($1+\epsilon$)-approximate Euclidean shortest path in $d$-dimensional space that starts at a given starting point, ends at a given ending point, and visits given regions $R_1, R_2, R_3, \dots, R_n$ in that order. 

Our main result is an $\mathcal O \left(\frac{n}{\sqrt{\epsilon}}\log{\frac{1}{\epsilon}} + \frac{1}{\epsilon} \right)$-time algorithm for touring disjoint disks. We also give an $\mathcal O\left (\min\left(\frac{n}{\epsilon}, \frac{n^2}{\sqrt \epsilon}\right) \right)$-time algorithm for touring disjoint two-dimensional convex fat bodies. Both of these results naturally generalize to larger dimensions; we obtain $\mathcal O\left(\frac{n}{\epsilon^{d-1}}\log^2\frac{1}{\epsilon}+\frac{1}{\epsilon^{2d-2}}\right)$ and $\mathcal O\left(\frac{n}{\epsilon^{2d-2}}\right)$-time algorithms for touring disjoint $d$-dimensional balls and convex fat bodies, respectively.
\end{abstract}

\section{Introduction}

We analyze the \emph{touring regions problem}: find a ($1+\eps$)-approximate Euclidean shortest path in $d$-dimensional space that starts at a given starting point, ends at a given ending point, and visits given regions $R_1, R_2, R_3, \dots, R_n$ in that order. We present algorithms for the cases where the regions $R_i$ are constrained to be unions of general convex bodies, convex fat bodies, or balls. To the best of our knowledge, we are the first to consider the cases where regions are disjoint convex fat bodies or balls in arbitrary dimensions. Consequently, our algorithms use techniques not previously considered in the touring regions literature (\cref{sec:summary-techniques}). Our algorithms work under the assumption that a closest point oracle is provided; closest point projection has been extensively used and studied in convex optimization and mathematics \cite{doi:10.1137/S0036144593251710,nearest-points}.

Most prior work focuses on $d=2$ or significantly restricts the convex bodies. The special case where $d=2$ and all regions are constrained to be polygons is known as the \emph{touring polygons problem}. Dror et al. \cite{dror2003} solved the case where every region is a convex polygon exactly, presenting an $\BIGO{|V|n \log{\frac{|V|}{n}}}$-time algorithm when the regions are disjoint as well as an $\BIGO{|V|n^2 \log{|V|}}$-time algorithm when the regions are possibly non-disjoint and the subpath between every two consecutive polygons in the tour is constrained to lie within a simply connected region called a \emph{fence}. Here, $|V|$ is the total number of vertices over all polygons. Tan and Jiang \cite{starburst-improve} improved these bounds to $\BIGO{|V|n}$ and $\BIGO{|V|n^2}$-time, respectively, without considering subpath constraints. 

For touring nonconvex polygons, Ahadi et al. \cite{ahadi2014touring} proved that finding an optimal path is NP-hard even when polygons are disjoint and constrained to be two line segments each. Dror et al. \cite{dror2003} showed that approximately touring nonconvex polygons with constraining fences is a special case of 3D shortest path with obstacle polyhedra, which can be solved in $\tilde{\MO}\paren{\frac{e^4}{\epsilon^2}}$ time by applying results of Asano et al. \cite{asano2004pseudo}, where $e$ is the total number of edges over all polyhedra. Mozafari and Zarei \cite{Mozafari2012TouringPA} improved the bound for the case of nonconvex polygons with constraining fences to $\tilde{\MO}\paren{\frac{|V|^2n^2}{\eps^2}}$ time. Ahadi et al. \cite{ahadi2014touring} also solve the \emph{touring objects problem} exactly in polynomial time, in which the $R_i$ are disjoint, nonconvex polygons and the objective is to visit the border of every region without entering the interior of any region.

For touring disjoint disks, a heuristic algorithm with experimental results was demonstrated by Chou \cite{exact-heuristic}. Touring disjoint \emph{unit} disks was given in a programming contest and was a source of inspiration for this paper; an $\BIGO{\frac{n}{\epsilon^2}}$-time algorithm was given \cite{opencup}. The main result that we show for disks is superior to both of these algorithms.

Polishchuk and Mitchell \cite{polishchuk2005touring} showed the case where regions are constrained to be intersections of balls or halfspaces in $d$ dimensions to be a special instance of a second-order cone program (SOCP), which runs in $\BIGO{d^3c^{1.5}n^2\log{\frac{1}{\epsilon}}}$ time using SOCP time bounds as a black box. Here, $c$ is the number of halfspace or ball constraints.

The touring regions problem can be seen as the offline version of chasing convex bodies, in which convex bodies are given sequentially, and the algorithm must choose which point to go to on a convex body before the next convex body is revealed. Bubeck et al. \cite{bubeck-et-al} and Sellke \cite{sellke} showed competitive ratios of $2^{\BIGO{d}}$ and $\BIGO{\sqrt{d \log{n}}}$, respectively. 

\subsection{Formal problem description}

\begin{definition}[Approximate touring regions problem]
Given $n$ sets of points (regions) $R_1, R_2, \dots, R_n$ each a subset of $\R^d$, a starting point $p_0$, and an ending point $p_{n+1}$,\footnote{For convenience, some of our results define the degenerate regions $R_0\triangleq \{p_0\}$ and $R_{n+1}\triangleq \{p_{n+1}\}$. } define the function $D\colon (\R^d)^n \to \R$ as $D(p_1, p_2, \dots, p_n) \triangleq \sum_{i=0}^{n} \vmag{p_i-p_{i+1}}_2$.

\begin{sloppypar}
Let $\mathcal A \triangleq \{(p_1, p_2, \dots, p_n) \mid \forall i, p_i \in R_i\}\subseteq (\R^d)^n$. Find a tuple of points (tour) $(p_1',p_2',\dots,p_n')\in \mathcal A$ such that $D(p_1', p_2', \dots, p_n') \leq (1+\eps)\min_{x \in \mathcal A} D(x)$.
\end{sloppypar}
\end{definition}

% We use the real RAM model of computation, defined in [CITE]. Real RAM supports operations $+, -, \cdot, /, \sqrt{}$ with real numbers and casting of bit numbers into real numbers.

We consider three main types of regions: unions of convex bodies, convex fat bodies with constant bounded fatness, and balls.

\begin{definition}[Unions of convex bodies]
\label{definition:union}
We call a region $R \subset \R^d$ a \textbf{union of $k$ convex bodies} if $R = C_1 \cup C_2 \cup \dots \cup C_{k}$ and each $C_i$ is convex and compact. The $C_i$ are allowed to intersect.
\end{definition}

We primarily restrict to the case where $k \leq \BIGO{1}$. 

\begin{definition}[Bounded fatness]
\label{definition:fat}
We say that a convex region $R \subset \R^d$ is \textbf{fat} if there exist balls $h, H$ with radii $0<r_h \le r_H$, respectively, that satisfy $h \subseteq R \subseteq H \subset \R^d$ and $\frac{r_H}{r_h} =\BIGO{1}$. 
\end{definition}

Fat objects have been previously considered in a variety of computational geometry settings \cite{KATZ1997299,10.1016/0925-7721(93)90018-2,Overmars1996RangeSA,OVERMARS1992261}.

One element of the problem that has not yet been determined is how we represent the sets of points $R_1, R_2, \dots, R_n$; this depends on what we restrict the regions to be:
\begin{itemize}
    \item \textbf{Unions of convex bodies:} We are given each region in the form $R_i = C_{i, 1} \cup C_{i, 2} \cup \dots \cup C_{i, k_i}$. Each of the convex bodies $C_{i, j}$ may be accessed via a \emph{closest point oracle.} This oracle allows us to call the function $\closest_{i,j}(p)$ on some point $p$, which returns the point $p' \in C_{i, j}$ such that $\vmag{p-p'}$ is minimized in $\BIGO{1}$ time (note that $p'$ is unique due to convexity).
    
    \item \textbf{Convex fat bodies:} We have access to each of the convex bodies $R_i$ via a closest point oracle. Additionally, for each region, we are given the radius $r_h$ of the inscribed ball (as described in \cref{definition:fat}), and a constant upper bound on the quantity $\frac{r_H}{r_h}$ over all regions. 
    
    \item \textbf{Balls:} For each ball in the input we are given its center  $c \in \R^d$ and its radius $r \in \R_{>0}$.
\end{itemize}

We consider the $2$-dimensional and general $d$-dimensional cases separately. In the $d$-dimensional case, we assume $d$ is a constant (for example, we say $2^d = \BIGO{1}$). We also consider the possibly non-disjoint versus disjoint cases separately, where the latter is defined by the restriction $R_i \cap R_j = \emptyset$ for all $0\le i<j\le n+1$. 

\paragraph*{Motivation for our model}
When considering general convex bodies,  it is natural to augment the model of computation with oracle access to the bodies, including membership, separation, and optimization oracles \cite{goodman}. In fact, when solving the touring regions problem for general convex bodies, a closest point oracle is necessary even for the case of a single region, where the starting point is the same as the ending point and the optimal solution must visit the closest point in the region to the starting point. Closest point oracles can be constructed trivially when the bodies are constant sized polytopes or balls. Closest point oracles have been used in the field of convex optimization \cite{Combettes_Trussell_1990, doi:10.1137/S0036144593251710}.

Our representations for unions of convex bodies, convex fat bodies, and balls, have the nice structure that each ``contains'' the next: we can trivially construct a closest point oracle for balls. Additionally, a ball is a specific type of convex fat body, which in turn is a specific type of convex body. We justify considering convex fat bodies as they are in some sense ``between'' balls and general convex bodies: they obey some of the packing constraints of balls. Considering unions of convex bodies allows us to represent a variety of non-convex and disconnected regions.

\subsection{Summary of results}

Our results and relevant previous results are summarized in \cref{table:results-2d} and \cref{table:results-dd}. We obtain a bound of $\BIGO{n^{2d-1}(\log\log n+\frac{1}{\eps^{2d-2}})}$ time for the most general case of touring unions of convex bodies in $d$ dimensions, where each region is a union of $\BIGO{1}$ convex bodies. This significantly improves to $\BIGO{\frac{n}{\eps^{2d-2}}}$ time if we restrict the regions to be disjoint convex fat bodies. Notice that this latter bound is linear in $n$; in fact, we show that \emph{any} FPTAS for touring convex fat bodies can be transformed into one that is linear in $n$ (\cref{lemma:grouping}). If the regions are further restricted to be balls, we can apply our new technique of placing points nonuniformly, and the time complexity improves to $\BIGO{\frac{n}{\eps^{d-1}}\log^2\frac{1}{\eps}+\frac{1}{\eps^{2d-2}}}$, which roughly halves the exponent of $\frac{1}{\epsilon}$ compared to the convex fat bodies algorithm while retaining an additive $\frac{1}{\epsilon^{2d-2}}$ term. 

Our 2D-specific optimizations allow us to obtain superior time bounds compared to if we substituted $d=2$ into our general dimension algorithms. In most cases, we save a factor of $\frac{1}{\epsilon}$. Notably, for convex fat bodies, we obtain an algorithm with linear time dependence on both $n$ and $\frac{1}{\epsilon}$. For our main result of touring disjoint disks, we combine our optimizations for convex fat bodies and balls with 2D-specific optimizations.

\begin{restatable*}{theorem}{nonIsectDisksTwoD}\label{thm:non-isect-disks}

There is an $\BIGO{\frac{n}{\sqrt{\eps}}\log\frac{1}{\eps}+ \frac{1}{\eps}}$-time algorithm for touring disjoint disks.

\end{restatable*}

With a new \emph{polygonal approximation} technique, we use the result of \cite{starburst-improve} for touring polygons as a black box to obtain algorithms with a square root dependence on $\frac{1}{\epsilon}$, most notably an $\BIGO{\frac{n^{3.5}}{\sqrt{\epsilon}}}$-time algorithm for touring 2D convex bodies and an $\BIGO{\frac{n^2}{\sqrt{\epsilon}}}$-time algorithm for touring 2D disjoint convex fat bodies.

Approximation algorithms for touring polygons in 2D have been well-studied. As mentioned in the introduction, Tan and Jiang \cite{starburst-improve} improved on Dror et al.'s \cite{dror2003} exact solution for convex polygons, while Mozafari and Zarei \cite{Mozafari2012TouringPA} approximated unions of nonconvex polygons, which we include in \cref{table:results-dd} for reference. One of our 2D-specific techniques can be used to improve the result of  \cite{Mozafari2012TouringPA} from $\tilde\MO\paren{\frac{|V|^2n^2}{\eps^2}}$ to $\BIGO{|V|n\log |V|\log \log n+\frac{|V|n}{\eps} \log \frac{|V|}{\eps}+|V|^2\alpha(|V|)}$ time, where $\alpha$ is the inverse Ackermann function, obtaining a strictly better running time for the problem of touring possibly non-disjoint unions of polygons in $2$ dimensions.

The $\BIGO{c^{1.5}n^{2}\log{\frac{1}{\epsilon}}}$-time result for touring $d$ dimensional convex bodies given by \cite{polishchuk2005touring}, where each body is an intersection of balls and half spaces (with a total of $c$ constraints) can be applied specifically to balls to yield an $\BIGO{n^{3.5}\log{\frac{1}{\epsilon}}}$-time algorithm. Our algorithms for touring disjoint disks and balls all take time linear in $n$ and are thus superior when $\epsilon$ is not too small.

% Our algorithms for touring disjoint disks and balls are superior in the range where $\frac{1}{\epsilon} = o(n^{3.5})$ and $\frac{1}{\epsilon} = o\paren{n^{\frac{3.5}{2d-2}}}$, for the $2$ dimensional and $d \geq 3$ dimensional cases, respectively.

\begin{table}[h!]
\centering
\begin{tabular}{||c p{3cm} c p{2.8cm}||} 
 \hline
 Representation & Runtime & Intersecting? & Source \\
 \hline\hline
 Polygon Unions & $\tilde\MO\paren{\frac{|V|^2n^2}{\eps^2}}$ & Yes & Touring Multiple-polygons \cite{Mozafari2012TouringPA} \\
 \hline
 Convex Polygons (Exact) & $\BIGO{|V|n}$, $\BIGO{|V|n^2}$ & No, Yes & Touring Polygons \cite{dror2003}, \cite{starburst-improve} \\
 \hline
 Convex Unions (Oracle Access) & $\BIGO{n^2\paren{\log \log n+\frac{1}{\eps}}}$, $\BIGO{n^3\paren{\log \log n+\frac{1}{\eps}+\frac{\log 1/\eps}{n\eps}}}$ & No, Yes & \textbf{\cref{thm:2d-unions}} \\
 \hline
 Polygon Unions & $\BIGO{\frac{|V|n}{\eps} \log \frac{|V|}{\eps}+\dots}$ & Yes & \textbf{\cref{thm:nonconvex-polygons-2d}} \\
 \hline
 Convex (Oracle Access) & $\BIGO{\frac{n^{2.5}}{\sqrt{\eps}}}$, $\BIGO{\frac{n^{3.5}}{\sqrt{\eps}}}$ & No, Yes & \textbf{\cref{thm:polygonal-approx}} \\
 \hline
 Convex Fat (Oracle Access) & $\BIGO{\frac{n}{\eps}}$, $\BIGO{\frac{n^2}{\sqrt{\eps}}}$ & No &  \textbf{Theorems \ref{thm:fat-non-isect-2d-1}, \ref{thm:fat-non-isect-2d-2}} \\
 \hline
 Disks & $\BIGO{\frac{n}{\sqrt{\eps}}\log\frac{1}{\eps}+ \frac{1}{\eps}}$ & No & \textbf{\cref{thm:non-isect-disks}} \\
 \hline
\end{tabular}
\caption{Previous and new bounds on touring $n$ regions in two dimensions up to multiplicative error $1+\eps$, where $\eps\le \BIGO{1}$. For polygons, $|V|$ is the total number of vertices over all polygons. For unions of convex bodies, we assume that every region is composed of $\BIGO{1}$ bodies.} 
\label{table:results-2d}
\end{table}

\begin{table}[h!]
\centering
\begin{tabular}{||p{4cm} c c c||} 
 \hline
 Representation & Runtime & Intersecting? & Source \\
 \hline\hline
 Convex Bodies, each an intersection of balls or halfspaces & $\BIGO{c^{1.5}n^2\log \frac{1}{\eps}}$ & Yes & SOCP \cite{polishchuk2005touring} \\
 \hline
 Convex Unions (Oracle Access) & $\BIGO{n^{2d-1}(\log\log n+\frac{1}{\eps^{2d-2}})}$ & Yes & \textbf{\cref{thm:d-dim}} \\
 \hline
 Convex Fat (Oracle Access) & $\BIGO{\frac{n}{\eps^{2d-2}}}$ & No & \textbf{\cref{thm:fat-non-isect-d}} \\
 \hline
 Balls & $\BIGO{\frac{n}{\eps^{d-1}}\log^2\frac{1}{\eps}+\frac{1}{\eps^{2d-2}}}$ & No & \textbf{\cref{thm:non-isect-balls}} \\
 \hline
\end{tabular}
\caption{Previous and new bounds on touring $n$ regions in $d\ge 2$ dimensions up to multiplicative error $1+\eps$, where $\eps\le \BIGO{1}$. Note that $d$ is treated as a constant. For polyhedra, $c$ is the total number of constraints. For unions of convex bodies, we assume that every region is composed of $\BIGO{1}$ bodies.}
\label{table:results-dd}
\end{table}

\subsection{Organization of the paper}

We start in \cref{section:convex-bodies} by considering unions of general convex bodies, using the closest point projection, pseudo-approximation, and 2D-specific optimizations. We then use the ideas of packing and grouping to obtain algorithms for convex fat bodies in \cref{section:fat-bodies}. Finally, we optimize specifically for balls in \cref{section:balls} by placing points non-uniformly.

\subsection{Summary of techniques}\label{sec:summary-techniques}

Here, we introduce the techniques mentioned in the previous subsection.

\paragraph*{Placing points uniformly (\cref{section:convex-bodies})}

A general idea that we use in our approximation algorithms is to approximate a convex body well using a set of points on its boundary. For previous results involving polygons or polyhedra \cite{asano2004pseudo,Mozafari2012TouringPA}, this step of the process was trivial, as points were equally spaced along edges. In order to generalize to convex bodies in arbitrary dimensions, we equally space points on boundaries using the closest point projection oracle with a bounding hypercube (\cref{lemma:equal-spacing-points}). 

After discretizing each body into a set of points, we can solve the problem in polynomial time using dynamic programming (DP): for each point, we find and store the optimal path ending at it by considering transitions from all points on the previous region.

%  However, it is not immediately clear how to approximate the boundary well. To accomplish this, consider some convex body $C$ and a hypercube $\mathcal H$ that contains $C$. We can evenly space points on the boundary of a hypercube along each of its faces. Then, we map each of these points onto the boundary of $C$ using closest point oracle queries.
% Two useful properties of the closest point projection include:
% \begin{enumerate}
%     \item Distances contract; two points originally on the hypercube of distance $D$ away from each other will be distance $\leq D$ apart after the projection (\cref{lemma:project-contract}).
%     \item The projection is surjective; every point $p'$ on the boundary of $C$ has some corresponding point $p$ on the hypercube such that the closest point on $C$ to $p$ is $p'$.
% \end{enumerate}
% The properties of closest point projection have been well studied in mathematics \cite{nearest-points}.

\paragraph*{Pseudo-Approximation (\cref{section:convex-bodies})}

Let $OPT$ be the optimal path length for touring regions. Consider some guess of the optimal path length for touring convex bodies $L_{APPROX} \geq OPT$, and then consider constructing a hypercube of side length $2L_{APPROX}$ centered at the starting point. We then equally space points on the boundary of the portions of the convex bodies which are inside the hypercube, and solve the problem assuming that the optimal path must travel through these points, which adds some additive error proportional to $L_{APPROX}$ to the optimal path length. This is called a \emph{pseudo-approximation} because the error bound only holds if $L_{APPROX} \geq OPT$, and if $L_{APPROX}$ is much bigger than $OPT$, the additive error is very large.

The idea for using this pseudo-approximation to compute an actual approximation is to start with some naive large approximation of the optimal path length, and then continuously guess smaller values of the optimal path length and run the pseudo-approximation as a subroutine to generate better approximations, eventually finding a constant approximation in $\log{\log{n}}$ iterations. Once a constant approximation is found, the pseudo-approximation algorithm becomes an actual approximation algorithm, and is then used to find a $(1+\epsilon)$-approximation. This method was used previously by \cite{asano2004pseudo} and \cite{Mozafari2012TouringPA}. An exposition about pseudo-approximation can be found in \cite{asano2004pseudo}.

\paragraph*{2D-specific optimizations (\cref{section:convex-bodies})}

Previous approximation algorithms for related problems discretize the boundary of each convex region using $\BIGO{\frac{1}{\epsilon}}$ points. We present a new approach to approximate each boundary using a convex polygon with $\BIGO{\frac{1}{\sqrt{\epsilon}}}$ vertices (\cref{lemma:half-eps-intersect}). This allows us to use previous exact algorithms for touring convex polygons as black boxes. 

A separate approach is to use additively weighted Voronoi diagrams (\cref{lemma:additive-voronoi}) to optimize dynamic programming (DP) transitions from quadratic to near-linear time. When we additionally assume the input shapes are disjoint, we use properties of Monge matrices to optimize the transitions to expected linear time (\cref{lemma:additive-linear}).

\paragraph*{Packing and grouping (\cref{section:fat-bodies})}
While our general algorithm for unions of convex regions has runtime $\tilde \MO \left (\frac{n^{2d-1}}{\eps^{2d-2}} \right )$, we are able to improve this to $\BIGO{\frac{n}{\eps^{2d-2}}}$ time for convex fat bodies. The key ideas behind this improvement are \emph{packing} and \emph{grouping}.

We use a simple \emph{packing} argument to show that the path length for visiting $n$ disjoint convex fat bodies with radius $r$ must have length at least $\Omega(r \cdot n)$ for sufficiently large $n$ (\cref{lemma:packing-lb}). This was used by \cite{opencup} for the case of unit disks. However, it is not immediately clear how to use this observation to obtain improved time bounds when convex fat regions are not all restricted to be the same size.

The idea of \emph{grouping} is to split the sequence of regions into smaller contiguous subsequences of regions (groups). In each group, we find the minimum-sized region, called a \emph{representative} region, which allows us to break up the global path into smaller subpaths between consecutive representatives. The earlier packing argument now becomes relevant here, as we can show a lower bound on the total length of the optimal path in terms of the sizes of the representatives.

\paragraph*{Placing points non-uniformly (\cref{section:balls})}

Previous approximation methods rely on discretizing the surfaces of bodies into evenly spaced points. For balls, we use the intuition that the portion of the optimal path from one ball to the next is ``long'' if the optimal path does not visit the parts of the surfaces that are closest together. This allows us to place points at a lower density on most of the surface area of each ball, leading to improved time bounds. We use this technique in conjunction with packing and grouping. For disks, we additionally apply the previously mentioned 2D-specific optimizations.

\section{Convex bodies}
\label{section:convex-bodies}

First, we consider the most general case of convex bodies (or unions of convex bodies), as variations of these techniques also apply to later results. We split the discussion into the general $d$-dimensional case and the $2$-dimensional case. Omitted proofs for this section may be found in \cref{sec:omitted-proofs-convex-bodies}.

\subsection{General dimensions}

For the theorems in this section, we let $|R_i|$ denote the number of convex bodies that together union to region $R_i$. Recall that the convex bodies which make up region $R_i$ may overlap.

The first main ingredient is the closest point projection, which allows us to equally space points on each convex body. The proof is deferred to the appendix.

\begin{lemma}\label{lemma:project-contract}
For a convex region $C$, define $\text{closest}_C(p)\triangleq \text{argmin}_{c\in C}\vmag{c-p}$. For any two points $p_1$ and $p_2$, $\vmag{\text{closest}_C(p_1)-\text{closest}_C(p_2)}\le \vmag{p_1-p_2}$.
\end{lemma}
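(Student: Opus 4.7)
The statement is the classical nonexpansiveness of the metric projection onto a closed convex set in Euclidean space, so I will follow the standard variational argument. Let $q_i \triangleq \closest_C(p_i)$ for $i = 1, 2$. The plan is to first establish the first-order optimality characterization of the projection: for every $c \in C$ and every $i \in \{1,2\}$,
\[
\langle p_i - q_i, \, c - q_i \rangle \le 0.
\]
This can be derived by observing that for any $c \in C$ and $t \in [0,1]$, convexity of $C$ gives $q_i + t(c - q_i) \in C$, so the squared distance function $t \mapsto \vmag{p_i - q_i - t(c - q_i)}^2$ is minimized at $t = 0$; taking the derivative at $0^+$ yields the inequality. (This is where the convexity of $C$ is crucially used, and it is the only nontrivial ingredient.)

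Next, I will instantiate this inequality with $c = q_2$ in the $i = 1$ case and with $c = q_1$ in the $i = 2$ case, obtaining
\[
\langle p_1 - q_1, q_2 - q_1 \rangle \le 0, \qquad \langle p_2 - q_2, q_1 - q_2 \rangle \le 0.
\]
Adding these and regrouping yields
\[
\langle (p_1 - p_2) - (q_1 - q_2), \, -(q_1 - q_2) \rangle \le 0,
\]
which rearranges to
\[
\vmag{q_1 - q_2}^2 \le \langle p_1 - p_2, \, q_1 - q_2 \rangle.
\]

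Finally, I will apply Cauchy--Schwarz to the right-hand side to get $\vmag{q_1 - q_2}^2 \le \vmag{p_1 - p_2} \cdot \vmag{q_1 - q_2}$, and divide by $\vmag{q_1 - q_2}$ (handling the trivial case $q_1 = q_2$ separately) to conclude $\vmag{q_1 - q_2} \le \vmag{p_1 - p_2}$, as desired.

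The only step that requires genuine care is deriving the variational inequality, since it is the sole place where convexity of $C$ (and the fact that $q_i$ is a minimizer rather than just a stationary point) enters; everything after that is algebra and one application of Cauchy--Schwarz. I do not anticipate any real obstacle, as this is a textbook result.
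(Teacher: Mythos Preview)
Your proof is correct and follows essentially the same approach as the paper: both derive the variational inequality $\langle p_i - q_i, c - q_i\rangle \le 0$ from convexity (the paper states it more tersely, noting that otherwise some point on the segment $q_1q_2$ would be closer), instantiate it at $c=q_2$ and $c=q_1$ respectively, add to obtain $\vmag{q_1-q_2}^2 \le \langle p_1-p_2, q_1-q_2\rangle$, and finish with Cauchy--Schwarz.
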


For any closed set $X$, let $\partial X$ denote the boundary of $X$.

\begin{lemma}[Equal spacing via closest point projection]
\label{lemma:equal-spacing-points}
Given a convex body $C$ for which we have a closest point oracle and a hypercube $\mathcal H$ with side length $r$, we can construct a set $S\subset C$ of $\BIGO{\frac{1}{\eps^{d-1}}}$ points such that for all $p \in (\partial C) \cap \mathcal H$, there exists $p' \in S$ such that $\vmag{p-p'} \leq r \eps$.
\end{lemma}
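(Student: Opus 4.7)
The plan is to construct $S$ by laying down a uniform grid on $\partial \mathcal H$ with spacing on the order of $r\eps$ and projecting each grid point onto $C$ using the closest point oracle. Since $\partial \mathcal H$ consists of $2d$ faces, each a $(d-1)$-dimensional cube of side length $r$, such a grid contains $\BIGO{1/\eps^{d-1}}$ points, giving $|S| = \BIGO{1/\eps^{d-1}}$ as required. With the correct constant chosen for the grid spacing, every $q \in \partial \mathcal H$ has some grid point $g$ within Euclidean distance at most, say, $r\eps / 2$.

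Given this grid, the correctness of the construction reduces to the following claim: for every $p \in (\partial C) \cap \mathcal H$, there exists $q \in \partial \mathcal H$ with $\closest_C(q) = p$. Assuming the claim, pick the grid point $g$ nearest to such a $q$. Then by \cref{lemma:project-contract}, $\vmag{\closest_C(g) - p} = \vmag{\closest_C(g) - \closest_C(q)} \le \vmag{g - q} \le r\eps/2 \le r\eps$, so $\closest_C(g) \in S$ witnesses the lemma.

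The core step is therefore proving the claim, which is where I expect the main (mild) obstacle to lie. If $p \in \partial \mathcal H$, then since $p \in C$ we have $\closest_C(p) = p$, so just take $q = p$. Otherwise $p$ lies in the interior of $\mathcal H$, and I would invoke convexity of $C$ to obtain a supporting hyperplane at $p$ with outward unit normal $v$, so that $C$ lies in the halfspace $\{x : \langle x - p, v\rangle \le 0\}$. The ray $\{p + tv : t \ge 0\}$ then exits $\mathcal H$ at some $q \in \partial \mathcal H$. A short calculation using the supporting hyperplane shows that for every $x \in C$,
\[
\vmag{x - q}^2 = \vmag{x - p}^2 - 2t\langle x - p, v\rangle + t^2 \ge t^2 = \vmag{p - q}^2,
\]
so $p$ is indeed the point of $C$ closest to $q$.

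Putting the pieces together, I would first state and prove the existence of $q$ (the geometric lemma above), then define $S$ as the image of the uniform boundary grid under $\closest_C$, and finally combine the grid-covering bound with the $1$-Lipschitz property of $\closest_C$ to conclude. The only place where constants need tracking is tuning the grid spacing so that a $p \in \partial \mathcal H$ is within $r\eps$ of a grid point under the Euclidean norm; a spacing of $r\eps/\sqrt{d-1}$ (with $d$ constant) suffices and preserves the $\BIGO{1/\eps^{d-1}}$ cardinality bound.
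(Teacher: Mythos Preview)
Your proof is correct and follows essentially the same route as the paper: place a uniform grid on $\partial \mathcal H$, project onto $C$, and for each $p\in(\partial C)\cap\mathcal H$ use a supporting hyperplane at $p$ to exhibit a preimage $q\in\partial\mathcal H$ under $\closest_C$, then invoke the $1$-Lipschitz property of $\closest_C$ (\cref{lemma:project-contract}). Your explicit inequality verifying $\closest_C(q)=p$ and your attention to the grid-spacing constant are slightly more detailed than the paper's version, but the structure is identical.
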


\begin{proof}[Proof Sketch]
First, we prove the statement for $C = \mathcal H$. For this case, it suffices to equally space points on each face of an axis-aligned hypercube defined by $[0, r]^d$. For example, for the face defined by $x_d = 0$, we place points in a lattice at all coordinates $(x_1, x_2, \dots, x_{d-1}, x_d)$ that satisfy $x_d = 0$ and $x_i = k_i \cdot r\eps$ for all integers $k_i\in \left[0, \frac{1}{\eps}\right]$. For $C \neq \mathcal H$, equally space points on $\mathcal H$ as we stated to create a set $S_{\mathcal H}$. Then define $S\triangleq\{\closest_C(s) \mid s \in S_{\mathcal H}\}$. The proof that $S$ satisfies the conditions of the lemma is deferred to the appendix.
\end{proof}

Now, we introduce the concept of the pseudo-approximation, which takes in an accuracy parameter $\gamma$ and an estimate of the optimal path length $L_{APPROX} \geq OPT$ and reduces each region $R_i$ to a finite set of points $S_i \subset R_i$ such that the optimal tour for touring $S_i$ is also a tour for $R_i$, and has length at most $OPT + \gamma L_{APPROX}$. 

Note that when the regions are possibly non-disjoint, it is not true that the optimal path must visit each $\partial R_i$, so more care must be taken. In particular, we use the fact that the only time an optimal path does not visit the boundary of $R_i$ is when it visited the boundary of some region $\partial R_l$ for some $l < i$ and then remained on the interior of regions $R_{l+1}, R_{l+2}, \dots, R_i$, in which case the optimal path has moved $0$ distance when visiting these regions. This requires more effort to bound the error from the optimal length and makes the dynamic programming transitions more complex, but both algorithms achieve the same time bound when all $|R_i| \leq \mathcal O(1)$ (that is, when all regions are unions of a constant number of convex bodies). 
\begin{lemma}[$\gamma$ pseudo-approximation]\label{lemma:pseudo-approx}
\begin{sloppypar}
Given an estimate of the optimal path length $L_{APPROX}$ and $0 < \gamma \leq 1$, if $OPT \leq L_{APPROX}$, we can construct a valid solution with length at most $OPT + \gamma L_{APPROX}$. If all $R_i$ are disjoint, this construction takes $\BIGO{\paren{\frac{n}{\gamma}}^{2d-2}\sum_{i=1}^{n-1} |R_i||R_{i+1}|}$ time. When the $R_i$ are possibly non-disjoint, the runtime increases to $\BIGO{\paren{\frac{n}{\gamma}}^{2d-2}\max_j|R_{j}|\sum_{i=1}^{n} |R_i|}$.
\end{sloppypar}
\end{lemma}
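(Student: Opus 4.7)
The plan is to reduce the continuous problem to a finite discretized one via a bounding hypercube and \cref{lemma:equal-spacing-points}, then solve the discrete problem by dynamic programming. First I build a hypercube $\mathcal H$ of side length $2L_{APPROX}$ centered at $p_0$; since $OPT\le L_{APPROX}$, the entire optimal path lies inside $\mathcal H$. For each convex body $C_{i,j}$ comprising $R_i$, I apply \cref{lemma:equal-spacing-points} with hypercube $\mathcal H$ and precision $\eps'=\gamma/(4n)$, producing a discrete set on $\partial C_{i,j}\cap\mathcal H$ of size $\BIGO{(n/\gamma)^{d-1}}$. Taking $S_i$ to be the union over $j$ gives $|S_i|=\BIGO{(n/\gamma)^{d-1}|R_i|}$.

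For the disjoint case, any optimal witness $(p_1^*,\ldots,p_n^*)$ can be assumed to satisfy $p_i^*\in\partial R_i\cap\mathcal H$: if $p_i^*$ were interior to some $C_{i,j}$, disjointness of the $R_i$ forces $p_{i-1}^*\notin R_i$, so sliding $p_i^*$ toward $p_{i-1}^*$ preserves feasibility and does not increase length until $p_i^*$ reaches the boundary. Applying \cref{lemma:equal-spacing-points} to the $C_{i,j}$ containing $p_i^*$ yields $\hat p_i\in S_i$ with $\vmag{p_i^*-\hat p_i}\le 2L_{APPROX}\eps'$, and two triangle inequalities per hop give $D(\hat p_1,\ldots,\hat p_n)\le OPT+4nL_{APPROX}\eps'=OPT+\gamma L_{APPROX}$. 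The best tour through $S_1,\ldots,S_n$ is then computed by the standard DP $dp_i(p)=\min_{q\in S_{i-1}}\bigl(dp_{i-1}(q)+\vmag{p-q}\bigr)$, whose total cost is $\sum_{i}|S_{i}||S_{i+1}|=\BIGO{(n/\gamma)^{2d-2}\sum_{i}|R_i||R_{i+1}|}$.

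The main obstacle is the non-disjoint case, where an optimal path may cut through the interior of several consecutive regions and thus avoid being approximated by any boundary point in $S_i$ along that stretch. The structural fix is that whenever $p_i^*$ is interior to $R_i$, replacing $p_i^*$ by $p_{i-1}^*$ leaves $D$ unchanged (since then $p_{i-1}^*\in R_i$), so in canonical form the distinct positions among the $p_i^*$ form a chain of ``pivots'' each lying on some $\partial R_l\cap\mathcal H$, between which the path stays put. I would therefore allow long-range DP transitions of the form $dp_i(p)=\min_{l<i}\min_{q\in S_l\,:\,q\in R_{l+1}\cap\cdots\cap R_{i-1}}\bigl(dp_l(q)+\vmag{p-q}\bigr)$. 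Membership $q\in R_m$ is tested in $\BIGO{|R_m|}$ time via the closest-point oracle applied to each constituent body; for each $q\in S_l$, maintaining the longest prefix of regions $R_{l+1},\ldots,R_{L(q)}$ all containing $q$ lets each candidate transition be processed in amortized $\BIGO{\max_j|R_j|}$ additional time, and summing the transition costs yields the stated bound $\BIGO{(n/\gamma)^{2d-2}\max_j|R_j|\sum_i|R_i|}$. The additive error bound is preserved because the canonical optimal path still has at most $n$ actual pivot movements, each approximated with additive error $\BIGO{L_{APPROX}\eps'}$.
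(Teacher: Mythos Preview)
Your disjoint-case argument is fine and in fact slightly cleaner than the paper's: rather than an induction, you directly push each optimal $p_i^*$ onto $\partial R_i$ (using disjointness to guarantee $p_{i-1}^*\notin R_i$) and round. The paper instead proves a step-by-step inductive invariant, but both yield the same bound with the same discretization and DP.

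The non-disjoint case, however, has a real gap. Your ``canonical form'' correctly identifies pivots $q_{z_j}^*\in\partial R_{z_j}$ with $q_{z_j}^*\in R_{z_j+1}\cap\cdots\cap R_{z_{j+1}-1}$, but after you round $q_{z_j}^*$ to $\hat q_{z_j}\in S_{z_j}$ there is no reason $\hat q_{z_j}$ still lies in $R_{z_j+1},\dots,R_{z_{j+1}-1}$: the original pivot may sit arbitrarily close to the boundaries of those intermediate regions, so a perturbation of size $\Theta(L_{APPROX}\eps')$ can push it out. Then the rounded path is \emph{infeasible} for your DP, and your error bound does not transfer. The paper avoids this by an inductive construction: having fixed $p'_{z_t}\in S_{z_t}$, it defines $z_{t+1}$ as the first region \emph{not} containing $p'_{z_t}$, then re-optimizes the remaining tour from $p'_{z_t}$ before rounding the next pivot. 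This way the containment condition $p'_{z_t}\in R_{z_t+1}\cap\cdots\cap R_{z_{t+1}-1}$ holds \emph{by definition}, and the error is controlled step by step.

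A related issue is the runtime of your DP. As written, a point $q\in S_l$ may transition to every $S_i$ with $l<i\le successor(q)$, and in the worst case (heavily nested regions) this is $\Theta\bigl((\sum_i|S_i|)^2\bigr)$ transitions, not the claimed bound. The paper restricts each $q$ to transition only to $S_{successor(q)}$, which immediately gives $\sum_q|S_{successor(q)}|\le (n/\gamma)^{2d-2}\max_j|R_j|\sum_i|R_i|$; but proving that this restricted DP still captures a near-optimal path is exactly what forces the inductive argument above.
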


\begin{proof}[Proof Sketch]
We construct finite sets of points $S_i \subset R_i$ such that there exists a path of the desired length that tours regions $S_i$. 

Assume $OPT\le L_{APPROX}$ and consider a hypercube $\mathcal H$ centered at $p_0$ with side length $4 \cdot L_{APPROX}$. We define the construction as follows: For each of the convex bodies \\$C_{i, 1}, C_{i, 2}, C_{i, 3}, \dots, C_{i, |R_i|}$ that make up $R_i$, apply the construction given in \cref{lemma:equal-spacing-points} with $\epsilon \triangleq \frac{\gamma}{16n}, r \triangleq 4L_{APPROX}$, and set $S_i$ to be the union of all $|R_i|$ constructed sets of points. This uses $\BIGO{\frac{1}{\epsilon^{d-1}}} = \BIGO{\paren{\frac{n}{\gamma}}^{d-1}}$ points per convex body. 

Given the sets $S_i$, the path of length $OPT + \gamma L_{APPROX}$ can be computed directly using dynamic programming. When the $R_i$ are disjoint, transitions occur from points on $S_i$ to $S_{i+1}$. Transitions are slightly different for the possibly non-disjoint case since points in the set $S_i$ can transition to points on $S_j$ for $j > i+1$. The details of the dynamic programming and the proof that the length of the returned path is bounded above by $OPT+\gamma L_{APPROX}$ are deferred to the appendix.
\end{proof}

% The path of length $OPT + \gamma L_{APPROX}$ as referenced above can be computed directly using dynamic programming. When the $R_i$ are disjoint, transitions occur from points on $S_i$ to $S_{i+1}$. Transitions are slightly modified for the intersecting case, where points in the set $S_i$ can transition to points on $S_j$ for some $j > i$, but both algorithms achieve the same time bound when all $|R_i| \leq \BIGO{1}$ (which is when all regions are unions of a constant number of convex bodies).

% \begin{corollary}[Spacing via Estimation]\label{corollary:spacing-via-estimation}
% Consider $\alpha > \beta > 0$. For touring unions of possibly non-disjoint convex bodies, suppose we are given some estimate $L_{APPROX}$ of the optimal path length that satisfies $OPT \leq L_{APPROX} \leq (1+\alpha)OPT$. Then we can construct finite sets of points $S_i\subset R_i$ where $|S_i|\le \BIGO{|R_i|\paren{\frac{n(1+\alpha)}{\beta}}^{d-1}}$ such that there exists a sequence of points $p'_{z_i}$ as defined in \cref{lemma:pseudo-approx} satisfying $\sum_{i=0}^{l-1}\vmag{p'_{z_i}-p'_{z_{i+1}}}\le OPT(1+\beta)$.
% \end{corollary}

% \begin{proof}
% Follows from \cref{lemma:pseudo-approx} with $\gamma= \frac{\beta}{1+\alpha}$.
% \end{proof}

To convert our pseudo-approximation algorithm into an actual approximation, we start with an $\BIGO{n}$-approximation of the optimal path length; the construction is deferred to the appendix.

\begin{lemma}
\label{lemma:trivial-n-approx}
There is a trivial $(2n+1)$-approximation for touring general regions that can be computed in $\BIGO{n}$ time given a closest point oracle.
\end{lemma}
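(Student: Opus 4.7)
The plan is to construct the tour by choosing, for each region, the point closest to the starting point $p_0$. Specifically, for $i \in \{1, 2, \dots, n\}$, set $p_i \triangleq \closest_{R_i}(p_0)$, where $\closest_{R_i}(p_0)$ is computed by taking the minimum-distance point over each of the $\BIGO{1}$ convex bodies that union to form $R_i$. Since each body supplies an $\BIGO{1}$-time closest point oracle, the total construction time is $\BIGO{n}$.

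To bound the length, I would first show that for every $i$, the quantity $\vmag{p_0 - p_i}$ is at most $OPT$. The optimal tour visits some point $q_i \in R_i$, and the straight-line distance $\vmag{p_0 - q_i}$ is bounded by the length of the portion of the optimal path from $p_0$ to $q_i$, which is itself at most $OPT$. Because $p_i$ is the point of $R_i$ minimizing distance to $p_0$, we conclude $\vmag{p_0 - p_i} \le \vmag{p_0 - q_i} \le OPT$. The same reasoning applied to the endpoint gives $\vmag{p_0 - p_{n+1}} \le OPT$.

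Next I would apply the triangle inequality through $p_0$ on each internal segment: for $i \in \{1, \dots, n\}$,
\[
\vmag{p_i - p_{i+1}} \;\le\; \vmag{p_i - p_0} + \vmag{p_0 - p_{i+1}} \;\le\; 2 \cdot OPT.
\]
The initial segment is bounded directly by $\vmag{p_0 - p_1} \le OPT$. Summing, the total length of the constructed tour is at most $OPT + n \cdot 2 \cdot OPT = (2n+1) \cdot OPT$, as desired.

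There is no real obstacle here; the only care needed is to tighten the count to $2n+1$ rather than $2n+2$ by observing that the first segment avoids the triangle-inequality detour. The argument relies only on the closest point oracle and the fact that straight-line distance lower bounds path length, so it goes through for general regions, not just convex ones.
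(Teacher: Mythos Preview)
Your proof is correct and essentially identical to the paper's: both project $p_0$ onto each region, use $\vmag{p_0-p_i}\le OPT$, and then bound the tour length via the triangle inequality through $p_0$, saving one $OPT$ on the first segment to get $(2n+1)\cdot OPT$.
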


Now, our goal is to construct a constant approximation starting from our trivial approximation. The idea, first presented in \cite{asano2004pseudo} for the problem of 3D shortest path with obstacles, is to run the pseudo-approximation $\BIGO{\log \log n}$ times with $\gamma=1$. The resulting runtime is much faster than if one were to naively apply \cref{lemma:pseudo-approx} with $\gamma = \frac{\epsilon}{n}$. The proof is deferred to the appendix.

\begin{lemma}[Constant approximation via pseudo-approximation]\label{lemma:non-intersect-d-dim-constant-approx}
There is an \\
$\BIGO{n^{2d-2}\log{\log{n}} \cdot (\sum_{i=1}^{n-1} |R_i| |R_{i+1}|)}$-time algorithm that obtains a $4$-approximation for touring disjoint unions of convex bodies in $d$ dimensions. If the unions can intersect, the runtime increases to $\BIGO{n^{2d-2}\log{\log{n}} \cdot (\sum_{i=1}^{n} |R_i| \max_{j} |R_j|)}$.
\end{lemma}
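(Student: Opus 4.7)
The plan is to bootstrap from the trivial $(2n+1)$-approximation of \cref{lemma:trivial-n-approx} by invoking the $\gamma = 1$ pseudo-approximation of \cref{lemma:pseudo-approx} $\BIGO{\log\log n}$ times. Throughout the algorithm, I maintain a feasible tour of length $U_k$ (an upper bound on $OPT$) together with a lower bound $\ell_k \le OPT$, initialized from the trivial approximation so that $U_0/\ell_0 \le 2n+1$. At iteration $k$, I set the candidate estimate to the geometric mean $L_k \triangleq \sqrt{\ell_k U_k}$ and run the pseudo-approximation with $L_{APPROX} = L_k$, obtaining a tour of length $P_k$. A case split on the test $P_k \le 2L_k$ then either refines the upper bound via $U_{k+1} \gets P_k$, or refines the lower bound via $\ell_{k+1} \gets L_k$: the point is that if $L_k \ge OPT$ the pseudo-approximation guarantees $P_k \le OPT + L_k \le 2L_k$, so a violation of this inequality proves $L_k < OPT$.

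Writing $r_k \triangleq U_k/\ell_k$, both branches of the case split give $r_{k+1} \le 2\sqrt{r_k}$: in the first branch $U_{k+1}/\ell_{k+1} \le 2L_k/\ell_k = 2\sqrt{r_k}$, and in the second $U_{k+1}/\ell_{k+1} = U_k/L_k = \sqrt{r_k}$. Substituting $a_k \triangleq \log_2 r_k$ linearizes this to $a_{k+1} - 2 \le (a_k - 2)/2$, so $a_k - 2 \le (a_0-2)/2^k$. Starting from $a_0 = \BIGO{\log n}$, after $\BIGO{\log\log n}$ iterations $r_k$ is driven to a constant arbitrarily close to the fixed point $r = 4$; with a careful choice of constants in the termination condition, the resulting feasible tour $U_k$ is a $4$-approximation of $OPT$.

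Each iteration invokes \cref{lemma:pseudo-approx} with $\gamma = 1$, costing $\BIGO{n^{2d-2}\sum_{i=1}^{n-1}|R_i||R_{i+1}|}$ time in the disjoint case and $\BIGO{n^{2d-2}\max_j|R_j|\sum_i |R_i|}$ time in the possibly intersecting case. Multiplying by the $\BIGO{\log\log n}$ iteration count gives exactly the two runtimes claimed in the lemma. The main conceptual obstacle is justifying the case split without direct access to $OPT$: the crucial observation is that $P_k$ is always the length of an explicit feasible tour (hence always a valid new upper bound), while the inequality $P_k \le 2L_k$ can only be violated when $L_k < OPT$, so the single comparison $P_k \lessgtr 2L_k$ cleanly maintains the invariant $\ell_k \le OPT \le U_k$ and gives the $\sqrt{\cdot}$-type contraction needed for doubly-exponential convergence.
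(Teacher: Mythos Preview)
Your proposal is correct and essentially identical to the paper's proof: both maintain an interval $[\ell_k,U_k]$ (the paper writes this as $[L_t/B_t,L_t]$) containing $OPT$, guess the geometric midpoint, run the $\gamma=1$ pseudo-approximation, and use the same test $P_k\le 2L_k$ to decide which endpoint to update, yielding the recursion $r_{k+1}\le 2\sqrt{r_k}$ and hence $\BIGO{\log\log n}$ iterations. The only cosmetic difference is that the paper updates the upper bound to $2L_k$ rather than to $P_k$ in the first branch, and both arguments share the same slight looseness that the recursion converges to but never strictly attains the fixed point $4$ (which is immaterial since any constant suffices for the downstream application).
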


Finally, we combine all of the lemmas of the section to give the main results.

\begin{theorem}\label{thm:d-dim}
There is an $\BIGO{n^{2d-2}(\log{\log{n}}+\frac{1}{\eps^{2d-2}}) \cdot (\sum_{i=1}^{n-1} |R_i| |R_{i+1}|)}$-time algorithm for touring disjoint unions of convex bodies in $d$ dimensions. When the bodies are allowed to intersect, the runtime becomes $\MO\Big (n^{2d-2}(\log{\log{n}}+\frac{1}{\eps^{2d-2}}) \cdot (\sum_{i=1}^{n} |R_i| \max_j |R_j|)\Big)$.
\end{theorem}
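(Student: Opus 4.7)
The plan is to combine the constant approximation of \cref{lemma:non-intersect-d-dim-constant-approx} with the pseudo-approximation of \cref{lemma:pseudo-approx} in a two-stage manner. In the first stage, I would invoke \cref{lemma:non-intersect-d-dim-constant-approx} to compute in $\BIGO{n^{2d-2}\log\log n \cdot (\sum_{i=1}^{n-1} |R_i||R_{i+1}|)}$ time (or the analogous expression in the non-disjoint case) a tour whose length $L_{APPROX}$ satisfies $OPT \le L_{APPROX} \le 4 \cdot OPT$. In the second stage, I would feed this $L_{APPROX}$ into \cref{lemma:pseudo-approx} with accuracy parameter $\gamma \triangleq \eps/4$.

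Since $L_{APPROX} \ge OPT$, the hypothesis of \cref{lemma:pseudo-approx} is satisfied, and the returned tour has length at most
\[
OPT + \gamma L_{APPROX} \;\le\; OPT + \frac{\eps}{4}\cdot 4 \cdot OPT \;=\; (1+\eps)\cdot OPT,
\]
which is exactly what we want. Plugging $\gamma = \eps/4$ into the runtime of \cref{lemma:pseudo-approx} gives, in the disjoint case,
\[
\BIGO{\paren{\tfrac{n}{\gamma}}^{2d-2}\sum_{i=1}^{n-1}|R_i||R_{i+1}|} \;=\; \BIGO{\frac{n^{2d-2}}{\eps^{2d-2}}\sum_{i=1}^{n-1}|R_i||R_{i+1}|},
\]
and analogously $\BIGO{\tfrac{n^{2d-2}}{\eps^{2d-2}}\sum_{i=1}^{n}|R_i|\max_j|R_j|}$ in the non-disjoint case.

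Adding the runtimes from the two stages yields the claimed bounds, since the constant-approximation stage contributes the $\log\log n$ term and the pseudo-approximation stage contributes the $1/\eps^{2d-2}$ term, both scaled by the same sum over the $|R_i|$. The argument is essentially a direct composition of previously established lemmas, so there is no substantial technical obstacle; the only point requiring mild care is ensuring that the correct variant of \cref{lemma:pseudo-approx} and \cref{lemma:non-intersect-d-dim-constant-approx} is invoked (disjoint versus non-disjoint) so that the dependencies on $\sum |R_i||R_{i+1}|$ versus $\sum |R_i| \max_j |R_j|$ match the statement of the theorem.
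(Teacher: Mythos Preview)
Your proposal is correct and mirrors the paper's own proof essentially verbatim: first invoke \cref{lemma:non-intersect-d-dim-constant-approx} to obtain a $4$-approximation, then run \cref{lemma:pseudo-approx} with $\gamma=\eps/4$ to convert it into a $(1+\eps)$-approximation, and sum the two runtimes. The paper's argument is in fact terser than yours; your added justification that $OPT+\gamma L_{APPROX}\le (1+\eps)OPT$ is exactly the computation the paper leaves implicit.
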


\begin{proof}
\begin{sloppypar}
For the disjoint case, apply \cref{lemma:non-intersect-d-dim-constant-approx} to get a constant approximation in $\MO\big (n^{2d-2}\log{\log{n}} \cdot (\sum_{i=1}^{n-1} |R_i| |R_{i+1}|)\big)$ time, then use \cref{lemma:pseudo-approx} with $L_{APPROX}$ as our constant approximation and $\gamma = \frac{\epsilon}{4}$ to obtain a $(1+\eps)$-approximation in $\BIGO{n^{2d-2}(\frac{1}{\eps^{2d-2}}) \cdot (\sum_{i=1}^{n-1} |R_i| |R_{i+1}|)}$ additional time. The possibly non-disjoint case is similar.\qedhere
\end{sloppypar}
\end{proof}

\subsection{Two dimensions}

When the unions of convex bodies are constrained to lie in 2D, there are two main avenues for further improvements: first, by speeding up the dynamic programming (DP) transitions when all regions have been discretized into point sets, and second, by approximating convex bodies by convex polygons instead of sets of points. In this section, ``union of convex bodies'' refers to a union of $\BIGO{1}$ convex bodies per region.

\subsubsection{Dynamic programming speedup}

The first speedup comes as a result of observing that the DP in \cref{lemma:pseudo-approx} is similar to closest point queries, which can be computed efficiently.

\begin{lemma}[Additive Voronoi]\label{lemma:additive-voronoi}
\begin{sloppypar}
Given two lists of points $B=[b_1,\dots,b_m]$ and $A=[a_1,a_2,\dots,a_n]$ and a real weight $[w_1,\dots,w_n]$ for each point in $A$, we can compute $\min_{1\le j\le n}\paren{w_j+\vmag{a_j-b_i}}$ for each $i\in [1,m]$ in $\BIGO{(m+n)\log n}$ time.
\end{sloppypar}
\end{lemma}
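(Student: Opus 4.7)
The plan is to reduce the problem to a planar point-location query in an additively weighted Voronoi diagram (also known as the Apollonius diagram) of the points in $A$. Define the additive distance from a query point $p$ to site $a_j$ as $d_j(p)\triangleq w_j+\vmag{a_j-p}$, and let $V_j\triangleq\{p\in\R^2 : d_j(p)\le d_k(p)\text{ for all } k\}$ be the associated Voronoi cell. Then the quantity we want for each $b_i$ is precisely $d_{j^*}(b_i)$, where $j^*$ is any index with $b_i\in V_{j^*}$.

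First I would build the additively weighted Voronoi diagram of $A$. This is a standard planar subdivision of total combinatorial complexity $\BIGO{n}$, and it can be constructed in $\BIGO{n\log n}$ time (for instance by a sweep-line method in the style of Fortune, suitably generalized to handle additive weights). Next I would feed this subdivision into a standard planar point-location data structure (e.g.\ Kirkpatrick's hierarchy, or the Sarnak--Tarjan persistent-search-tree method), built in $\BIGO{n\log n}$ preprocessing time, that answers each point-location query in $\BIGO{\log n}$ time.

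Finally, for each $b_i$ I perform one point-location query to recover the index $j^*$ with $b_i\in V_{j^*}$, and then output $w_{j^*}+\vmag{a_{j^*}-b_i}$ in $\BIGO{1}$ time. The total cost is $\BIGO{n\log n}$ for diagram construction plus $\BIGO{n\log n}$ for point-location preprocessing plus $\BIGO{m\log n}$ for the $m$ queries, which is $\BIGO{(m+n)\log n}$ as required.

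The only real subtlety is to invoke the right black boxes: one must cite (or sketch) the $\BIGO{n}$ complexity and $\BIGO{n\log n}$ construction of the additively weighted Voronoi diagram in the plane, since degenerate cases (coincident sites, cells that are empty when one weight dominates another, unbounded or disconnected cells) require more care than in the unweighted case. Once that tool is in place, however, the rest of the argument is a textbook reduction to planar point location.
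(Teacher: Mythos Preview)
Your proposal is correct and follows essentially the same approach as the paper: construct the additively weighted Voronoi diagram of $A$ in $\BIGO{n\log n}$ time via a Fortune-style sweep, then answer each query $b_i$ by a $\BIGO{\log n}$ point-location lookup. The paper's proof is terser (it just cites Fortune's algorithm and says each query is a $\BIGO{\log n}$ search), but the substance is identical.
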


\begin{proof}
This problem is equivalent to constructing and querying a Voronoi diagram for additively weighted point sets. Constructing the diagram can be done in $\BIGO{n\log n}$ time by a variant of Fortune's algorithm \cite{fortune1987sweepline}. For each $b_i$ we can search the Voronoi diagram for the $a_j$ corresponding to the minimum in $\BIGO{\log n}$ time.
\end{proof}

\begin{corollary}\label{corollary:additive-voronoi-app}
The Touring Regions Problem in 2D, where all $R_i$ are sets of finitely many points $S_i$, can be solved exactly in $\BIGO{\sum_{i=1}^{n}|S_i|\log{|S_i|}}$ time. 

% When the $R_i$ are allowed to intersect, the runtime is $\BIGO{\paren{\sum_{i=1}^{n}|S_i|}\log\paren{\sum_{i=1}^{n}|S_i|}}$.
\end{corollary}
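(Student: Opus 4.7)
The plan is to solve the problem by standard layered dynamic programming, accelerated by the additively weighted Voronoi lemma. Define $f_i(p)$ for $p \in S_i$ to be the length of the shortest tour from $p_0$ that visits one point of $S_1, S_2, \ldots, S_{i-1}$ in order and ends at $p$. The base case is $f_1(p) = \vmag{p_0 - p}$, the recurrence is $f_{i+1}(q) = \min_{p \in S_i}\bigl(f_i(p) + \vmag{p - q}\bigr)$ for $q \in S_{i+1}$, and the final answer is $\min_{p \in S_n}\bigl(f_n(p) + \vmag{p - p_{n+1}}\bigr)$. A straightforward exchange argument shows this DP is exact: concatenating the optimal subpath to each $p \in S_i$ with the optimal continuation from $p$ yields an optimal global tour.

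To compute the transition from layer $i$ to layer $i+1$ quickly, I would apply \cref{lemma:additive-voronoi} with $A = S_i$, weights $w_j = f_i(a_j)$, and $B = S_{i+1}$. This computes all of $f_{i+1}(q)$ at once in $\BIGO{(|S_i| + |S_{i+1}|) \log |S_i|}$ time. The first and last transitions, involving the singleton sets $\{p_0\}$ and $\{p_{n+1}\}$, can either be handled by the same lemma or trivially in linear time.

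It remains to check that the sum of transition costs telescopes to the claimed bound. I would write
\[
\sum_{i=0}^{n} (|S_i| + |S_{i+1}|) \log |S_i|
  = \sum_i |S_i| \log |S_i| + \sum_i |S_{i+1}| \log |S_i|,
\]
and observe that each cross term $|S_{i+1}| \log |S_i|$ is bounded by $\max(|S_i| \log |S_i|,\, |S_{i+1}| \log |S_{i+1}|)$: if $|S_i| \ge |S_{i+1}|$ use the first, otherwise use the second. Each $|S_k| \log |S_k|$ gets charged $\BIGO{1}$ times, so the total is $\BIGO{\sum_i |S_i| \log |S_i|}$ as required.

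There is no real obstacle here beyond bookkeeping; the only minor subtlety is that \cref{lemma:additive-voronoi} weights the points of $A$ rather than $B$, so one must orient each transition with $A = S_i$ (the layer carrying the DP values) and $B = S_{i+1}$ (the query layer), which is exactly the order that produces the desired $\log|S_i|$ factor and makes the summation work out.
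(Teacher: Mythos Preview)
Your proposal is correct and follows essentially the same route as the paper: the paper likewise invokes the layered DP from \cref{lemma:pseudo-approx}, replaces each transition by a call to \cref{lemma:additive-voronoi}, and then bounds $\sum_i(|S_i|+|S_{i+1}|)\log|S_i|$ by $\BIGO{\sum_i|S_i|\log|S_i|}$; your version is simply more explicit about the DP definition and the charging of the cross terms.
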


\begin{proof}
Recall the dynamic programming method from \cref{lemma:pseudo-approx}, which computes the DP value for each point in $S_{i+1}$ in $\BIGO{|S_i|}$ time, meaning that each pair of adjacent regions contributes $|S_i||S_{i+1}|$ to the runtime. Substituting \cref{lemma:additive-voronoi} in place of this step, the runtime improves to
$\BIGO{\sum_{i=1}^{n-1}(|S_i|+|S_{i+1}|)\log |S_i|}\le \BIGO{\sum_{i=1}^n|S_i|\log |S_i|}$. 
% For the non-disjoint case, the reasoning is similar to \cref{lemma:pseudo-approx} assuming we are given $successor(p)$ for every point $p\in S_i$ (defined as the minimum $j > i$ such that $p \notin R_j$).
\end{proof}

For disjoint convex regions, we use a stronger guarantee than \cref{lemma:additive-voronoi}:

\begin{lemma}\label{lemma:additive-linear}
Given are the vertices of two disjoint convex polygons $B=[b_1,\dots,b_m]$ and $A=[a_1,a_2,\dots,a_n]$ in counterclockwise order and real weights $[w_1,\dots,w_n]$, one for each vertex of $A$. Define $d(i,j)\triangleq w_j+\vmag{a_j-b_i}$. Then  $\min_{1\le j\le n}d(i,j)$ may be computed for all $i\in [1,m]$ in $\BIGO{m+n}$ expected time.
\end{lemma}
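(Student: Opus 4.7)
The plan is to show that, after an appropriate cyclic rotation and possible reversal of one of the vertex sequences, the $m \times n$ matrix $D_{ij} \triangleq \vmag{a_j - b_i}$ satisfies the (convex) Monge inequality
\[
D_{i_1 j_1} + D_{i_2 j_2} \;\le\; D_{i_1 j_2} + D_{i_2 j_1} \qquad \text{for all } i_1 < i_2,\; j_1 < j_2,
\]
and then invoke a linear-time row-minima subroutine for Monge matrices. Since the weights $w_j$ merely shift entire columns, the full matrix $d(i,j) = w_j + D_{ij}$ inherits the Monge property, so the subroutine applies directly to it and we recover $\min_{j} d(i,j)$ for every $i$.

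\textbf{Proof of the Monge inequality.} The core geometric step is to argue that, once the two polygons are indexed compatibly, the ``crossing segments'' property holds: whenever $i_1 < i_2$ and $j_1 < j_2$, the segments $\overline{a_{j_1}b_{i_2}}$ and $\overline{a_{j_2}b_{i_1}}$ meet at some interior point $p$. Given this, the triangle inequality yields
\[
\vmag{a_{j_1} - b_{i_2}} + \vmag{a_{j_2} - b_{i_1}} = \bigl(\vmag{a_{j_1}-p}+\vmag{p-b_{i_2}}\bigr) + \bigl(\vmag{a_{j_2}-p}+\vmag{p-b_{i_1}}\bigr) \ge \vmag{a_{j_1} - b_{i_1}} + \vmag{a_{j_2} - b_{i_2}},
\]
which is exactly the Monge inequality. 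To arrange the indexing, I would use a separating line between $A$ and $B$ (for instance, obtained from the common inner tangents, which exist and can be found in linear time since the polygons are disjoint and convex). Relative to such a line one cuts each cyclic vertex list at a carefully chosen vertex, and---since counterclockwise orientations on the two polygons traverse the ``facing'' arcs in opposite directions---one reverses the list of $A$, after which the crossing property holds on the linearized ranges.

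\textbf{From Monge to row minima.} Once the Monge property is established, all $m$ row minima can be computed in $\BIGO{m+n}$ time by any standard Monge row-minima algorithm (SMAWK gives even a deterministic bound). The expected linear time in the statement can instead be achieved by a simple randomized divide-and-conquer that exploits monotonicity of the row-minimum positions: pick a random pivot row, compute its minimum in $\BIGO{n}$ time, and recurse on the two resulting sub-rectangles guaranteed by monotonicity, yielding an $\BIGO{m+n}$ expected recurrence in the same spirit as randomized quickselect.

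\textbf{Expected main obstacle.} The hardest part is the careful cyclic-to-linear indexing setup. Even though $A$ and $B$ are disjoint and convex, the CCW orientations on them ``turn the same way'' rather than facing each other, so a naive joint ordering fails the crossing property; one must pick the cut points from the common tangents and reverse one of the sequences, and also handle degenerate cases (a tangent through a vertex, collinear points, or one polygon being very small relative to the distance between them). All remaining ingredients---the invariance of the Monge condition under column shifts by the weights $w_j$, and the Monge row-minima subroutine itself---are then standard.
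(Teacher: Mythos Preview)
Your central claim---that after cutting the cyclic orders at the inner tangents and reversing one list, the \emph{full} $m\times n$ distance matrix is Monge---is false in general. A small counterexample: take $A$ to be the diamond with vertices $(1,0),(2,1),(3,0),(2,-1)$ and $B=\{(-1,0.1),(-1,-0.1)\}$. The sequence $\vmag{a_j-b_1}-\vmag{a_j-b_2}$ takes values approximately $0,\,-0.063,\,0,\,+0.063$ as $j$ runs over the four vertices of $A$; since the Monge condition with two rows is equivalent to this difference sequence being monotone, no cyclic rotation or reversal of $A$ (nor a swap of $b_1,b_2$) can make the full matrix Monge. The obstruction is exactly the one you flagged as the ``main obstacle'' but did not resolve: vertices of $A$ on the far side from $B$ destroy the crossing property, because segments from $B$ to those vertices can pass on either side of $A$.

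The paper deals with this by appealing to Aggarwal--Klawe: for each $b_i$ one separates the columns into those $a_j$ that are \emph{visible} from $b_i$ (the segment $\overline{a_jb_i}$ avoids both polygon interiors) and those that are not. Each of the two resulting problems decomposes into a constant number of \emph{partial} (staircase) Monge matrices whose total dimensions are $\BIGO{m+n}$; the convex-quadrilateral inequality (which, as you note, is unaffected by the additive weights $w_j$) supplies the Monge condition on each piece. Row minima of a partial Monge matrix are then computed by Chan's recent randomized $\BIGO{m+n}$ algorithm---this is precisely why the lemma says ``expected'' time. SMAWK, which you propose as a deterministic alternative, does not apply: it requires a full totally monotone matrix, and the best known deterministic bound for partial Monge row minima carries an extra inverse-Ackermann factor.
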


\begin{proof}
We first discuss the case where all $w_i=0$. Aggarwal and Klawe \cite{aggarwal1990applications} showed how to reduce the computation of $\min_{1\le j\le n, a_j\text{ visible from }b_i}d(i,j)$ and 
$\min_{1\le j\le n, a_j\text{ not visible from }b_i}d(i,j)$ for all $i\in [1,m]$ to computing the row minima of several \emph{Monge partial matrices} with dimensions $m_1\times n_1, m_2\times n_2,\dots, m_k\times n_k$ such that $\sum (m_i+n_i)\le \MO(m+n)$ in $\MO(m+n)$ time. Here, $a_j$ is said to be visible from $b_i$ if the segment $\overline{a_jb_i}$ intersects neither the interiors of polygons $A$ nor $B$. The definition of Monge partial matrix can be found in \cite{chan2021near}. 

Chan \cite{chan2021near} recently introduced an $\BIGO{m+n}$ expected time randomized algorithm for computing the row minima of an $m\times n$ Monge partial matrix.\footnote{The Monge partial matrix does not have to be given explicitly; it suffices to provide an oracle that returns the value of any entry of the matrix in $\mathcal O(1)$ time.} Thus, the case of $w_i=0$ can be solved in $\BIGO{m+n}$ expected time. 

The key claim that Aggarwal and Klawe \cite{aggarwal1990applications} use to show that all the matrices they construct are Monge partial is as follows:

\begin{claim}[Lemma 2.1 of \cite{aggarwal1990applications}, adapted]
Assume all $w_j=0$. Suppose $j\neq j'$ and $i\neq i'$. If $a_{j}a_{j'}b_{i'}b_i$ form a convex quadrilateral in that order then $d(i,j)+d(i',j')\le d(i,j')+d(i',j)$.
\end{claim}

The claim above holds by the triangle inequality, and it is easy to check that it still holds without the assumption $w_j=0$. Thus the algorithm from \cite{aggarwal1990applications} generalizes to the case of nonzero $w_j$ with minor modifications.
\end{proof}

\begin{corollary}\label{corollary:linear-touring}
The Touring Regions Problem in 2D, where all $R_i$ are sets of finitely many points $S_i$ that each form a convex polygon in counterclockwise order and the convex hulls of all $S_i$ are disjoint, can be solved exactly in $\BIGO{\sum_{i=1}^{n}|S_i|}$ expected time.
\end{corollary}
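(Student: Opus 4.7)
The plan is to reuse the dynamic programming skeleton from \cref{corollary:additive-voronoi-app} and replace the additively weighted Voronoi subroutine by the Monge matrix subroutine from \cref{lemma:additive-linear}. Concretely, for each $i$ let $f_i\colon S_i\to\R$ denote the length of the shortest tour from $p_0$ that visits $S_1,\dots,S_{i-1}$ and terminates at the given point of $S_i$. The standard recurrence is
\[
f_{i+1}(b) \;=\; \min_{a\in S_i}\bigl(f_i(a) + \vmag{a-b}\bigr)\quad\text{for each } b\in S_{i+1},
\]
which is exactly the form to which \cref{lemma:additive-linear} applies, taking $A = S_i$ with weights $w_j = f_i(a_j)$ and $B = S_{i+1}$.

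Before invoking \cref{lemma:additive-linear} I need to check its hypotheses at every transition. The corollary assumes each $S_i$ is already given as the vertex list of a convex polygon in counterclockwise order, and that the convex hulls of consecutive $S_i$ and $S_{i+1}$ are disjoint, so the lemma applies directly with no preprocessing. The degenerate endpoints $p_0$ and $p_{n+1}$ can be handled as singleton ``polygons'' $S_0$ and $S_{n+1}$: the transitions $f_1(b) = \vmag{p_0-b}$ and the final minimization at $p_{n+1}$ both take linear time trivially, so they fit within the same bound.

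Summing the per-transition expected cost gives
\[
\sum_{i=0}^{n}\BIGO{|S_i|+|S_{i+1}|} \;=\; \BIGO{\sum_{i=1}^{n}|S_i|}
\]
expected time, matching the claimed bound. Finally, standard DP backtracking (storing the argmin attained at each $b\in S_{i+1}$, which \cref{lemma:additive-linear} produces at no extra cost) reconstructs an optimal tour within the same budget.

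The only nontrivial step is confirming the hypotheses of \cref{lemma:additive-linear} are met at every transition, but this is immediate from the corollary's assumptions (ccw ordering and pairwise disjoint convex hulls); once that is observed, the rest is a direct plug-in. I do not expect a genuine obstacle, since \cref{lemma:additive-linear} was designed exactly for this type of DP step.
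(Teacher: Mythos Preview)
Your proposal is correct and is exactly the intended argument: the paper states \cref{corollary:linear-touring} without proof, as it follows from \cref{lemma:additive-linear} in precisely the way you describe---replace the Voronoi step in the DP of \cref{corollary:additive-voronoi-app} by the Monge row-minima computation, verify the disjointness and ccw hypotheses at each transition, and sum the $\BIGO{|S_i|+|S_{i+1}|}$ costs.
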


Using these techniques, we obtain the following speedups. \cref{thm:nonconvex-polygons-2d} follows due to similar reasoning as \cref{thm:2d-unions}, as a polygon with $|V|$ vertices can be triangulated in $\BIGO{|V|}$ time due to Chazelle \cite{triangulate-linear}.

\begin{theorem}\label{thm:2d-unions}
There is an $\BIGO{n^2\paren{\log \log n+\frac{1}{\eps}}}$-time algorithm for touring disjoint unions of convex bodies in 2D where each union consists of $\BIGO{1}$ convex bodies. When the bodies are possibly non-disjoint, the bound is $\BIGO{n^3\paren{\log \log n+\frac{1}{\eps}+\frac{\log 1/\eps}{n\eps}}}$ time.
\end{theorem}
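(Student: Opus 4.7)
The plan is to improve upon the general bound from \cref{thm:d-dim} (which specializes to $\BIGO{n^3\log\log n + n^3/\eps^2}$ when $d=2$ and $|R_i|=\BIGO{1}$) by replacing the naive quadratic DP used inside each pseudo-approximation call of \cref{lemma:pseudo-approx} with the faster 2D transitions provided by Corollaries \ref{corollary:additive-voronoi-app} and \ref{corollary:linear-touring}. The outer structure remains the same: apply \cref{lemma:non-intersect-d-dim-constant-approx} to obtain a constant approximation via $\BIGO{\log\log n}$ pseudo-approximation calls with $\gamma=\Theta(1)$, then run a single call with $\gamma=\eps/4$ to sharpen to $(1+\eps)$ accuracy, but with the inner DP swapped out in each call.

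For the disjoint case, \cref{lemma:equal-spacing-points} discretizes each convex body $C_{i,j}$ into a set $S_{i,j}$ of $\BIGO{n/\gamma}$ boundary points. Since these points lie on the boundary of a convex body, sorting them counterclockwise gives the vertex list of a convex polygon inscribed in $C_{i,j}$. Because distinct bodies across consecutive regions are disjoint, the convex hulls of $S_{i,j}$ and $S_{i+1,k}$ are disjoint as well, so \cref{corollary:linear-touring} applies to each body pair, with the stored DP values used as the weights $w_j$. Each body-pair transition runs in expected $\BIGO{|S_{i,j}|+|S_{i+1,k}|}=\BIGO{n/\gamma}$ time; summing over the $\BIGO{1}$ body pairs per region pair and the $\BIGO{n}$ region pairs gives expected $\BIGO{n^2/\gamma}$ per pseudo-approximation call. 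The constant approximation therefore costs $\BIGO{n^2\log\log n}$ and the final call costs $\BIGO{n^2/\eps}$, combining to the first bound.

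For the non-disjoint case, the DP of \cref{lemma:pseudo-approx} permits a point in $S_i$ to transition to $S_j$ for any $j>i$ at which it still lies inside all intermediate regions, and consecutive candidate sets can share points and need not form disjoint convex hulls, so \cref{corollary:linear-touring} no longer applies. We reformulate the DP with states indexed by (region, candidate point), where each of the $\BIGO{n\cdot n/\gamma}=\BIGO{n^2/\gamma}$ points placed globally may propagate forward through up to $n$ regions, giving $\BIGO{n^3/\gamma}$ states; enumerating and propagating them contributes the $\BIGO{n^3/\eps}$ term at $\gamma=\eps/4$. Between consecutive candidate sets of total size $N$ we instead apply \cref{corollary:additive-voronoi-app} in $\BIGO{N\log N}$ time, which summed across all region pairs contributes $\BIGO{n^2\log(n/\gamma)/\gamma}=\BIGO{n^3\log(1/\eps)/(n\eps)}$ after absorbing the $n^2\log n/\eps$ piece into $n^3/\eps$. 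Adding the $\BIGO{n^3\log\log n}$ cost of the constant-approximation phase yields the second bound.

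The main obstacle will be organizing the non-disjoint DP so that the candidate-propagation bookkeeping is tight: each point should be enumerated only once per region in which it is active, and the additive Voronoi construction at each region boundary must be set up so that its logarithmic overhead is paid per transition of candidate sets rather than per propagated region (otherwise an extra $\log n$ factor creeps in). A smaller verification is needed in the disjoint case: one must check that closest-point projections onto a convex body always land on its boundary and that, after a counterclockwise sort, the resulting points form a convex polygon contained in the body, which is immediate from \cref{lemma:project-contract} together with the convexity of each $C_{i,j}$.
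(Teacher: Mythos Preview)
Your proposal is correct and follows essentially the same approach as the paper: speed up the DP inside each pseudo-approximation call using \cref{corollary:linear-touring} in the disjoint case and (an extension of) \cref{corollary:additive-voronoi-app} in the non-disjoint case, with the outer $\log\log n$ constant-approximation phase unchanged. Your accounting of the non-disjoint bound matches the paper's note that the $\BIGO{n^3/\eps}$ term comes from computing $successor(p)$ for each of the $\BIGO{n^2/\eps}$ discretization points (your ``propagation through up to $n$ regions'' is the same computation), while the Voronoi transitions themselves contribute only $\BIGO{(n^2/\eps)\log(n^2/\eps)}$.

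One small caveat: your claim that closest-point projections ``always land on the boundary'' is not literally true, since a source point lying inside $C$ projects to itself. What you actually need for \cref{lemma:additive-linear} is the vertices of the convex hull of $S_{i,j}$ in counterclockwise order; the paper does not spell this out either, and it does not affect the stated bounds.
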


\begin{proof}
For the first bound, use \cref{thm:d-dim} with \cref{corollary:linear-touring} to speed up DP transitions. For the second bound, use \cref{thm:d-dim} but with an extension of \cref{corollary:additive-voronoi-app} to speed up DP transitions. 
\end{proof}

\begin{theorem}\label{thm:nonconvex-polygons-2d}
There is an $\BIGO{|V|n\log |V|\log \log n+\frac{|V|n}{\eps} \log \frac{|V|}{\eps}}$-time algorithm for touring disjoint unions of polygons. When the polygons are allowed to intersect each other, the time complexity increases by $\BIGO{|V|^2\alpha(|V|)}$.
\end{theorem}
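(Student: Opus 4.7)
The plan is to lift \cref{thm:2d-unions} from unions of convex bodies to unions of polygons by triangulating each polygon and redoing the runtime accounting with $\sum_i |R_i| = \BIGO{|V|}$ in place of $\sum_i |R_i| = \BIGO{n}$. First I would triangulate every input polygon in linear total time using Chazelle's algorithm \cite{triangulate-linear}, producing at most $\BIGO{|V|}$ triangles overall. Each region $R_i$ then becomes a union of $|R_i|$ convex triangles with $\sum_i |R_i| = \BIGO{|V|}$, so the setting of \cref{thm:d-dim} applies and the only remaining work is to push the 2D speedups through the new accounting.

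Next I would run the pseudo-approximation of \cref{lemma:pseudo-approx} on the triangulated regions, using the 2D dynamic-programming speedup from \cref{corollary:additive-voronoi-app} (additive Voronoi) in place of the naive quadratic transition. For a single pseudo-approximation with parameter $\gamma$, each triangle receives $\BIGO{n/\gamma}$ boundary points via \cref{lemma:equal-spacing-points}, so region $R_i$ contributes $|S_i| = \BIGO{|R_i|\, n/\gamma}$ points and the total is $\sum_i |S_i| = \BIGO{|V|n/\gamma}$. The additive-Voronoi DP then costs $\sum_i (|S_i|+|S_{i+1}|)\log |S_i| = \BIGO{(|V|n/\gamma)\log(|V|/\gamma)}$, using $n \leq \BIGO{|V|}$ (as each polygon has at least $3$ vertices) to simplify the log term. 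Note that we cannot directly invoke the linear-time variant \cref{corollary:linear-touring}: triangles coming from a common polygon share edges, so the point sets within a single region do not form a convex polygon whose convex hull is disjoint from the neighbors.

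I would then mirror the bootstrapping in \cref{lemma:non-intersect-d-dim-constant-approx}: first run the pseudo-approximation $\BIGO{\log \log n}$ times with $\gamma = 1$ to reach a constant-factor approximation, then once more with $\gamma = \eps/4$ to upgrade to $(1+\eps)$. These two phases contribute $\BIGO{|V|n \log |V| \log \log n}$ and $\BIGO{(|V|n/\eps)\log(|V|/\eps)}$ respectively, establishing the disjoint bound.

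The main obstacle is the intersecting case, in which individual polygons can no longer be triangulated independently: overlaps among polygons force regions of the plane that belong to some intersection to be represented consistently across adjacent triangulated pieces, and the non-disjoint DP of \cref{lemma:pseudo-approx} requires knowing, for each point, which regions contain it. My plan is to preprocess by computing the overlay of all $|V|$ polygon edges in $\BIGO{|V|^2\alpha(|V|)}$ time via standard segment-intersection machinery, then triangulate the resulting planar subdivision so that each region $R_i$ is recovered as a union of triangles compatible with every other region. After this preprocessing, the disjoint argument carries over using the non-disjoint variant of \cref{lemma:pseudo-approx} together with the extension of \cref{corollary:additive-voronoi-app} already used in \cref{thm:2d-unions}; the overall runtime is the disjoint bound plus the additive $\BIGO{|V|^2\alpha(|V|)}$ cost of computing the overlay, matching the theorem.
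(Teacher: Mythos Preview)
Your disjoint case is correct and follows the paper's approach: triangulate via Chazelle, discretize each triangle with $\mathcal O(n/\gamma)$ points, run the additive-Voronoi DP of \cref{corollary:additive-voronoi-app}, and bootstrap with $\mathcal O(\log\log n)$ rounds at $\gamma=1$ followed by one round at $\gamma=\Theta(\eps)$. Your observation that \cref{corollary:linear-touring} does not directly apply (the per-region point set need not be in convex position) is right; the paper likewise falls back to additive Voronoi here, which is exactly why the $\log|V|$ factors appear in the statement.

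The intersecting case has a genuine gap. First, the premise is off: nothing prevents you from triangulating each polygon independently even when they overlap, since \cref{lemma:pseudo-approx} explicitly allows the constituent convex bodies to intersect. The real obstacle in the non-disjoint DP is computing $successor(p)$ (the least $j>i$ with $p\notin R_j$) for every discretized point; done naively this is $\Theta(n)$ oracle calls per point. Second, your proposed fix of building the full overlay and retriangulating destroys the accounting you rely on: the overlay of $|V|$ segments has $\Theta(|V|^2)$ faces in the worst case, so representing each $R_i$ as a union of overlay triangles pushes $\sum_i|R_i|$ far beyond $\mathcal O(|V|)$, and the main pseudo-approximation loop no longer runs in $\tilde{\mathcal O}(|V|n/\eps)$. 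Third, $\mathcal O(|V|^2\alpha(|V|))$ is not the cost of any standard overlay construction; you have back-fitted the target bound onto the wrong step.

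In the paper that term arises from a different computation: keep the original $|V|$ edges, use Balaban's segment-intersection algorithm to find all crossings in $\mathcal O(|V|^2)$ time, and on each edge $e\in\partial R_i$ run a Tarjan union--find sweep over the induced subintervals, processing $R_{i+1},R_{i+2},\ldots$ in order to stamp each subrange with its $successor$ value. The DSU contributes the $\alpha(|V|)$ factor, and the per-edge cost of $\mathcal O(|V|\alpha(|V|))$ summed over $\mathcal O(|V|)$ edges gives the additive $\mathcal O(|V|^2\alpha(|V|))$. After this preprocessing, $successor(p)$ is answered by binary search along the edge in $\mathcal O(\log|V|)$ time, and the DP proceeds exactly as in your disjoint analysis on the \emph{original} triangulation, preserving the $\tilde{\mathcal O}(|V|n/\eps)$ main term.
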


\subsubsection{Polygonal approximation algorithms}

Up until now, we have approximated the perimeter of a convex region using points. We can alternatively approximate the perimeter using a convex polygon with fewer vertices, which can be computed using our closest point projection oracle. The proof is deferred to the appendix.

\begin{lemma}[Polygonal approximation]\label{lemma:half-eps-intersect}
Given a closest point oracle for a convex region $C$ and a unit square $U$, we may select $\BIGO{\eps^{-1/2}}$ points in $C$ such that every point within $C\cap U$ is within distance $\eps$ of the convex hull of the selected points.
\end{lemma}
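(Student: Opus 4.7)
The plan is to place $\Theta(\eps^{-1/2})$ equally spaced sample points on $\partial U$, project each onto $C$ via the closest point oracle, and return the projections as our set $S$. Concretely, take $k = \lceil 4/\sqrt\eps\rceil$ points $p_1,\dots,p_k$ on $\partial U$ with consecutive spacing $\sqrt\eps$ (the perimeter of $U$ is $4$), compute $q_i = \closest_C(p_i)$, and set $S = \{q_1,\dots,q_k\}\subset C$; then $|S| = \BIGO{\eps^{-1/2}}$. By \cref{lemma:project-contract}, $\vmag{q_i - q_{i+1}} \le \vmag{p_i - p_{i+1}} = \sqrt\eps$, so the $q_i$'s form a closed polygonal chain in $C$ with step size $\sqrt\eps$.

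The first part of the analysis shows that $S$ is $\BIGO{\sqrt\eps}$-dense along $\partial(C \cap U)$. For $q \in \partial C \cap U$, extend the outward normal of $C$ at $q$ until it hits $\partial U$ at a point $p^\star$; since $q$ is the foot of the perpendicular from $p^\star$ to $C$, we have $\closest_C(p^\star) = q$. Taking the nearest sample $p_i$ to $p^\star$, at distance at most $\sqrt\eps/2$, and applying contraction yields $\vmag{q_i - q} \le \sqrt\eps/2$. For $q \in C \cap \partial U$ the analogous bound follows from $\closest_C(q) = q$. Together these show that every point of $\partial(C\cap U)$ is within $\BIGO{\sqrt\eps}$ of some $q_i$.

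The main obstacle is upgrading this $\BIGO{\sqrt\eps}$ boundary sampling into an $\BIGO{\eps}$ Hausdorff approximation of $C \cap U$ by passing to the convex hull. This quadratic improvement is a standard convex-geometry phenomenon: a chord of length $s$ across a convex arc has sag $\BIGO{s^2}$ relative to the arc, so the convex hull of $\BIGO{\eps^{-1/2}}$ samples tracing $\partial(C \cap U)$ with step size $\sqrt\eps$ lies within Hausdorff distance $\BIGO{\eps}$ of $C \cap U$, given that $C \cap U$ has diameter at most $\sqrt 2$. Formalizing this requires care when $C$ has features of radius smaller than $\sqrt\eps$, where the curvature bound degenerates; in that regime one argues directly that for every $q \in C \cap U$ there exist consecutive samples $q_i,q_{i+1}$ whose connecting chord passes within $\BIGO{\eps}$ of $q$, and the chord itself lies in $\text{conv}(S)$. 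Rescaling the constant in $k$ gives the desired final distance of exactly $\eps$.
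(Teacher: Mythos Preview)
Your construction—projecting equally spaced points on $\partial U$ onto $C$—does give consecutive samples with spacing at most $\sqrt\eps$, but this alone is \emph{not} enough for the convex hull to be $\eps$-close to $C\cap U$. The heuristic ``a chord of length $s$ across a convex arc has sag $\BIGO{s^2}$'' implicitly assumes that the outward normal turns by at most $\BIGO{s}$ between the two chord endpoints; your construction gives no such control. Your fallback claim for high-curvature features—that consecutive $q_i,q_{i+1}$ always have a chord within $\BIGO{\eps}$ of every $q\in C\cap U$—is false.

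Concretely, let $C$ be the triangle with apex $A=(1/2,\,1-\mu)$ and base vertices $(1/4,1/10),(3/4,1/10)$, where $\mu=\sqrt\eps/100$. The set of points on $\partial U$ whose projection is $A$ is the segment $\{(x,1):|x-1/2|\le c\mu\}$ for a constant $c<4$, of length less than $\sqrt\eps$. Hence the samples can be phased so that none projects to $A$; the two samples nearest $(1/2,1)$ then project to points $q_i,q_{i+1}$ on the left and right edges of $C$, each at distance $\Theta(\sqrt\eps)$ from $A$. By symmetry the chord $q_iq_{i+1}$ is horizontal at height $1-\Theta(\sqrt\eps)$, while $A$ has height $1-\mu$, so the distance from $A$ to this chord—and to all of $\text{conv}(S)$, since every other $q_j$ has even smaller $y$-coordinate—is $\Theta(\sqrt\eps)$, not $\BIGO{\eps}$. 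Rescaling $k$ by a constant does not help: it shrinks the spacing to $\sqrt\eps/c$, and choosing $\mu$ proportionally smaller reproduces the same $\Theta(\sqrt\eps)$ gap.

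The paper's proof repairs exactly this gap by enforcing \emph{two} conditions on consecutive selected points: distance at most $\sqrt\eps$ \emph{and} normal-angle difference at most $\sqrt\eps$. Your projection from $\partial U$ secures the first; the paper secures the second via an additional set of $\BIGO{\eps^{-1/2}}$ projections from points dilated far outside $U$ (so the closest-point oracle behaves as a tangent oracle), or alternatively via an adaptive subdivision that refines wherever the normal turns too fast. Once both conditions hold, an elementary argument gives sag $<|s_1-s_2|\sin\theta\le \eps$.
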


The polygonal approximation allows us to immediately obtain the following result. The proof is deferred to the appendix.

\begin{theorem}
\label{thm:polygonal-approx}
There is a $\MO \paren{\frac{n^{2.5}}{\sqrt{\eps}}}$-time algorithm for touring disjoint convex bodies in 2D. When the convex bodies are possibly non-disjoint, the bound is $\BIGO{\frac{n^{3.5}}{\sqrt{\eps}}}$ time.
\end{theorem}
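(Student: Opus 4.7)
The plan is to replace each convex region by a low-vertex convex polygon via Lemma \ref{lemma:half-eps-intersect} and then invoke the exact polygon-touring algorithm of Tan and Jiang \cite{starburst-improve} as a black box. First I would run Lemma \ref{lemma:non-intersect-d-dim-constant-approx} to obtain a constant-factor estimate $L_{APPROX}$ satisfying $OPT\le L_{APPROX}\le 4\cdot OPT$. Since no vertex of any tour of length at most $L_{APPROX}$ can lie farther than $L_{APPROX}$ from $p_0$, the optimal tour lies inside the axis-aligned square $U$ of side $4L_{APPROX}$ centered at $p_0$.

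Next, for each region $R_i$, I would apply Lemma \ref{lemma:half-eps-intersect} (rescaled from the unit square to $U$) with accuracy $\eps'=\eps/(Cn)$ for a suitably large constant $C$, producing a set $T_i\subseteq R_i$ of $\BIGO{\sqrt{n/\eps}}$ points. Let $P_i$ denote the convex hull of $T_i$; since $R_i$ is convex, $P_i\subseteq R_i$, so any tour of $P_1,\dots,P_n$ is automatically a valid tour of $R_1,\dots,R_n$. By the lemma, every point of $R_i\cap U$ is within distance $\delta\triangleq \eps L_{APPROX}/(Cn)$ of $P_i$. Taking any optimal tour $(p_1,\dots,p_n)$ with $p_i\in R_i\cap U$ and replacing each $p_i$ by its closest point $q_i\in P_i$ perturbs each vertex by at most $\delta$, which (by the triangle inequality applied segment-by-segment) inflates the total length by at most $2n\delta\le \eps\cdot OPT$ for $C$ chosen relative to the constant in $L_{APPROX}\le 4\cdot OPT$. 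Hence the optimal tour on $P_1,\dots,P_n$ has length at most $(1+\eps)\cdot OPT$.

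Finally, I would feed the polygons $P_1,\dots,P_n$, with total vertex count $|V|=\BIGO{n^{1.5}/\sqrt{\eps}}$, into the exact algorithm of \cite{starburst-improve}, which runs in $\BIGO{|V|n}$ time in the disjoint case and $\BIGO{|V|n^2}$ in the possibly non-disjoint case. This yields the claimed $\BIGO{n^{2.5}/\sqrt{\eps}}$ and $\BIGO{n^{3.5}/\sqrt{\eps}}$ bounds; the $\log\log n$ preprocessing cost from Lemma \ref{lemma:non-intersect-d-dim-constant-approx} is asymptotically absorbed.

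The main obstacle to be careful about is the non-disjoint case. As noted in Lemma \ref{lemma:pseudo-approx}, the optimum may enter the interior of several consecutive regions without moving, so one cannot argue by ``visiting boundaries.'' However, the perturbation argument above is purely per-vertex: it only uses $p_i\in R_i\cap U$ and $q_i\in P_i\subseteq R_i$, so it is insensitive to intersections among the $R_i$. One should also verify that \cite{starburst-improve} indeed covers the non-disjoint convex polygon case without a fencing constraint at the stated $\BIGO{|V|n^2}$ time, which matches the claim in the paper's introduction.
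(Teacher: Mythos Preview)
Your approach is essentially the same as the paper's: obtain a constant-factor estimate, apply the polygonal approximation Lemma~\ref{lemma:half-eps-intersect} with accuracy $\Theta(\eps/n)$ inside a bounding square, then run Tan--Jiang on the resulting convex polygons. The error analysis and the non-disjoint discussion are also handled the same way.

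There is one genuine gap in the disjoint case: the preprocessing step you cite is too slow. Lemma~\ref{lemma:non-intersect-d-dim-constant-approx} costs $\BIGO{n^{2d-2}\log\log n\cdot \sum_i |R_i||R_{i+1}|}$, which for $d=2$ and single convex bodies ($|R_i|=1$) is $\BIGO{n^{3}\log\log n}$, not $\BIGO{n^2\log\log n}$. This is \emph{not} $o(n^{2.5})$, so it is not absorbed into $\BIGO{n^{2.5}/\sqrt\eps}$ when $\eps$ is, say, a constant. The paper instead obtains the constant approximation via Theorem~\ref{thm:2d-unions} with $\eps=1$, which uses the 2D DP speedup (Corollary~\ref{corollary:linear-touring}) and runs in $\BIGO{n^2\log\log n}=o(n^{2.5})$ time. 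In the non-disjoint case your choice happens to be fine, since $n^3\log\log n=o(n^{3.5})$. The fix is simply to swap in Theorem~\ref{thm:2d-unions} for the preprocessing; the rest of your argument goes through unchanged.
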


\begin{proof}
Let's start with the disjoint case.
We first use \cref{thm:2d-unions} with $1+\epsilon = 2$ to obtain a path of length $L_{APPROX}$ that satisfies $L_{APPROX} \leq 2\cdot OPT$ in $o(n^{2.5})$ time.

Consider constructing a square $\mathcal H$ of side length $2L_{APPROX}$ centered at $p_0$. Let $\epsilon' = \frac{\epsilon}{8n}$. 
Now, we apply \cref{lemma:half-eps-intersect} to select a set $S_i$ of size $|S_i| \leq \BIGO{\epsilon'^{-1/2}}$ points on each region $R_i$, such that every point within $R_i \cap \mathcal H$ is within distance  $\epsilon' \cdot L_{APPROX}$ of some point in the convex hull of $S_i$.  

Define $C_i$ to be the convex hull of $S_i$ in counterclockwise order, which we can compute in linear time because the construction given in \cref{lemma:half-eps-intersect} returns points that are all on the convex hull. Now, run \cite{starburst-improve} to solve the touring disjoint convex polygons problem for $C_i$ exactly in $\BIGO{|V|n} = \BIGO{n^2 \epsilon'^{-1/2}} \leq \BIGO{\frac{n^{2.5}}{\sqrt{\epsilon}}}$ time. Recall that $|V|$ is the total number of vertices over all polygons.

It remains to show that the solution we find from the convex polygons problem is a $1+\epsilon$ approximation of the answer. Consider an optimal solution $p_0 \in R_0, p_1 \in R_1, \dots, p_n \in R_n, p_{n+1} \in R_{n+1}$. Now, for every $i$, define $p_i'$ to be the closest point on $C_i$ to $p_i$, where \cref{lemma:half-eps-intersect} guarantees $\vmag{p_i-p_i'} \leq \epsilon' \cdot L_{APPROX} \leq 2\epsilon' \cdot OPT$. Thus, 
\begin{align*}
    \sum_{i=0}^{n} \vmag{p_i'-p_{i+1}'} &\leq \sum_{i=0}^{n} \paren{\vmag{p_i-p_{i+1}} + \vmag{p_i-p_i'} + \vmag{p_{i+1}-p_{i+1}'}} \\
    &\leq OPT + (n+1) \cdot 4\epsilon' \cdot OPT \leq (1+\epsilon)OPT,
\end{align*}
as desired.

For the intersecting case, we first use \cref{thm:2d-unions} with $1+\epsilon = 2$ to obtain a constant approximation of the optimal length in $o(n^{3.5})$ time. The rest of the proof is identical, except now the $C_i$ can intersect, which changes the runtime of the application of \cite{starburst-improve} to $\BIGO{|V|n^2} = \BIGO{n^3 \epsilon'^{-1/2}} \leq \BIGO{\frac{n^{3.5}}{\sqrt{\epsilon}}}$.
\end{proof}

\section{Disjoint convex fat bodies}
\label{section:fat-bodies}
In this section, we present packing and grouping techniques for touring disjoint convex fat bodies and show how they can be applied to obtain $\BIGO{\min\paren{\frac{n}{\eps},\frac{n^2}{\sqrt \eps}}}$-time algorithms for touring disjoint convex fat bodies in $2$ dimensions. Omitted proofs for this section may be found in \cref{sec:omitted-proofs-convex-fat}.

\subsection{Techniques}

\subsubsection{Packing}

A packing argument shows that the length of the optimal path length is at least linear in the number of bodies and the minimum $r_h$ (that is, the minimum radius of any inscribed ball). Intuitively, if we place $n$ disjoint objects of radius at least $1$ that are close to being disks on the plane, the length of the optimal tour that visits all of them should be at least linear in $n$ for sufficiently large $n$. The details are in the appendix.

\begin{lemma}[Packing Lemma]\label{lemma:packing-lb}
Assume a fixed upper bound on $\frac{r_H}{r_h}$. Then there exists $n_0= \BIGO{1}$ such that the optimal path length $OPT$ for touring any $n \ge n_0$ disjoint convex fat objects is $\Omega(n \cdot \min r_h)$. For balls, $n_0=3$.
\end{lemma}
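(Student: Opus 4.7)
The plan is to carry out a Minkowski-tube volume-packing argument. Write $r := \min_i r_h^{(i)}$, let $\gamma$ denote the polygonal tour $p_0 p_1 \cdots p_{n+1}$ of total length $L = OPT$, and let $B(\gamma, r) \subseteq \R^d$ be its closed $r$-neighborhood (i.e., the Minkowski sum of $\gamma$ with the closed ball of radius $r$). The heart of the argument is a pair of complementary estimates: a local lower bound on the mass each visited region contributes to $B(\gamma, r)$, and a global upper bound on $\vol(B(\gamma, r))$ itself.

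The \emph{local mass bound} that I would establish first asserts $\vol(R_i \cap B(p_i, r)) \ge c \cdot r^d$ for every $i$, where $c = c(C, d) > 0$ depends only on the fatness constant $C$ and the ambient dimension. This is the main obstacle, because $p_i$ may lie on $\partial R_i$, where \emph{a priori} $R_i$ could look very thin. Fatness circumvents this via a cone construction: letting $c_i$ denote the center of the inscribed ball $h_i = B(c_i, r_h^{(i)}) \subseteq R_i$, the convex hull of $\{p_i\} \cup h_i$ is a circular cone with apex $p_i$ and base $h_i$, entirely contained in $R_i$ by convexity. Since $p_i$ and $c_i$ both lie in the circumscribed ball $H_i$ of radius $\le C r_h^{(i)}$, we have $\vmag{p_i - c_i} \le 2 C r_h^{(i)}$, which forces the cone's half-opening angle $\alpha$ to satisfy $\sin\alpha \ge 1/(2C)$, a constant depending only on $C$. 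A short case split on whether $\vmag{p_i - c_i} \ge r/2$ or not (in the latter case, $h_i$ already contains $B(p_i, r/2)$ because $r_h^{(i)} \ge r$) then yields the claimed $\Omega(r^d)$ mass inside $R_i \cap B(p_i, r)$.

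Next, the \emph{tube volume bound} $\vol(B(\gamma, r)) \le v_{d-1} L r^{d-1} + v_d r^d$ (where $v_k$ denotes the volume of the unit $k$-ball) follows from the standard decomposition of the tube into a cylinder of radius $r$ swept along the segments of $\gamma$, together with two spherical half-caps at the endpoints $p_0$ and $p_{n+1}$. Since the $R_i$ are pairwise disjoint and each $p_i$ lies on $\gamma$, the sets $R_i \cap B(p_i, r) \subseteq B(\gamma, r)$ are pairwise disjoint, so
\[
c n r^d \;\le\; \sum_{i=1}^{n} \vol\!\paren{R_i \cap B(p_i, r)} \;\le\; \vol(B(\gamma, r)) \;\le\; v_{d-1} L r^{d-1} + v_d r^d.
\]
Rearranging yields $L \ge (c n - v_d)\, r / v_{d-1}$, which is $\Omega(n r)$ as soon as $n \ge n_0 := \ceil{2 v_d / c} = \BIGO{1}$.

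For the sharper claim $n_0 = 3$ in the ball case, I would replace the cone construction with a direct lens calculation: when $R_i = B(c_i, \rho_i)$ with $\rho_i \ge r$ and $p_i \in R_i$, the intersection $R_i \cap B(p_i, r)$ is the intersection of two balls whose volume is minimized when $\rho_i = r$ and $p_i \in \partial R_i$. In two dimensions this minimum equals $(2\pi/3 - \sqrt{3}/2)\, r^2$; plugging $c = 2\pi/3 - \sqrt{3}/2$, $v_1 = 2$, and $v_2 = \pi$ into the displayed inequality gives $L \ge (\pi - 3\sqrt{3}/2)\, r / 2 \approx 0.27\, r$, which is strictly positive already at $n = 3$, so the packing lower bound becomes nontrivial starting from $n_0 = 3$.
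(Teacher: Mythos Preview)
Your tube-volume argument for the general fat-body case is correct and is a genuinely different route from the paper's. Both arguments rest on the same local mass fact (each region contributes volume $\Omega(r^d)$ near the point where the tour touches it; the paper proves it by dilating the inscribed ball toward $q$, you by the cone/homothety construction). Where they diverge is in how this mass is aggregated: you pack it into the $r$-tube around $\gamma$ and invoke the tight bound $\vol(B(\gamma,r))\le v_{d-1}Lr^{d-1}+v_dr^d$ to obtain $L=\Omega(nr)$ in one stroke, whereas the paper argues by contradiction (if $OPT<r$ then all $n$ bodies meet a single $r$-ball and hence fit disjointly inside a $2r$-ball, forcing $n=\mathcal O(1)$) and then boosts ``$n_0$ bodies force $OPT\ge\Omega(r)$'' to ``$n$ bodies force $OPT\ge\Omega(nr)$'' by cutting the tour into consecutive blocks of $n_0$ regions. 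Your route is more direct and yields an explicit constant; the paper's two-step structure is what makes the ball refinement below work uniformly in $d$.

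Your $n_0=3$ argument for balls has a genuine gap in dimensions $d\ge 3$. You only carry out the lens computation for $d=2$, and in fact the tube inequality \emph{cannot} deliver $n_0=3$ in higher dimensions: already for $d=3$ the lens of two unit balls whose centers are at unit distance has volume $c_3=5\pi/12$, while $v_3=4\pi/3$, so $3c_3<v_3$ and your displayed inequality is vacuous at $n=3$. Worse, $c_d/v_d\to 0$ as $d\to\infty$ (the cap above latitude $\pi/3$ carries a vanishing fraction of the ball's mass by concentration of measure), so the $n_0$ your method produces diverges with $d$. The paper supplies the missing idea: for any three balls in $\R^d$, project the tour orthogonally onto the plane $P$ through the three centers. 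Projection does not increase path length, and since each center lies in $P$ the projected point $p_i'$ still satisfies $\vmag{p_i'-c_i}\le\vmag{p_i-c_i}\le r_i$, so the projected tour visits the three great disks $B(c_i,r_i)\cap P$, which are pairwise disjoint because $\vmag{c_i-c_j}>r_i+r_j$. This reduces the ``three balls'' lower bound to the planar ``three disks'' lower bound, where your lens calculation (or the paper's own angular estimate) does give a positive constant. Combining this with the block-boosting step then yields $n_0=3$ for balls in every dimension.
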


The packing lemma allows us to obtain a strong lower bound on the length of the optimal tour in terms of the size of the regions, which will be crucial in proving that our algorithms have low relative error. 

\begin{corollary}\label{corollary:sp-lower-bound}

Let $r_i$ denote the $i$th largest $r_h$. For all $i\ge n_0$, $r_i\le \BIGO{\frac{OPT}{i}}$.

\end{corollary}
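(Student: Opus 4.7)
The plan is to apply the Packing Lemma (\cref{lemma:packing-lb}) to the sub-collection consisting of the $i$ largest regions. Specifically, order the regions so that the $k$-th region in this order has the $k$-th largest $r_h$, and let $T_i$ denote the set of the $i$ largest regions. Every region in $T_i$ has inscribed-ball radius at least $r_i$, and they are pairwise disjoint since they are a sub-collection of the input regions. Thus for $i \ge n_0$, \cref{lemma:packing-lb} guarantees that \emph{any} tour visiting all regions of $T_i$ has length at least $\Omega(i \cdot r_i)$, because such a tour is at least as long as the optimal tour on $T_i$.

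Next, I would argue that $OPT$ itself is at least the length of some tour visiting every region in $T_i$. Given an optimal path $p_0, p_1, \dots, p_n, p_{n+1}$ realizing $OPT$, take the subsequence of points $p_{i_1}, p_{i_2}, \dots, p_{i_i}$ corresponding to the indices of the $i$ regions belonging to $T_i$. By the triangle inequality applied along the original path, the length of the tour $p_{i_1}, p_{i_2}, \dots, p_{i_i}$ is at most the length of $OPT$ (each consecutive pair in the subsequence corresponds to a contiguous segment of the original path, whose total length is at least the straight-line distance between its endpoints). This subsequence is a valid tour of $T_i$, so combining with the previous step yields
\[
OPT \;\ge\; \Omega(i \cdot r_i),
\]
which rearranges to the claimed bound $r_i \le \BIGO{OPT/i}$.

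The only subtlety is the quantification in the Packing Lemma: the lemma bounds the optimal tour length from below, but what we actually need is a lower bound on \emph{every} tour of $T_i$, and this follows immediately because every tour is at least as long as the optimum. I do not anticipate any real obstacle beyond being careful that the ``sub-tour'' extracted from $OPT$ is a legitimate path visiting the $i$ selected regions, which is handled by the triangle-inequality observation above.
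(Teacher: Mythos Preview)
Your proposal is correct and follows essentially the same approach as the paper: drop all but the $i$ largest regions, observe that the original optimal path (restricted via triangle inequality to the corresponding subsequence of points) gives a tour of these $i$ regions of length at most $OPT$, and apply the Packing Lemma to conclude $OPT\ge \Omega(i\cdot r_i)$. The paper states this more tersely as $OPT\ge OPT_i\ge \Omega(i\cdot r_i)$, but the underlying argument is identical.
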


\begin{proof}
Consider dropping all regions except those with the $i$ largest inner radii and let $OPT_i$ be the optimal length of a tour that visits the remaining disks in the original order. By \cref{lemma:packing-lb}, for $i\ge n_0$, $OPT\ge OPT_i\ge \Omega(i\cdot r_i)\implies r_i\le \BIGO{\frac{OPT}{i}}$.
\end{proof}

\begin{lemma}\label{lemma:sp-logn}
\begin{sloppypar}
The optimal path length for touring $n$ disjoint convex fat bodies is $\Omega \Big( \sum_{i\ge n_0} r_i/\log{n} \Big)$, and there exists a construction for which this bound is tight.
\end{sloppypar}
\end{lemma}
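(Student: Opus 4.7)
The plan is to split the proof into two parts: the lower bound, which follows by summing the per-index inequality from Corollary~\ref{corollary:sp-lower-bound} over a harmonic tail, and a matching construction showing that the factor of $\log n$ cannot be replaced by any smaller function. The lower bound is essentially a one-line calculation; the construction is the more delicate half.

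For the lower bound, I would start from the inequality $r_i\le c\cdot OPT/i$ for every $i\ge n_0$ (some absolute constant $c$), which is exactly what Corollary~\ref{corollary:sp-lower-bound} gives. Summing over $i$ from $n_0$ to $n$ and using $\sum_{i=n_0}^n \tfrac{1}{i}=H_n-H_{n_0-1}=\BIGO{\log n}$ (since $n_0=\BIGO{1}$) yields
\[
\sum_{i\ge n_0} r_i \;\le\; c\cdot OPT\cdot \sum_{i=n_0}^{n}\frac{1}{i}\;=\;\BIGO{OPT\cdot \log n}.
\]
Rearranging gives $OPT=\Omega\bigl(\sum_{i\ge n_0} r_i/\log n\bigr)$, which is exactly the desired bound.

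For the tight construction, I would exhibit a family of $n$ disjoint disks whose inner radii populate roughly $\log n$ geometric scales so that every inequality $r_i\le \BIGO{OPT/i}$ used in the above summation is itself tight up to constants. Concretely, I plan to take $L=\Theta(\log n)$ scales indexed by $k=0,1,\ldots,L-1$ and, at each scale $k$, use $\Theta(2^k)$ disks of radius $\Theta(2^{-k})$, so that the sorted radii satisfy $r_i=\Theta(1/i)$ and $\sum r_i=\Theta(\log n)$. The disks will be arranged so that each scale contributes only $\BIGO{1/L}$ to the tour length, making $OPT=\Theta(1)=\Theta(\sum r_i/\log n)$. Verifying disjointness amounts to a routine packing check, and checking $r_i\le \BIGO{OPT/i}$ for this construction confirms that the summation in the lower bound is tight term-by-term, so the overall bound is tight up to constants.

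The hard part is producing a placement at each scale together with a global tour order so that the per-scale tour contribution scales inversely with $L$, which requires the tangent points between scales to be reused rather than spread out. I expect the verification of disjointness together with the calculation that hops across scales remain bounded by $O(2^{-k})$ at scale $k$ (so that summing over $k$ yields $\BIGO{1}$) to be the main technical obstacle in making the construction precise; once that is in place, the matching both in $\sum r_i$ and in $OPT$ is an immediate corollary.
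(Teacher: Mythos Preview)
Your lower bound argument is correct and is exactly the paper's: sum the inequality $r_i\le \BIGO{OPT/i}$ from Corollary~\ref{corollary:sp-lower-bound} over $i\ge n_0$ and use $\sum_{i=n_0}^n 1/i=\BIGO{\log n}$.

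The construction, however, has a genuine gap. You plan to argue that ``each scale contributes only $\BIGO{1/L}$ to the tour length,'' with ``hops across scales'' of size $O(2^{-k})$ summing to $\BIGO{1}$. But this accounting cannot work as stated: at scale $k$ you have $\Theta(2^k)$ pairwise disjoint disks of radius $\Theta(2^{-k})$, and Lemma~\ref{lemma:packing-lb} applied to that scale alone already forces any tour touching all of them to have length $\Omega(2^k\cdot 2^{-k})=\Omega(1)$, not $O(1/\log n)$. So the scales cannot be toured one after another with small hops in between; the only way to achieve $OPT=\BIGO{1}$ is to have a \emph{single} short path that touches all scales simultaneously, with the disks of different scales interleaved along it. Your proposal mentions ``reusing tangent points,'' which is the right instinct, but the per-scale additive accounting you build the argument on is inconsistent with the packing lemma you just used for the lower bound.

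The paper's construction sidesteps this entirely. It takes radii $r_i=1/i$ and makes every disk tangent to the $x$-axis from above. Two such disks with radii $r_i,r_j$ are disjoint iff their tangent points on the axis are at distance at least $2\sqrt{r_ir_j}$, so when the $i$th disk is placed greedily, the total length of $x$-axis forbidden by the earlier disks is at most
\[
\sum_{j=1}^{i-1}4\sqrt{r_ir_j}=\frac{4}{\sqrt i}\sum_{j=1}^{i-1}\frac{1}{\sqrt j}<8.
\]
Hence all $n$ disks fit tangent to the segment $[0,8]$; ordering them by tangent point gives a tour along that segment with $OPT\le 8$, while $\sum r_i=\Theta(\log n)$. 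Your geometric-scale radii (which also satisfy $r_i=\Theta(1/i)$) would plug into this same tangent-to-a-common-segment argument, but the ``per-scale $O(1/L)$'' bookkeeping you outline does not establish $OPT=\BIGO{1}$.
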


\begin{proof}[Proof Sketch]
Using \cref{corollary:sp-lower-bound},
\begin{equation*}
\frac{\sum_{i\ge n_0}r_i}{\log n}\le \sum_{i\ge n_0}\frac{\BIGO{\frac{OPT}{i}}}{\log n}\le \BIGO{\frac{OPT}{\log n}\sum_{i=n_0}^n\frac{1}{i}}\le \BIGO{OPT}.
\end{equation*}
We display the construction in \cref{fig:sp-logn}; we defer the full description to the appendix. The idea is to place disjoint disks of radii $1/1, 1/2, 1/3, \dots$ such that they are all tangent to a segment of the $x$-axis of length $\BIGO{1}$.
\end{proof}

\begin{figure}[h]
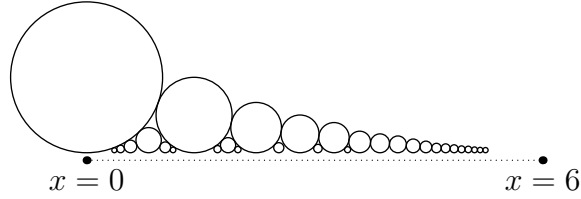

    \centering
    \begin{asy}
    unitsize(1cm);
    draw(circle((0,1.0),1.0));
    draw(circle((1.4142135623730951,0.5),0.5));
    draw(circle((2.230710143300821,0.3333333333333333),0.3333333333333333));
    draw(circle((2.808060412490447,0.25),0.25));
    draw(circle((3.255274007990405,0.2),0.2));
    draw(circle((0.816496580927726,0.16666666666666666),0.16666666666666666));
    draw(circle((3.5933357098818113,0.14285714285714285),0.14285714285714285));
    draw(circle((3.8605969517942356,0.125),0.125));
    draw(circle((4.096299212189751,0.1111111111111111),0.1111111111111111));
    draw(circle((1.861427157873053,0.1),0.1));
    draw(circle((4.297306775241593,0.09090909090909091),0.09090909090909091));
    draw(circle((0.5773502691896257,0.08333333333333333),0.08333333333333333));
    draw(circle((4.464555177243011,0.07692307692307693),0.07692307692307693));
    draw(circle((1.0347144711637184,0.07142857142857142),0.07142857142857142));
    draw(circle((2.528852540300793,0.06666666666666667),0.06666666666666667));
    draw(circle((4.603230226299319,0.0625),0.0625));
    draw(circle((4.724498038817485,0.058823529411764705),0.058823529411764705));
    draw(circle((3.0437626728859626,0.05555555555555555),0.05555555555555555));
    draw(circle((4.835781015632416,0.05263157894736842),0.05263157894736842));
    draw(circle((0.4472135954999579,0.05),0.05));
    draw(circle((1.722820262297279,0.047619047619047616),0.047619047619047616));
    draw(circle((4.93360421324132,0.045454545454545456),0.045454545454545456));
    draw(circle((5.022515058136197,0.043478260869565216),0.043478260869565216));
    draw(circle((1.9905266027466335,0.041666666666666664),0.041666666666666664));
    draw(circle((3.434159446190388,0.04),0.04));
    draw(circle((5.104301140147149,0.038461538461538464),0.038461538461538464));
    draw(circle((1.1375833711384464,0.037037037037037035),0.037037037037037035));
    draw(circle((5.178426071813259,0.03571428571428571),0.03571428571428571));
    draw(circle((5.248612312447619,0.034482758620689655),0.034482758620689655));
    draw(circle((0.3651483716701107,0.03333333333333333),0.03333333333333333));
    
    draw((0,-0.1)--(6,-0.1),dotted);
    dot("$x=0$",(0,-0.1),S);
    dot("$x=6$",(6,-0.1),S);
    
    \end{asy}
    \caption{Construction from \Cref{lemma:sp-logn}: placement of the first $30$ disks}
    \label{fig:sp-logn}
\end{figure}

\subsubsection{Grouping}

We now show that we can split up the optimal path into smaller subpaths by splitting the sequence of bodies into groups of consecutive bodies, finding the minimum-sized body in each group, and considering the subpaths between these small bodies. By the packing lemma, the sum of the radii of the representatives is small compared to the total path length.

In particular, using groups of size $\frac{1}{\epsilon}$, we can compress the smallest sized region into a single point, meaning that we can consider touring regions between these points independently from each other. This allows us to turn \emph{any} polynomial time approximation scheme for touring disjoint convex fat bodies into one that is linear in $n$. 

\begin{lemma}[Grouping Lemma]
\label{lemma:grouping}
Given an algorithm for touring disjoint convex fat bodies in $d$ dimensions that runs in $f(n, \eps)$ time, where $f$ is a polynomial, we can construct an algorithm that runs in $\MO\left (n\eps+1\right) \cdot f\left(\frac{1}{\eps}, \eps\right)$ time (for $\eps \leq \MO(1)$).
\end{lemma}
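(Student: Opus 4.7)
The plan is to partition the $n$ regions into $m = \lceil n\eps\rceil$ contiguous groups of size roughly $\lceil 1/\eps\rceil$ each, choose a representative $\rho_j$ of minimum inner radius in each group $G_j$, pick an arbitrary point $c_j \in \rho_j$ (via the closest point oracle, e.g.\ $c_j = \closest_{\rho_j}(p_0)$), and then solve the $m+1$ independent subproblems between consecutive $c_j$'s using the given algorithm as a black box. I handle the degenerate regime $n\eps < 1$ by running the black box directly on all $n < 1/\eps$ regions in $f(n,\eps) \le f(1/\eps,\eps)$ time, and I assume $\eps \le 1/n_0$ so that each group contains at least $n_0$ bodies.

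The correctness hinges on two quantitative facts. First, since each $G_j$ has $\lceil 1/\eps\rceil \ge n_0$ disjoint convex fat bodies, \cref{lemma:packing-lb} implies that any path visiting all regions of $G_j$ has length $\Omega((1/\eps)\cdot r_h(\rho_j))$. Applying this to the sub-path of the global optimal tour restricted to $G_j$, whose length I denote $OPT_j$, gives $OPT_j = \Omega(r_h(\rho_j)/\eps)$. Second, rerouting the optimal tour to visit $\rho_j$ at $c_j$ instead of at its original visit point adds at most $2\,r_H(\rho_j) = \BIGO{r_h(\rho_j)}$ per group by the triangle inequality and fatness. Summing and using the first fact, the total rerouting cost is $\BIGO{\sum_j r_h(\rho_j)} \le \BIGO{\eps \sum_j OPT_j} \le \BIGO{\eps \cdot OPT}$, so the best tour forced to pass through every $c_j$ has length at most $(1 + \BIGO{\eps}) \cdot OPT$.

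Once the tour is forced through each $c_j$, the problem decouples into subproblems: the subproblem between $c_j$ and $c_{j+1}$ tours the at most $2\lceil 1/\eps\rceil - 2 = \BIGO{1/\eps}$ regions strictly between $\rho_j$ and $\rho_{j+1}$ in input order (the tail of $G_j$ and the head of $G_{j+1}$). Running the black box on each subproblem with parameter $\eps$ yields a $(1+\eps)$-approximation of its optimum in $f(\BIGO{1/\eps}, \eps) = \BIGO{f(1/\eps,\eps)}$ time by polynomiality of $f$. Summing across the $m+1$ subproblems gives a tour of total length at most $(1+\eps)(1 + \BIGO{\eps}) \cdot OPT = (1 + \BIGO{\eps}) \cdot OPT$, which becomes a $(1+\eps)$-approximation after rescaling $\eps$ by a constant. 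The overall runtime is $\BIGO{n} + m \cdot \BIGO{f(1/\eps,\eps)} = \BIGO{n\eps + 1} \cdot f(1/\eps,\eps)$, since representative selection and the $c_j$ computations take $\BIGO{n}$ time and are dominated by the black-box calls whenever $f(1/\eps,\eps) \ge \Omega(1/\eps)$.

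The main obstacle is justifying the per-group packing lower bound $OPT_j = \Omega(r_h(\rho_j)/\eps)$: this requires reading \cref{lemma:packing-lb}'s statement about the \emph{optimal} tour as actually a bound on \emph{any} valid tour visiting the given regions (since the stated optimum is itself a lower bound on any feasible length), so that it applies to the sub-tour induced by the global optimum with whatever endpoints it happens to have. A minor secondary point is the final group when it has fewer than $n_0$ regions, which can be absorbed into the preceding subproblem (growing it by at most $n_0 = \BIGO{1}$ regions) at negligible cost; subproblems inherit the disjoint convex fat body assumption for free since they are restrictions of the input.
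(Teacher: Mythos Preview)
Your proof is correct and follows essentially the same strategy as the paper: partition into groups of size $\lceil 1/\eps\rceil$, choose the minimum-inner-radius region in each as the representative, fix an arbitrary point in each representative, bound the rerouting cost by the packing lemma, and solve the resulting $\BIGO{1/\eps}$-sized subproblems with the black box. The one presentational difference is that the paper includes the degenerate point-regions $R_0=\{p_0\}$ and $R_{n+1}=\{p_{n+1}\}$ in the partition; since these have inscribed radius $0$, they automatically become the representatives of the first and last groups, so the anchors at the two ends are exactly $p_0$ and $p_{n+1}$ and there is no need for your separate endpoint subproblems, your case split on $n\eps<1$, or your absorption of a short final group.
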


\begin{proof}
We describe an algorithm achieving a $(1+\MO(\eps))$-approximation. To achieve a $(1+\eps)$-approximation, scale down $\eps$ by the appropriate factor.
 
Define $s \triangleq \left \lceil \frac{1}{\epsilon} \right \rceil$ and let $n_0$ be the constant defined in the statement of \cref{lemma:packing-lb}. We will prove the statement for all $\epsilon$ satisfying $\frac{1}{\epsilon} \geq n_0$. First, we divide the $n+2$ regions (including $R_0$ and $R_{n+1}$) into $k=\max\paren{\ceil{\frac {n+2}{s}},2}\leq\BIGO{n\eps+1}$ consecutive subsequences, each with exactly $s$ regions (except the starting and ending subsequences, which are allowed to have fewer).

Let $M_i$ be the region with minimum inscribed radius $r_h$ in the $i$th subsequence; note that $M_1=R_0$ and $M_k=R_{n+1}$. For each $i \in[1,k]$, pick an arbitrary point $p_i\in M_i$. Let $OPT'$ be the length of the shortest tour of $R_0,\dots,R_{n+1}$ that passes through all of the $p_i$. The $p_1,\dots,p_k$ form $k-1$ subproblems, each with at most $2s$ regions. Therefore, we can $(1+\eps)$-approximate $OPT'$ by $(1+\eps)$-approximating each subproblem in
$(k-1) \cdot f(2s,\epsilon)\leq \MO\left (n\eps+1\right) \cdot f\left(\frac{1}{\eps}, \eps\right)$ time.

It remains to show that $OPT'$ is a $(1+O(\epsilon))$-approximation for $OPT$. Let $r_i$ be shorthand for the radius $r_h$ of $M_i$ ($r_1=r_k=0$). By the definition of fatness, the distance between any two points in $M_i$ is at most $\BIGO{r_i}$. By following through $OPT$ and detouring to each point $p_i$, we get a path through points $p_i$ with length at most $OPT+\BIGO{\sum r_i}$, and $OPT'$ is at most this amount. 

The last remaining step is to show $\sum r_i\leq \BIGO{\epsilon \cdot OPT}$. We apply \cref{lemma:packing-lb} to each subsequence, and obtain that $r_is\leq \BIGO{OPT_i}$, where $OPT_i$ is the optimal distance to tour regions in subsequence $i$. Note that although the starting and ending subsequences can have sizes less than $s$, they satisfy $r_i=0$, so this bound holds for all subsequences. Therefore, $\sum r_i\leq \BIGO{\epsilon\cdot \sum OPT_i}\leq \BIGO{\epsilon\cdot OPT}$.
\end{proof}

\subsection{Algorithms for convex fat bodies}

Using a similar grouping argument, but using constant sized instead of $\frac{1}{\epsilon}$ sized groups, along with earlier methods of using estimates of the path length to place points on the boundaries of the convex fat bodies yields the following results.

\begin{theorem}\label{thm:fat-non-isect-d}
There is an $\BIGO{\frac{n}{\eps^{2d-2}}}$-time algorithm for touring disjoint convex fat bodies in $d$ dimensions.
\end{theorem}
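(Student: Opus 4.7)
My plan is to combine the grouping idea (\cref{lemma:grouping}) with a local variant of the pseudo-approximation (\cref{lemma:pseudo-approx}), exploiting the fat-body structure so that each region receives only $\BIGO{\eps^{-(d-1)}}$ discretization points rather than the general $(n/\eps)^{d-1}$ that would result from a direct application of \cref{lemma:pseudo-approx}.

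First I would compute a constant-factor approximation $L_{APPROX}$ of $OPT$ in $\BIGO{n\log\log n}$ time via \cref{lemma:non-intersect-d-dim-constant-approx}, specialized to single convex bodies. Then I would partition the $n$ regions into contiguous groups of constant size $n_0$ (the constant from \cref{lemma:packing-lb}) and, for each group $G_j$, compute a constant-factor estimate $L^{\text{local}}_j$ of the length of the optimal subpath through $G_j$. A short argument in the spirit of \cref{lemma:grouping} then gives $\sum_j L^{\text{local}}_j = \BIGO{OPT}$.

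For each region $R_i \in G_j$, I would apply \cref{lemma:equal-spacing-points} with a hypercube of side $\BIGO{L^{\text{local}}_j}$ centered near the representative of $G_j$ and spacing $\BIGO{\eps \cdot L^{\text{local}}_j}$, producing $\BIGO{\eps^{-(d-1)}}$ discretization points per region. Running dynamic programming across these points (as in \cref{lemma:pseudo-approx}) would take $\BIGO{\eps^{-(d-1)} \cdot \eps^{-(d-1)}} = \BIGO{\eps^{-(2d-2)}}$ per pair of adjacent regions and $\BIGO{n/\eps^{2d-2}}$ overall, using disjointness to restrict transitions to adjacent regions only.

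For the error bound, each region contributes additive error $\BIGO{\eps \cdot L^{\text{local}}_j}$, which sums (over each group's $n_0 = \BIGO{1}$ regions) to $\BIGO{\eps \sum_j L^{\text{local}}_j} = \BIGO{\eps \cdot OPT}$, yielding a $(1+\eps)$-approximation after rescaling $\eps$ by a constant.

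The hardest part will be justifying that $\BIGO{\eps^{-(d-1)}}$ local discretization points suffice even when the outer radius of $R_i$ vastly exceeds $L^{\text{local}}_j$. The argument must show that the optimal path's visit point on each $R_i$ lies within the chosen hypercube of side $\BIGO{L^{\text{local}}_j}$. I would establish this using the packing lemma (\cref{lemma:packing-lb}) inside each group together with the non-expansiveness of closest-point projection (\cref{lemma:project-contract}) to bound how far any near-optimal subpath can stray from the group's representative; discretizing only this relevant portion, rather than the full boundary of $R_i$, is what keeps the per-region point count at $\BIGO{\eps^{-(d-1)}}$. Making this ``only the relevant portion of $R_i$ matters'' argument fully rigorous, and combining it with the $\BIGO{\eps \cdot OPT}$ representative error from grouping without any logarithmic blowup, will be the technical crux.
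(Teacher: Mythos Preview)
Your overall strategy—constant-sized groups, a local length estimate per group, $\BIGO{\eps^{-(d-1)}}$ points per region via \cref{lemma:equal-spacing-points} on a local hypercube, then DP—is exactly the paper's approach. Two concrete issues:

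\textbf{The global constant approximation is both wrong and unnecessary.} Your claim that \cref{lemma:non-intersect-d-dim-constant-approx} specialized to single convex bodies runs in $\BIGO{n\log\log n}$ is off: with $|R_i|=1$ the bound there is $\BIGO{n^{2d-2}\log\log n\cdot n}=\BIGO{n^{2d-1}\log\log n}$, which would swamp the target $\BIGO{n/\eps^{2d-2}}$. Moreover, you never use $L_{APPROX}$ afterward. The paper simply omits any global estimate: since each group has $n_0=\BIGO{1}$ regions, you can apply \cref{thm:d-dim} (or the pseudo-approximation machinery) with $\eps=\Theta(1)$ to each constant-sized subproblem in $\BIGO{1}$ time to get the local length, bypassing any $n^{2d-1}$ cost.

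\textbf{The local estimate must absorb the representatives' radii.} Your ``length of the optimal subpath through $G_j$'' is not directly computable (its endpoints are unknown), and even if it were, a hypercube of that size need not contain the optimal visit points when the representative's outer radius $m_j$ exceeds the subpath length. The paper's clean fix, which resolves the issue you flag as the ``hardest part'': for consecutive representatives $M_i,M_{i+1}$ pick arbitrary $a\in M_i$, $b\in M_{i+1}$, get a $4$-approximation $D_{approx}$ of the $a$-to-$b$ tour in $\BIGO{1}$ time, and take the hypercube of side $\BIGO{D_{approx}+m_i+m_{i+1}}$ centered at $a$. A direct triangle-inequality argument shows the true optimal subpath between $M_i$ and $M_{i+1}$ lies inside. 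The error then sums to $\BIGO{\eps\sum_i(OPT_i+m_i)}$, and the packing lemma applied per group (\cref{corollary:sp-lower-bound}) gives $\sum_i m_i=\BIGO{OPT}$, so no logarithmic blowup occurs. With these two adjustments your plan goes through and matches the paper's proof.
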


\begin{proof}
We proceed in a similar fashion as \cref{lemma:grouping}, except we define $s \triangleq n_0$, i.e., using constant sized groups instead of $\lceil \frac{1}{\epsilon} \rceil$ sized groups. Let the $M_i$ be defined as in the proof of \cref{lemma:grouping}, and define $m_i$ to be the outer radius of $M_i$.

For each pair of regions $M_i,M_{i+1}$, pick arbitrary points $a\in M_i,b\in M_{i+1}$, and use \cref{thm:d-dim} to
obtain a $4$-approximation $D_{approx}$ of the length of the shortest path from $a$ to $b$ in $\BIGO{1}$ time.
Suppose that the optimal path uses $p\in M_i, q\in M_{i+1}$ and the shortest path from $a$ to $b$ has distance $OPT_{a,b}$;
by the triangle inequality, we must have
\[
    \frac 14 D_{approx} \leq OPT_{a,b} \leq OPT_i + 2m_i + 2m_{i+1}.
\]

Now, consider the path where we start at $p$ and then travel along the line segment from $p$ to $a$, the approximate path of length $D_{approx}$ from $a$ to $b$ (visiting the regions in between $M_i$ and $M_{i+1}$), and the line segment from $b$ to $q$. This path has length at most $D_{approx}+2m_i+2m_{i+1}$, and upper bounds the length of the optimal path between $p$ and $q$. So, the entire path between $p$ and $q$ lies within a ball of radius $D_{approx}+4m_i+2m_{i+1}$ centered at $a$; call this ball $L$. Note that $L$ has radius $l = D_{approx}+4m_i+2m_{i+1} \leq \BIGO{OPT_i+m_i+m_{i+1}}$.

For each region $R_j$ between $M_i$ and $M_{i+1}$ inclusive, we apply \cref{lemma:equal-spacing-points} with the region and a hypercube containing
$L$, which has side length $2l$. Note that points are placed twice on each $M_i$; this is fine. \cref{lemma:equal-spacing-points} guarantees
the existence of a point in $R_j$ that is $2l\epsilon$ close to the point $OPT$ uses by placing $\BIGO{\frac 1{\epsilon^{d-1}}}$ points on each region.

We now bound the difference between the optimal and the shortest paths using only the points we placed.
The difference is at most
\[
    \sum_{i=1}^{k} \left(2 l_i\epsilon\cdot n_0\right)
    =\epsilon\cdot  \BIGO{\sum_{i=1}^{k} l_i}
    =\epsilon\cdot  \BIGO{OPT + \sum_{i=1}^{k} m_i}=\BIGO{\epsilon\cdot OPT},
\]
where the last step is due to \cref{corollary:sp-lower-bound} applied on each subsequence: in particular, the optimal path length visiting all the regions in subsequence $i$ has length at least $\Omega(m_i)$, so summing this inequality over all subsequences, we have $\sum_{i=1}^{k} m_i \leq \BIGO{OPT}$.

We have now reduced the problem to the case where each region has only finitely many points.
We finish with dynamic programming. Since we have $\BIGO{\frac 1{\epsilon^{d-1}}}$ points on each of the $n$ regions, the runtime is
$\BIGO{\frac n{\epsilon^{2d-2}}}$, as desired.
\end{proof}

\begin{theorem}\label{thm:fat-non-isect-2d-1}
There is an $\BIGO{\frac{n}{\eps}}$-time algorithm for touring disjoint convex fat bodies in $2$ dimensions.
\end{theorem}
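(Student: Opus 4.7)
The plan is to adapt the proof of \cref{thm:fat-non-isect-d} to dimension two and replace its quadratic dynamic programming transitions with the linear-time algorithm of \cref{corollary:linear-touring} for touring finite point sets forming disjoint convex polygons, thereby shaving one factor of $1/\eps$. First I would perform exactly the grouping construction of \cref{thm:fat-non-isect-d}: split the $n+2$ regions into $k = \BIGO{n}$ consecutive groups of size $n_0 = \BIGO{1}$, pick the minimum-inner-radius representative $M_i$ in each group, and use \cref{thm:2d-unions} with $1+\eps=2$ to produce a constant-factor estimate $D_{\text{approx}}$ of the optimal path length between each pair of consecutive representatives. Let $\mathcal H$ be an enclosing axis-aligned square of side length $2l = \BIGO{OPT_i + m_i + m_{i+1}}$ containing the ball $L$ that in turn contains the relevant portion of the optimum. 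Then apply \cref{lemma:equal-spacing-points} with spacing parameter $\eps$ and hypercube $\mathcal H$ to each region $R_j$ in the current subproblem to obtain a set $S_j \subset R_j$ of $\BIGO{1/\eps}$ points, with the same telescoping argument as in \cref{thm:fat-non-isect-d} giving a cumulative discretization error of $\BIGO{\eps \cdot OPT}$.

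Next comes the 2D-specific observation that unlocks the linear DP. Restrict $S_j$ to the closest-point projections of only those hypercube grid points lying strictly outside $R_j$ (easily detected by checking whether the projection equals the original point). All retained points then lie on the convex curve $\partial R_j$ and are therefore automatically in convex position; they can be sorted counterclockwise in $\BIGO{1/\eps}$ time by merging the four monotone sequences arising from the four sides of $\mathcal H$. The convex hulls of the resulting polygons are pairwise disjoint since the $R_j$ are, and the approximation guarantee is preserved up to a constant factor, because every point in $\partial R_j \cap \mathcal H$ has a nearby grid point on $\partial \mathcal H$ lying outside $R_j$ (any boundary point of a convex body admits an outward direction). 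With the input now in the form required by \cref{corollary:linear-touring}, I would run that corollary on each of the $\BIGO{n}$ subproblems, each consisting of $\BIGO{1}$ regions with $\BIGO{1/\eps}$ vertices, for $\BIGO{1/\eps}$ expected time per subproblem and $\BIGO{n/\eps}$ expected time overall.

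The main obstacle will be step three: verifying rigorously that filtering the $S_j$ to boundary-projecting grid points preserves the $\eps$-approximation quality without blowing up the point count, and then checking that the resulting per-region polygons have disjoint convex hulls in the sense required by \cref{corollary:linear-touring}. Once this is in place, the error analysis is simply the sum of the discretization error inherited from \cref{thm:fat-non-isect-d} and the exact cost of the discretized DP, both of which were already bounded in previous results.
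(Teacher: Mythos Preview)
Your proposal is correct and follows essentially the same route as the paper: reuse the constant-size grouping and point placement of \cref{thm:fat-non-isect-d} in dimension two, then replace the quadratic DP by \cref{corollary:linear-touring}. The paper's own proof is a single sentence to this effect and does not spell out how to meet the preconditions of \cref{corollary:linear-touring} (points in convex position, listed counterclockwise, with pairwise disjoint hulls); your paragraph on filtering to projections of exterior grid points and merging four monotone chains is exactly the kind of detail the paper elides. The disjoint-hulls requirement is immediate since $S_j\subset R_j$ and $R_j$ is convex, so $\operatorname{conv}(S_j)\subset R_j$. Your monotonicity claim for the projected chains is justified by the fact that outward normal segments to a convex body do not cross in the exterior, so closest-point projection from $\partial\mathcal H\setminus R_j$ to $\partial R_j$ is order-preserving. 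The only spot to tighten is the filtering step: the \emph{nearest} grid point to the point $s\in\partial\mathcal H$ produced in the proof of \cref{lemma:equal-spacing-points} can land inside $R_j$, but since $R_j$ intersects each side of $\mathcal H$ in a single interval, the grid neighbor of $s$ on the opposite side of that interval is outside $R_j$ and still within $\BIGO{r\eps}$ of $s$; this costs only a constant in the spacing, which you already anticipated.
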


\begin{proof}
This is almost the same as \cref{thm:fat-non-isect-d}, where $\BIGO{\frac{1}{\epsilon^{d-1}}} = \BIGO{\frac{1}{\epsilon}}$ points are placed on each body,
except that we use \cref{corollary:linear-touring} to more efficiently solve the case where each region is a finite point set.
\end{proof}

% \begin{corollary}\label{corollary:fat-constant-approx}
% For convex fat bodies, we can approximate $OPT$ to within a constant factor in $\BIGO{n}$ time.
% \end{corollary}

\begin{theorem}\label{thm:fat-non-isect-2d-2}
There is an $\BIGO{\frac{n^2}{\sqrt{\eps}} }$-time algorithm for touring disjoint convex fat bodies in $2$ dimensions.
\end{theorem}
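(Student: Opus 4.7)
The plan is to combine the polygonal approximation of \cref{lemma:half-eps-intersect} with the grouping/packing argument from the proof of \cref{thm:fat-non-isect-d}, and then invoke the exact algorithm of Tan and Jiang \cite{starburst-improve} for touring disjoint convex polygons as a black box. The square-root savings in $\eps$ relative to \cref{thm:fat-non-isect-2d-1} come from the fact that \cref{lemma:half-eps-intersect} produces only $\BIGO{\eps^{-1/2}}$ vertices per region, whereas the point-discretization used in \cref{thm:fat-non-isect-2d-1} needs $\BIGO{\eps^{-1}}$ points per region.

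First I would partition $R_0,R_1,\dots,R_{n+1}$ into consecutive subsequences of constant size $n_0$ (the packing constant from \cref{lemma:packing-lb}) and pick a representative $M_i$ of minimum inscribed radius in each subsequence, with outer radius $m_i$. For each consecutive pair $(M_i, M_{i+1})$, I would use \cref{thm:fat-non-isect-2d-1} (or any constant-approximation subroutine) on the regions between them with accuracy parameter $\Theta(1)$ to compute a constant-factor estimate $D_{approx,i}$ of the true subproblem optimum $OPT_i$; since each subsequence has $\BIGO{1}$ regions this costs $\BIGO{n}$ in total. Exactly as in \cref{thm:fat-non-isect-d}, the optimal subpath traversing subsequence $i$ lies entirely in a ball $L_i$ of radius $l_i = \BIGO{OPT_i + m_i + m_{i+1}}$.

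Next, for each region $R_j$ in subsequence $i$, apply \cref{lemma:half-eps-intersect} to an axis-aligned square of side $\Theta(l_i)$ containing $L_i$, with accuracy parameter $\eps' = \Theta(\eps)$, producing a convex polygon $P_j \subseteq R_j$ on $\BIGO{\eps^{-1/2}}$ vertices whose boundary is within distance $\BIGO{l_i \eps'}$ of every point in $R_j \cap L_i$. Because $P_j \subseteq R_j$ and the $R_j$ are disjoint, the $P_j$ are disjoint convex polygons. Running the Tan--Jiang algorithm \cite{starburst-improve} on the $P_j$ yields the exact optimum tour in $\BIGO{|V|\, n}$ time, where $|V| = \BIGO{n/\sqrt{\eps}}$, giving the claimed $\BIGO{n^2/\sqrt{\eps}}$ bound, dominating all preceding steps.

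For the error analysis, take an optimal tour $p_0, p_1, \dots, p_{n+1}$ and project each $p_j$ to its closest point $p_j' \in P_j$. Whenever $R_j$ lies in subsequence $i$, we have $p_j \in R_j \cap L_i$, so $\vmag{p_j-p_j'} = \BIGO{l_i \eps'}$. By the triangle inequality the projected tour has length at most
\[
OPT + 2\sum_j \vmag{p_j-p_j'} \le OPT + \BIGO{\eps'}\cdot \sum_i n_0\, l_i = OPT + \BIGO{\eps'}\paren{OPT + \sum_i m_i},
\]
and applying \cref{corollary:sp-lower-bound} subsequence-by-subsequence gives $\sum_i m_i = \BIGO{OPT}$, so the total error is $\BIGO{\eps'\cdot OPT}$; choosing $\eps'$ to be a small constant multiple of $\eps$ yields a $(1+\eps)$-approximation. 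The only real obstacle is verifying that each $P_j$ genuinely captures the relevant portion of $R_j$, which reduces to showing that the optimal subpath does not leave $L_i$---this is exactly the argument already carried out in \cref{thm:fat-non-isect-d} and transfers verbatim.
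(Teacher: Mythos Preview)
Your proposal is correct and follows essentially the same approach as the paper: reuse the constant-size grouping and localization argument from \cref{thm:fat-non-isect-d}, replace the $\BIGO{1/\eps}$-point discretization of each region by the $\BIGO{1/\sqrt{\eps}}$-vertex polygonal approximation of \cref{lemma:half-eps-intersect}, and then invoke the exact Tan--Jiang algorithm on the resulting disjoint convex polygons. The paper's proof is a three-sentence sketch of exactly this; your write-up simply makes the localization, projection, and error-summation steps explicit (and correctly observes that $P_j\subseteq R_j$ guarantees disjointness of the polygons).
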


\begin{proof}
\cref{thm:fat-non-isect-2d-1} through the construction of \cref{thm:fat-non-isect-d} places $\BIGO{\frac{1}{\epsilon}}$ points on an arc of length $R$ on each convex fat body to guarantee additive error $\leq \epsilon R$. We can achieve the same additive error using a convex polygon with $\BIGO{\epsilon^{-1/2}}$ vertices using \cref{lemma:half-eps-intersect}. Then, recall that \cite{starburst-improve} gives an $\BIGO{|V|n}$-time exact algorithm for touring convex polygons, so we can recover a solution in $\BIGO{|V|n} = \BIGO{(n \cdot \epsilon^{-1/2}) \cdot n}$ time.
\end{proof}

\section{Balls}
\label{section:balls}

We can improve the results in previous sections by discretizing the surfaces non-uniformly, placing fewer points on areas of each hypersphere that are farther away from the previous and next ball in the sequence. This reduces the dependence on $\eps$ by a square root compared to \cref{thm:fat-non-isect-d} and \cref{thm:fat-non-isect-2d-1}. Omitted proofs for this section may be found in \cref{sec:omitted-proofs-balls}. We first state the results:

\nonIsectDisksTwoD

\begin{theorem}\label{thm:non-isect-balls}
There is an $\BIGO{\frac{n}{\eps^{d-1}}\log^2\frac{1}{\eps}+\frac{1}{\eps^{2d-2}}}$-time algorithm for touring disjoint balls in $d$ dimensions.

\end{theorem}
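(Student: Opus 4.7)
The plan is to combine the grouping technique of \cref{lemma:grouping} with a non-uniform angular discretization of each sphere that exploits a second-order stationarity of shortest paths at optimal contact points on smooth balls. The target savings compared to \cref{thm:fat-non-isect-d} comes from using angular spacing $\Theta(\sqrt{\eps})$ instead of $\Theta(\eps)$ on the relevant portion of each boundary, and from a dyadic refinement near the ``line of travel'' that handles configurations in which the optimal path nearly grazes the interior of a ball.

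\textbf{Step 1: Grouping and local length estimate.} I would first obtain a constant-factor estimate of $OPT$ by running \cref{thm:fat-non-isect-d} with accuracy parameter $\Theta(1)$ (cost $\BIGO{n}$) and apply \cref{lemma:grouping} to split the input into $\BIGO{n\eps+1}$ independent subproblems, each touring $\BIGO{1/\eps}$ disjoint balls. For each subproblem I would again run \cref{thm:fat-non-isect-d} with constant accuracy to obtain a local length estimate $L$; the single most expensive subproblem costs $\BIGO{(1/\eps)/\eps^{2d-2}}=\BIGO{1/\eps^{2d-1}}$ in the worst case, which is subsumed by the $\BIGO{1/\eps^{2d-2}}$ additive term after noting that this subproblem cost is dominated by the per-ball cost once $n\eps\gtrsim 1$.

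\textbf{Step 2: Non-uniform discretization via second-order stationarity.} The geometric observation is that at an interior optimal contact point $p_i^\star$ on $\partial B_i$, the total tour length is \emph{stationary} under tangential displacements: the boundary normal bisects the angles made with the incoming and outgoing segments (a KKT condition on a strictly convex ball), so moving along $\partial B_i$ by arc length $\delta$ perturbs the total length by $\BIGO{\delta^2/r_i}$ rather than $\BIGO{\delta}$. Consequently, angular spacing $\Theta(\sqrt{\eps})$ on $\partial B_i$ suffices for an additive error of $\BIGO{\eps\cdot r_i}$ per ball; summed across the $\BIGO{1/\eps}$ balls in a subproblem this is $\BIGO{\eps\cdot OPT}$ by the packing lemma \cref{lemma:packing-lb} and \cref{corollary:sp-lower-bound}. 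To realize this, I would use \cref{lemma:equal-spacing-points} with parameter $\Theta(\sqrt{\eps})$ inside a hypercube of side length $\BIGO{L}$, giving $\BIGO{1/\eps^{(d-1)/2}}$ points per ball. To cover the non-stationary ``grazing'' case in which the optimal path crosses the interior of $B_i$, I would additionally place $\BIGO{\log(1/\eps)}$ dyadically-refined layers of points near the closest-point directions on $\partial B_i$ toward $B_{i-1}$ and $B_{i+1}$, so that for any entry/exit geometry there is a grid candidate within the allowable second-order budget. The total number of candidate points per ball is $\BIGO{\log(1/\eps)/\eps^{(d-1)/2}}$.

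\textbf{Step 3: DP and complexity.} On the discretized input I would run the standard touring DP, which with $|S_i|=\BIGO{\log(1/\eps)/\eps^{(d-1)/2}}$ costs $\BIGO{|S_i|\cdot|S_{i+1}|}=\BIGO{\log^2(1/\eps)/\eps^{d-1}}$ per adjacent pair of balls. Summed over the $n$ balls this gives $\BIGO{n\log^2(1/\eps)/\eps^{d-1}}$; adding the $\BIGO{1/\eps^{2d-2}}$ from Step~1 yields the claimed bound.

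\textbf{Main obstacle.} The hard part is proving the $\Theta(\sqrt{\eps})$ spacing is correct in full generality. Concretely I must (i) set up the variational characterization of optimal contact points on balls (a convex program) and prove that at every ``bending'' contact the second-order expansion $|\Delta L|\le C\delta^2/r_i$ holds with a constant independent of the surrounding geometry, (ii) show that the $\BIGO{\log(1/\eps)}$ dyadic layers near the inter-ball chord correctly cover all grazing and near-grazing configurations (so that the claim also survives when a ball is nearly ignored by the optimal path), and (iii) telescope the per-ball $\BIGO{\eps\cdot r_i}$ errors across an entire subproblem using \cref{corollary:sp-lower-bound} to control $\sum r_i$ by $\BIGO{OPT}$ irrespective of how the radii are distributed. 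Once these analytic pieces are in place, the runtime accounting in Step~3 is a direct calculation.
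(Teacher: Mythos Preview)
Your high-level idea---angular spacing $\Theta(\sqrt{\eps})$ justified by the reflection/KKT condition, plus dyadic refinement near the direction to each neighboring ball---is exactly what the paper uses. However, the proposal has a genuine gap in the error accounting, and as a consequence you have misidentified where the additive $\frac{1}{\eps^{2d-2}}$ term comes from.

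\textbf{The central gap: the few largest balls.} Your obstacle (iii) asserts that $\sum_i r_i\le \BIGO{OPT}$ via \cref{corollary:sp-lower-bound}. This is false: \cref{corollary:sp-lower-bound} only bounds $r_i$ for $i\ge n_0$, and \cref{lemma:sp-logn} shows that even $\sum_{i\ge n_0} r_i$ is in general $\Theta(OPT\log n)$, not $\BIGO{OPT}$. More seriously, $r_1$ and $r_2$ are completely unconstrained by packing; two enormous balls can be nearly tangent near $p_0$ with $OPT$ arbitrarily small. For such balls the per-ball error $\BIGO{\eps r_i}$ is not $\BIGO{\eps\cdot OPT}$. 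The paper resolves this by discretizing the two largest balls with the \emph{naive} uniform $\BIGO{1/\eps^{d-1}}$-point construction inside a hypercube of side $\BIGO{OPT}$; the DP transition between those two balls is what produces the $\frac{1}{\eps^{2d-2}}$ term. Your Step~1 (a constant-factor approximation via \cref{thm:fat-non-isect-d}) costs only $\BIGO{n}$ and is not the source of that term, so the runtime accounting in your plan does not close.

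\textbf{Grouping does not help here.} The paper does \emph{not} invoke \cref{lemma:grouping} for this theorem. Grouping into size-$\frac{1}{\eps}$ subproblems leaves each subproblem with its own $n_0-1$ ``large'' balls whose radii are not bounded by the local $OPT$, so the same issue recurs per subproblem (now $\BIGO{n\eps}$ times, which is worse). The paper instead works globally: it applies \cref{corollary:sp-lower-bound} to the full sequence, handles the top two balls with the naive scheme, and---to kill the $\log n$ coming from $\sum_{i\ge 3}\frac{1}{i}$---lets the accuracy $\eps_i$ vary with the rank $i$ of the ball (the $d$-dimensional analogue of the $\eps_i=\Theta(\eps i^{2/3}/n^{2/3})$ choice in the proof of \cref{thm:non-isect-disks}). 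Your uniform-$\eps$ plan would at best recover the bound with an extra polylogarithmic factor even after fixing the large-ball issue.
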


The crucial lemma we use for these results follows. We defer its proof to the appendix.

\begin{lemma}\label{lemma:local-global-opt}
A tour of disjoint balls is globally optimal if and only if for each intermediate ball, the tour either passes straight through the ball or perfectly reflects off its border (see \Cref{fig:local-global-opt} for an example).
\end{lemma}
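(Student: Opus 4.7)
The plan is to reduce global optimality to a first-order (KKT) condition at each intermediate $p_i$ via convexity, and then show the KKT condition at $p_i$ is equivalent to ``passes straight through $B_i$'' or ``perfectly reflects off $\partial B_i$.''

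First, observe that $D(p_1,\dots,p_n) = \sum_{i=0}^{n}\vmag{p_i - p_{i+1}}$ is convex on the convex feasible region $B_1\times\cdots\times B_n$, being a sum of norms of affine functions. Hence every local minimum is a global minimum, and global optimality is equivalent to coordinate-wise stationarity: for each $i\in[1,n]$, $p_i$ minimizes $\vmag{p_{i-1}-p_i}+\vmag{p_i-p_{i+1}}$ over $p_i\in B_i$. Since the balls are pairwise disjoint, $p_i\notin\{p_{i-1},p_{i+1}\}$, so $D$ is differentiable at every feasible tour, with partial gradient
\[
g_i \triangleq \nabla_{p_i} D = \frac{p_i - p_{i-1}}{\vmag{p_i - p_{i-1}}} + \frac{p_i - p_{i+1}}{\vmag{p_i - p_{i+1}}} = v_{\text{in}} - v_{\text{out}},
\]
where $v_{\text{in}}$ and $v_{\text{out}}$ are the unit vectors along the incoming and outgoing segments at $p_i$. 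The KKT condition for minimization over $B_i$ is $-g_i\in N_{B_i}(p_i)$, the outward normal cone.

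I would then case-split on the position of $p_i$. If $p_i$ lies in the interior of $B_i$, then $N_{B_i}(p_i)=\{0\}$, forcing $v_{\text{in}}=v_{\text{out}}$; since both are unit vectors, this says $p_{i-1},p_i,p_{i+1}$ are collinear with $p_i$ between them, i.e., the tour passes straight through $B_i$. If $p_i\in\partial B_i$, then $N_{B_i}(p_i)=\{\lambda\hat n:\lambda\ge0\}$ along the outward unit normal $\hat n$; the condition $v_{\text{out}}-v_{\text{in}}=\lambda\hat n$ with $\lambda\ge0$ equates tangential components of $v_{\text{in}}$ and $v_{\text{out}}$ while flipping the normal component, which unpacks to $v_{\text{out}}=v_{\text{in}}-2(v_{\text{in}}\cdot\hat n)\hat n$ — the law of reflection. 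The nonnegativity $\lambda\ge0$ is automatic since $v_{\text{in}}\cdot\hat n\le0$ (the segment $p_{i-1}\to p_i$ enters $B_i$ through its boundary, using disjointness).

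Combining both cases with the convexity reduction yields both directions of the iff: globally optimal iff KKT holds at each intermediate $p_i$ iff the tour passes straight through or reflects at each intermediate $B_i$. The main subtlety I expect is in the boundary case, namely checking that the KKT multiplier sign corresponds to genuine reflection rather than refraction into $B_i$, and verifying that degenerate configurations (e.g., the straight line through $p_{i-1}$ and $p_{i+1}$ being tangent to $\partial B_i$ at $p_i$) are consistent with both descriptions.
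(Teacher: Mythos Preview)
Your approach---convexity reduces global optimality to the KKT condition at each $p_i$, then read off the two geometric cases from that condition---is essentially the paper's; the paper phrases the converse direction via explicit SOCP duality (setting $z_i=\unit(p_{i+1}-p_i)$), which is just another packaging of the KKT-sufficiency you invoke.

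One small correction on the sign: disjointness alone does not force $v_{\text{in}}\cdot\hat n\le 0$, since the segment from $p_{i-1}\notin B_i$ to a boundary point $p_i$ can pass through the interior of $B_i$ first and exit at $p_i$. The nonnegativity of $\lambda$ is rather part of what ``perfectly reflects'' means (incoming and outgoing rays on the outside of the tangent plane), so it is a hypothesis in the if-direction and a consequence of KKT in the only-if direction---not a consequence of disjointness.
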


\begin{figure}[h]
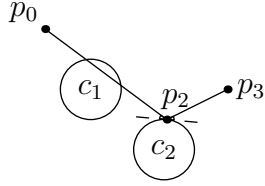

    \centering
    \begin{asy}
    import olympiad;
    
    unitsize(0.4cm);
    pair A = (0,0);
    pair C = (4,-3);
    pair D = (6,-2);
    draw(A--C--D);
    pair perp = unit((unit(A-C)+unit(D-C))/2);
    pair c2 = C-perp;
    draw(circle(c2,1));
    dot("$p_0$",A,NW);
    dot("$p_2$",C,1.2*dir(70));
    dot("$p_3$",D);
    pair c1 = (1.5,-2);
    label("$c_1$",c1);
    label("$c_2$",c2);
    draw(circle(c1,1));
    draw((C+dir(90)*perp)--(C-dir(90)*perp),dashed);
    draw(anglemark(A,C,C+dir(90)*perp));
    draw(anglemark(C-dir(90)*perp,C,D));
    \end{asy}
    
    \caption{\Cref{lemma:local-global-opt}: An optimal tour of two unit disks. The tour starts at $p_0$, passes through $c_1$, reflects off $c_2$ at $p_2$, and ends at $p_3$. The optimality of this tour may be certified by setting the dual variables $z_0=z_1=\unit(p_2-p_0)$ and $z_2=\unit(p_3-p_2)$ as defined in the proof of \cref{lemma:socp} in the appendix.}
    \label{fig:local-global-opt}
\end{figure}

We start with the special case of unit disks and then generalize to non-unit disks (\cref{thm:non-isect-disks}). First, we provide intuition through a simple example where $n=1$ and $R_1$ is a line.

\begin{example}\label{example:simple-reflect}
Given start and endpoints $p_0=(-1,1)$ and $p_2=(1,1)$, select $p_1$ from the $x$-axis such that $OPT=\vmag{p_0-p_1}+\vmag{p_1-p_2}$ is minimized.
\end{example}

\begin{proof}[Solution]
To solve this exactly, choose $p_1=(0,0)$ such that the path perfectly reflects off the $x$-axis. This gives $OPT=2\sqrt 2$. 

Now suppose that we are only interested in an approximate solution. Tile the $x$-axis with points at regular intervals such that every two consecutive points are separated by distance $d$, and round $p_1$ to the closest such point $p_1'$. Since $\vmag{p_1-p_1'}\le d$,
\begin{align*}
OPT'&\triangleq \vmag{p_0-p_1'}+\vmag{p_1'-p_2}\\
&\le \sqrt{1+(1-d)^2}+\sqrt{1+(1+d)^2}\le \sqrt{2-2d+d^2}+\sqrt{2+2d+d^2}\\
&\le \sqrt 2(1-d/2+1+d/2+\BIGO{d^2})\le 2\sqrt 2(1+\BIGO{d^2}).
\end{align*}

So, to attain $OPT'\le (1+\eps)OPT$, it suffices to take $d=\Theta(\sqrt \eps)$ rather than $d=\Theta(\eps)$ because $p_1'-p_1$ is parallel to the $x$-axis. We can apply a similar idea to replace the middle region with a point set when $R_1$ is a circle rather than a line since circles are locally linear. However, this doesn't quite work when either $\vmag{p_0-p_1}$ or $\vmag{p_1-p_2}$ is small. For example, if $p_0$ was very close to the $x$-axis (say, $p_0=(-d,d)$) then rounding $p_1$ to the nearest $p_1'$ could cause $OPT'$ to increase by $\Theta(d)\gg d^2$. So when we replace each circle with a point set, we need to be careful about how we handle two circles that are close to touching; the solution is to space points more densely near where they touch.
\end{proof}

\begin{theorem}\label{thm:non-isect-unit-disks}
There is an $\BIGO{\frac{n}{\sqrt{\eps}}\log\frac{1}{\eps}}$-time algorithm for touring disjoint unit disks.
\end{theorem}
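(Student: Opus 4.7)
The plan is to discretize each $\partial R_i$ into a non-uniformly spaced point set $S_i$ with $|S_i|=\BIGO{\log(1/\eps)/\sqrt\eps}$ and then invoke \cref{corollary:linear-touring}, which solves the discrete touring-regions problem on disjoint convex polygons in expected linear time. As motivated by \cref{example:simple-reflect}, a uniform spacing of $\Theta(\sqrt\eps)$ suffices wherever both adjacent tour segments have length $\Omega(1)$: rounding a continuous optimum $p_i$ to the nearest $\tilde p_i$ at distance $\delta_i$ increases the sum $\vmag{p_{i-1}-p_i}+\vmag{p_i-p_{i+1}}$ by only $\BIGO{\delta_i^2}$, because the first-order term vanishes by the local optimality supplied by \cref{lemma:local-global-opt} (the tour either reflects or passes straight through). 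However, when the previous (or next) disk is close to $R_i$ the adjacent segment may be short, so I will need finer spacing near the corresponding ``contact'' region.

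Concretely, for each $i$ let $c_{i,-}$ be the point on $\partial R_i$ closest to $R_{i-1}$ and $d_{i,-}$ the gap between $R_{i-1}$ and $R_i$ (defining $c_{i,+},d_{i,+}$ symmetrically with respect to $R_{i+1}$). I place points on $\partial R_i$ with spacing $\Theta\bigl(\sqrt\eps\,(\sqrt{d_{i,-}}+\alpha)\bigr)$ at arc distance $\alpha$ from $c_{i,-}$, capped above by the uniform baseline $\Theta(\sqrt\eps)$ and below by $\mathrm{poly}(\eps)$, and do the same near $c_{i,+}$. Integrating the reciprocal spacing via the substitution $u=\alpha/\sqrt{d_{i,-}}$ yields $|S_i|=\BIGO{\log(1/\eps)/\sqrt\eps}$, so the total point count is $\BIGO{n\log(1/\eps)/\sqrt\eps}$. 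The convex hulls of the $S_i$ lie inside pairwise disjoint disks, so \cref{corollary:linear-touring} applies and produces the exact optimum of the discrete tour in $\BIGO{n\log(1/\eps)/\sqrt\eps}$ expected time.

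The correctness analysis rests on the geometric estimate that a point $p\in\partial R_i$ at arc distance $\alpha$ from $c_{i,-}$ satisfies $\mathrm{dist}(p,R_{i-1})=\Omega(d_{i,-}+\alpha^2)$ (two tangent unit circles separate quadratically in arc length, and an initial gap $d$ contributes an additive term). Consequently, for a continuous optimum $p_i$ at arc distance $\alpha$ from $c_{i,-}$ one has $\ell_{i-1}\triangleq\vmag{p_{i-1}-p_i}\ge \mathrm{dist}(p_i,R_{i-1})=\Omega(d_{i,-}+\alpha^2)$, matched symmetrically for $\ell_i\triangleq\vmag{p_i-p_{i+1}}$. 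Our chosen spacing then gives $\delta_i=\BIGO{\sqrt{\eps\min(\ell_{i-1},\ell_i)}}$, and a second-order expansion of $\vmag{p_{i-1}-\cdot}+\vmag{\cdot-p_{i+1}}$ restricted to $\partial R_i$ at the stationary point $p_i$ (exactly as in \cref{example:simple-reflect}, with a curvature correction that is also $\BIGO{\delta_i^2}$) yields a per-vertex error of $\BIGO{\delta_i^2/\min(\ell_{i-1},\ell_i)}=\BIGO{\eps\min(\ell_{i-1},\ell_i)}$, which telescopes to $\BIGO{\eps\cdot OPT}$. The main obstacle I expect is handling the ``almost-tangent'' regime $d_{i,\pm}<\mathrm{poly}(\eps)$ cleanly: one must verify that truncating the spacing from below preserves the approximation guarantee, which I would do by a packing-based charging argument exploiting the $OPT=\Omega(n)$ lower bound from \cref{lemma:packing-lb} to absorb the sub-$\mathrm{poly}(\eps)$ errors into the slack already available at the unit scale.
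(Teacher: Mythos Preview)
Your proposal follows essentially the same strategy as the paper: non-uniform discretization of each circle, local optimality (\cref{lemma:local-global-opt}) to kill the first-order rounding term, a second-order bound of the form $\delta_i^2/(\text{segment length})$, and \cref{corollary:linear-touring} to finish. Two points are worth flagging.

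First, the paper's spacing is simpler than yours: it takes $\spacing(\phi)=\max(\eps,\sqrt\eps\,\phi)$, \emph{independent of the gap} $d_{i,\pm}$. After reorienting so that $c_i=(0,y)$ and $c_{i+1}=(0,-y)$ with $y>1$, one has $\text{ycoord}(p_i)=y-\cos\phi\ge 1-\cos\phi=\Omega(\phi^2)$ regardless of how small the gap $y-1$ is; the quadratic separation you need comes from the curvature of disk $i$ itself, not from the distance to disk $i\pm 1$. This dissolves your ``almost-tangent'' concern and makes the lower cap on the spacing unnecessary.

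Second, your per-vertex error accounting slips: with $\delta_i=\BIGO{\sqrt{\eps\min(\ell_{i-1},\ell_i)}}$ one gets $\delta_i^2/\min(\ell_{i-1},\ell_i)=\BIGO{\eps}$, not $\BIGO{\eps\min(\ell_{i-1},\ell_i)}$, so the error does \emph{not} telescope against $OPT$. The paper proves $\text{extra}_1(i),\text{extra}_2(i)=\BIGO{\eps}$ for each $i$ (via a case analysis splitting the contribution of $o_i$ and $o_{i+1}$), giving total additive error $\BIGO{n\eps}$, and then uses the packing bound $OPT=\Omega(n)$ for unit disks (\cref{lemma:packing-lb} with $n_0=3$) to conclude $\BIGO{n\eps}=\BIGO{\eps\cdot OPT}$. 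So the packing lemma is needed for the entire analysis, not only as a patch for the truncated regime.
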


\begin{proof}
We describe how to place a set of $\BIGO{\frac{1}{\sqrt \eps}\log \frac{1}{\eps}}$ points $S_i$ on each unit circle $c_i$ so that the length of an optimal path increases by at most $\BIGO{n\eps}$ after rounding each $p_i$ to the nearest $p_i'\in S_i$.

Define $\unit(x)=\frac{x}{\vmag{x}}$. Let $o_i\triangleq p'_{i}-p_i$ for all $i\in [0,n+1]$ (note that $o_0=o_{n+1}=0$), where $o$ stands for offset. Also, define vectors 
\begin{equation*}
d_i\triangleq p'_{i+1}-p'_i=p_{i+1}+o_{i+1}-p_i-o_i
\end{equation*}
and scalars
\begin{equation*}
a_i\triangleq d_i\cdot \unit(p_{i+1}-p_i)=\vmag{p_{i+1}-p_i}+(o_{i+1}-o_i)\cdot \unit(p_{i+1}-p_i), 
\end{equation*}
where $a_i$ is the component of $d_i$ along the direction of $p_{i+1}-p_i$. Then the total path length after rounding each $p_i$ to $p_i'$ is:
\begin{align*}
\sum_{i=0}^n\vmag{d_i}
&=\sum_{i=0}^n\sqrt{[d_i\cdot \unit(p_{i+1}-p_i)]^2+[d_i\cdot \unit(p_{i+1}-p_i)^{\perp}]^2} \nonumber \\
&=\sum_{i=0}^n\sqrt{a_i^2+[(o_{i+1}-o_i)\cdot \unit(p_{i+1}-p_i)^{\perp}]^2} \nonumber\\
&=\sum_{i=0}^n\left[a_i+\left(\sqrt{a_i^2+[(o_{i+1}-o_i)\cdot \unit(p_{i+1}-p_i)^{\perp}]^2}-a_i\right)\right] \nonumber\\
&=OPT+\sum_{i=1}^n \overbrace{o_i\cdot (\unit(p_i-p_{i-1})-\unit(p_{i+1}-p_i))}^{\text{extra}_1(i)} \nonumber\\
&\quad +\sum_{i=0}^n\overbrace{\left(\sqrt{a_i^2+[(o_{i+1}-o_i)\cdot \unit(p_{i+1}-p_i)^{\perp}]^2}-a_i\right)}^{\text{extra}_2(i)}\\
&=OPT+\sum_{i=1}^n\text{extra}_1(i)+\sum_{i=0}^n\text{extra}_2(i).
\end{align*}
We defer the construction of the sets $S_i$ so that both extra terms are small to \cref{lemma:smart-choose}. Then we can finish with dynamic programming (\cref{corollary:linear-touring}).
\end{proof}

\begin{lemma}\label{lemma:smart-choose}
It is possible to choose $S_i$ in the proof of \Cref{thm:non-isect-unit-disks} such that $|S_i|\le \BIGO{\frac{1}{\sqrt \eps}\log \frac{1}{\eps}}$, $\text{extra}_1(i)\le \BIGO{\eps}$, and $\text{extra}_2(i)\le \BIGO{\eps}$ for all $i$.
\end{lemma}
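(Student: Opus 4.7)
The plan is to construct $S_i$ on each unit circle $c_i$ as a union of a uniform background grid and two non-uniformly dense clusters localized near the arc points $t_{-}, t_{+}\in c_i$ that are closest to $c_{i-1}$ and $c_{i+1}$ respectively. First, lay down a uniform base grid with arc spacing $\Theta(\sqrt\eps)$, contributing $\BIGO{1/\sqrt\eps}$ points. Then for each neighbor $c_{i\pm 1}$ at closest distance $D$ from $c_i$, place further points near the corresponding tangent point with arc spacing
\[
s(\alpha) \;=\; \Theta\bigl(\max(\alpha,\sqrt{D})\cdot \sqrt\eps\bigr)
\]
at arc distance $\alpha$ from the tangent, capped below at $\Theta(\eps)$. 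A dyadic count shows each annulus $\alpha\in[2^k\sqrt\eps,2^{k+1}\sqrt\eps]$ contributes $\BIGO{1/\sqrt\eps}$ points independent of $D$; summing over $\BIGO{\log(1/\eps)}$ such scales and the two tangents gives $|S_i|\le \BIGO{\sqrt\eps^{-1}\log(1/\eps)}$ as required.

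For $\text{extra}_1(i)$ the key is \cref{lemma:local-global-opt}: the "bend" vector $\unit(p_i-p_{i-1})-\unit(p_{i+1}-p_i)$ is either zero (straight-through) or parallel to the inward unit normal at $p_i$ (reflection). Since the spacing on $c_i$ is always at most $\BIGO{\sqrt\eps}$, the chord $o_i=p_i'-p_i$ has tangential component of size $\BIGO{\sqrt\eps}$, hence by chord geometry on the unit circle its normal component is $\BIGO{|o_i|^2}=\BIGO{\eps}$. Since $|\unit(\cdot)-\unit(\cdot)|\le 2$, we conclude $\text{extra}_1(i)\le \BIGO{\eps}$.

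For $\text{extra}_2(i)$ write $\text{extra}_2(i)\le b_i^2/\bigl(a_i+\sqrt{a_i^2+b_i^2}\bigr)$ when $a_i\ge 0$ and bound $\text{extra}_2(i)\le 2|a_i|+|b_i|$ otherwise. Let $\alpha_i,\alpha_{i+1}$ be the arc distances of $p_i,p_{i+1}$ to the relevant tangent points, so $\vmag{p_{i+1}-p_i}=D+\Theta(\alpha_i^2+\alpha_{i+1}^2)$. The construction guarantees $|o_j|\le \BIGO{\max(\alpha_j,\sqrt{D})\sqrt\eps}$ for $j\in\{i,i+1\}$, hence $|b_i|\le \BIGO{\sqrt{\eps(D+\alpha_i^2+\alpha_{i+1}^2)}}=\BIGO{\sqrt{\eps\cdot\vmag{p_{i+1}-p_i}}}$. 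Substituting into the $b_i^2/a_i$ bound yields $\text{extra}_2(i)\le \BIGO{\eps}$ in the regime where $a_i$ is not tiny. In the remaining regime (where $D$ and both $\alpha_j$ are $\BIGO{\sqrt\eps}$ so that $\vmag{p_{i+1}-p_i}=\BIGO{\eps}$), the spacing cap ensures $|o_i|,|o_{i+1}|\le \eps$, so $|a_i|,|b_i|\le\BIGO{\eps}$ and directly $\text{extra}_2(i)\le\BIGO{\eps}$.

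The main technical obstacle is the last regime: when the two circles almost touch ($D\ll\eps$) the straight-line length $\vmag{p_{i+1}-p_i}$ and both $a_i,b_i$ degenerate simultaneously, so the asymptotic $b_i^2/a_i$ estimate is useless. Handling this requires the lower cap at $\Theta(\eps)$ on the spacing near tangents, paired with the crude bound $\text{extra}_2(i)\le\sqrt{a_i^2+b_i^2}+|a_i|$; one must also confirm that this cap still produces only $\BIGO{1/\sqrt\eps}$ additional points per tangent, which follows because the capped region has arc length $\BIGO{\sqrt\eps}$. Verifying the error bookkeeping uniformly across the three regimes (generic, intermediate dyadic, and capped near-tangent) is where the proof takes real care.
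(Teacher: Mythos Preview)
Your construction and analysis follow the same overall strategy as the paper: place points on each circle non-uniformly, denser near the point facing the adjacent circle, with a lower cap at $\Theta(\eps)$; bound $\text{extra}_1$ via \cref{lemma:local-global-opt} together with the fact that a chord $o_i$ of length $\BIGO{\sqrt\eps}$ on a unit circle has radial component $\BIGO{\vmag{o_i}^2}\le\BIGO{\eps}$; and bound $\text{extra}_2$ through a $b^2/a$-type estimate. Two differences are worth flagging. First, the paper's spacing is simply $\spacing(\phi)=\max(\eps,\sqrt\eps\,\phi)$, with \emph{no} dependence on the inter-circle gap $D$; your extra $\sqrt D$ factor is harmless (it only coarsens the spacing) but unnecessary. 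Second, rather than arguing via the coupled quantity $\vmag{p_{i+1}-p_i}$, the paper decouples $\text{extra}_2(i)$ into one summand per endpoint, each bounded by $\min\bigl(\vmag{o_j},\,4\vmag{o_j}^2/|\text{ycoord}(p_j)|\bigr)$. Since $|\text{ycoord}(p_j)|\ge 1-\cos\phi_j=\Omega(\phi_j^2)$ regardless of $D$, the $\BIGO{\eps}$ bound follows directly from the spacing function alone, via a short four-case check, with no delicate regime analysis on $\vmag{p_{i+1}-p_i}$. This decoupling is exactly what lets the paper drop $D$ from the construction.

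One small slip in your sketch: ``$D$ and both $\alpha_j$ are $\BIGO{\sqrt\eps}$'' does \emph{not} force $\vmag{p_{i+1}-p_i}=\BIGO{\eps}$, only $\BIGO{\sqrt\eps}$, since the $x$-separation $|\sin\alpha_i\pm\sin\alpha_{i+1}|$ can be $\Theta(\sqrt\eps)$ in that regime. The patch is easy---when $\vmag{p_{i+1}-p_i}\gg\eps$ one still has $a_i>0$ and $b_i^2/a_i=\BIGO{\eps}$ because $|o_j|\le\BIGO{\eps}$ there---but it illustrates why the paper's per-endpoint decoupling is cleaner: no such sub-case ever arises.
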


\begin{proof}
First, we present the construction. For every pair of adjacent disks $i$ and $i+1$ we describe a procedure to generate points on their borders. Then we set $S_i$ to be the union of the generated points on the border of disk $i$ when running the procedure on disks $(i,i+1)$, and the generated points on the border of disk $i$ when running the procedure on disks $(i-1,i)$. Finally, we show that $\text{extra}_1(i)$ and $\text{extra}_2(i)$ are sufficiently small for all $i$ for our choice of $S_i$.

\subparagraph*{Procedure} Reorient the plane that $c_i=(0,y)$ and $c_{i+1}=(0,-y)$ for some $y>1$. 
 Let $\spacing\colon \mathbb{R}_{\ge 0}\to \mathbb{R}_{>0}$ be a function that is nonincreasing with respect to $|\phi|$ that we will define later. Given $\spacing$, we use the following process to add points to $S_i$ (and symmetrically for $S_{i+1}$):

\begin{enumerate}
    \item Set $\phi=0$.
    \item While $\phi\le \pi$:
    \begin{itemize}
        \item Add $(\sin\phi, y-\cos\phi)$ to $S_i$.
        \item $\phi \mathrel{{+}{=}} \spacing(\phi)$.
    \end{itemize}
    \item Repeat steps 1-2 but for $\phi$ from $0$ to $-\pi$.
\end{enumerate}
This procedure has the property that for any $\phi \in [-\pi,\pi]$, the point $(\sin\phi, y-\cos\phi)$ is within distance $\spacing(|\phi|)$ of some point in $S_i$. In particular, if the optimal path has $p_i=(\sin\phi_i,y-\cos\phi_i)$ then it is guaranteed that $\vmag{o_i}\le \spacing(\phi_i)$. To compute $|S_i|$, note that as long as $\spacing(\phi)$ is sufficiently smooth that $\frac{\spacing(\phi)}{\spacing\paren{\phi+\spacing(\phi)}}=\Theta(1)$ for all $\phi$, the number of points added to $S_i$ will be at most a constant factor larger than the value of the definite integral $\int_{-\pi}^{\pi}\frac{1}{\spacing(\phi)}\, d\phi$. 

Next, we construct $\spacing$ so that $|S_i|=\BIGO{\frac{1}{\sqrt \eps}\log \frac{1}{\eps}}$. Intuitively, by \cref{example:simple-reflect}, we should have $\spacing(\phi)=\Theta(\eps)$ closer to circle $i+1$ (when $\phi\approx 0$) and $\spacing(\phi)=\Theta(\sqrt \eps)$ farther from circle $i+1$ (when $\phi=\Theta(1)$). Thus, we set $\spacing(\phi)=\max(\eps,\sqrt{\eps}\phi)$. The total number of added points is on the order of:
\begin{align*}
\int_0^{\pi}\frac{1}{\spacing(\phi)}\, d\phi&=\frac{1}{\sqrt \eps}\left(\int_0^{\sqrt \eps}\frac{1}{\sqrt \eps}\, d\phi+\int_{\sqrt \eps}^{\pi}\frac{1}{\phi}\, d\phi\right)\\
&=\frac{1}{\sqrt \eps}\left(1+\log\left(\frac{\pi}{\sqrt \eps}\right)\right)\le \BIGO{\frac{1}{\sqrt \eps}\log \frac{1}{\eps}}.
\end{align*}

Finally, we show that both extra terms are small for our choice of $S_i$.

\proofsubparagraph*{Part 1: $\text{extra}_1(i)$.} 

We note that $\unit(p_i-p_{i-1})-\unit(p_{i+1}-p_i)$ must be parallel to $p_i-c_i$ for an optimal solution $p$. To verify this, it suffices to check the two possible cases from \Cref{lemma:local-global-opt}:
\begin{enumerate}
    \item The points $p_{i-1},p_i,p_{i+1}$ are collinear, in which case $\unit(p_i-p_{i-1})-\unit(p_{i+1}-p_i)=0$.
    \item The path reflects perfectly off circle $i$, in which case $\unit(p_i-p_{i-1})-\unit(p_{i+1}-p_i)$ is parallel to $p_i-c_i$.
\end{enumerate}

If we ensure that $\text{spacing}(\phi)\le \sqrt \eps$ for all $\phi$, then $|o_i\cdot \unit(p_i-c_i)|\le \eps$ because $o_i$ is always nearly tangent to the circle centered at $c_i$ at point $p_i$. The conclusion follows because $\text{extra}_1(i)\le 2|o_i\cdot \unit(p_i-c_i)|\le 2\eps$.

\proofsubparagraph*{Part 2: $\text{extra}_2(i)$.}

We upper bound $\text{extra}_2(i)$ by the sum of two summands, the first associated only with $o_i$ and the second associated only with $o_{i+1}$.

\begin{claim}\label{claim:splitting-extra}
Letting $\text{ycoord}(\cdot)$ denote the $y$-coordinate of a point,
\begin{equation*}
  \text{extra}_2(i)\le 2\cdot \left( \min\left(\vmag{o_i},\frac{4\vmag{o_i}^2}{\text{ycoord}(p_i)}\right)+\min\left(\vmag{o_{i+1}},\frac{4\vmag{o_{i+1}}^2}{-\text{ycoord}(p_{i+1})}\right)\right). 
\end{equation*}
\end{claim}

\begin{claimproof}
We do casework based on which term is smaller on each of the $\min$s.
\begin{enumerate}
    \item $\vmag{o_i}\ge \frac{\text{ycoord}(p_i)}{4}$, $\vmag{o_{i+1}}\ge \frac{-\text{ycoord}(p_{i+1})}{4}$
    
    The result, $\text{extra}_2(i)\le 2(\vmag{o_i}+\vmag{o_{i+1}})$, follows by summing the following two inequalities:
    
    \begin{align*}
        &\sqrt{a_i^2+[(o_{i+1}-o_i)\cdot \unit(p_{i+1}-p_i)^{\perp}]^2}-\vmag{p_{i+1}-p_{i}}\\
        &=\vmag{p_{i+1}-p_i+o_{i+1}-o_i}-\vmag{p_{i+1}-p_{i}} \\
        &\le \vmag{o_i}+\vmag{o_{i+1}}
    \end{align*}
    and $\vmag{p_{i+1}-p_{i}}-a_i\le \vmag{o_i}+\vmag{o_{i+1}}$.
    
    \item $\vmag{o_i}\le \frac{\text{ycoord}(p_i)}{4}$, $\vmag{o_{i+1}}\le \frac{-\text{ycoord}(p_{i+1})}{4}$
    
    Then $\vmag{o_i}, \vmag{o_{i+1}}\le \frac{\vmag{p_{i+1}-p_{i}}}{4}$ so $a_i\ge \frac{\vmag{p_{i+1}-p_i}}{2}$, and
    \begin{align*}
        \text{extra}_2(i)&\le \frac{\vmag{o_{i+1}-o_i}^2}{2a_i} \le \frac{2(\vmag{o_{i+1}}^2+\vmag{o_i}^2)}{2a_i}\\
        &\le 2\cdot \frac{\vmag{o_{i+1}}^2+\vmag{o_i}^2}{\vmag{p_i-p_{i+1}}}\le 2\cdot \left(\frac{\vmag{o_i}^2}{\text{ycoord}(p_i)}+\frac{\vmag{o_{i+1}}^2}{-\text{ycoord}(p_{i+1})}\right).
    \end{align*}
    
    \item $\vmag{o_i}\le \frac{\text{ycoord}(p_i)}{4}$, $\vmag{o_{i+1}}\ge \frac{-\text{ycoord}(p_{i+1})}{4}$
    
    Define $\text{extra}'(i)$ to be the same as $\text{extra}_2(i)$ with $o_{i+1}$ set to 0. Then
    \begin{align*}
        \text{extra}'(i)&\triangleq \vmag{p_{i+1}-p_i-o_i}-(\vmag{p_{i+1}-p_i}-o_i\cdot \unit(p_{i+1}-p_i))\\
        &=\sqrt{(\vmag{p_{i+1}-p_i}-o_i\cdot \unit(p_{i+1}-p_i))^2+[o_i\cdot \unit(p_{i+1}-p_i)^{\perp}]^2}\\
        &\quad -(\vmag{p_{i+1}-p_i}-o_i\cdot \unit(p_{i+1}-p_i))\\
        &\le \frac{\vmag{o_i}^2}{2\cdot \frac{3}{4}\vmag{p_i-p_{i+1}}}\le \frac{\vmag{o_i}^2}{2\cdot \frac{3}{4}\cdot \text{ycoord}(p_i)}
    \end{align*}
    and by similar reasoning as case 1, $\text{extra}_2(i)-\text{extra}'(i)\le 2\vmag{o_{i+1}}$.
    
    \item $\vmag{o_i}\ge \frac{\text{ycoord}(p_i)}{4}$, $\vmag{o_{i+1}}\le \frac{-\text{ycoord}(p_{i+1})}{4}$
    
    Similar to case 3. \qedhere
\end{enumerate}
\end{claimproof}

Now that we have a claim showing an upper bound on $\text{extra}_2(i)$, it remains to show that $\min\left(\vmag{o_i},\frac{\vmag{o_i}^2}{\text{ycoord}(p_i)}\right)\le \BIGO{\eps}$ for our choice of $\spacing$. Indeed, when $\phi\le \sqrt\eps$ we have $\vmag{o_i}\le \spacing(\phi)\le \eps$, while for $\phi>\sqrt \eps$ we have $\frac{\vmag{o_i}^2}{\text{ycoord}(p_i)}\le \BIGO{\frac{\spacing(\phi)}{\phi^2}}\le \BIGO{\eps}$.
\end{proof}

With small modifications to the proof of \Cref{lemma:smart-choose}, we have the following corollary:

\begin{corollary}\label{corollary:non-unit}
Consider the case of non-unit disks. If the $i$th disk has radius $r_i$, then we can place $\BIGO{\frac{1}{\sqrt \eps_i}\log \frac{1}{\eps_i}}$ points on its border such that the additive error associated with $c_i$ \textemdash specifically, $\text{extra}_1(i)$ plus the components of $\text{extra}_2(i-1)$ and $\text{extra}_2(i)$ associated with $\vmag{o_i}$ \textemdash is $\BIGO{r_i\eps_i}$. Consequently,
$OPT+\sum_{i=1}^n\text{extra}_1(i)+\sum_{i=0}^n\text{extra}_2(i)\le OPT+\sum_{i=1}^nr_i\eps_i.$
\end{corollary}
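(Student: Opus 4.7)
My plan is to replay the argument of Lemma \ref{lemma:smart-choose} with the spacing function scaled by the disk radius, then check that each estimate in the unit-disk analysis picks up the expected factor of $r_i$. Concretely, on disk $i$ I replace $\spacing(\phi)=\max(\eps,\sqrt\eps\,\phi)$ with
\[
\spacing_i(\phi)=r_i\cdot \max\bigl(\eps_i,\sqrt{\eps_i}\,\phi\bigr).
\]
An angular step of $d\phi$ moves a point by arc length $r_i\,d\phi$, so the number of points placed on disk $i$ is
\[
\int_0^\pi \frac{r_i\,d\phi}{\spacing_i(\phi)}=\int_0^\pi \frac{d\phi}{\max(\eps_i,\sqrt{\eps_i}\,\phi)}=\BIGO{\tfrac{1}{\sqrt{\eps_i}}\log\tfrac{1}{\eps_i}},
\]
since the $r_i$ factors cancel exactly. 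Moreover, if the optimal path uses $p_i$ at angular offset $\phi_i$ from the point of disk $i$ closest to the relevant adjacent center, then $\vmag{o_i}\le \spacing_i(|\phi_i|)\le r_i\sqrt{\eps_i}$.

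Next I would re-derive both error bounds with the new scaling. For $\text{extra}_1(i)$, Lemma \ref{lemma:local-global-opt} still forces $\unit(p_i-p_{i-1})-\unit(p_{i+1}-p_i)$ to be parallel to $p_i-c_i$ at optimality; since $p_i$ and $p_i+o_i$ both lie on the circle of radius $r_i$, the radial component of $o_i$ has magnitude at most $\vmag{o_i}^2/(2r_i)\le r_i\eps_i/2$, so $\text{extra}_1(i)=\BIGO{r_i\eps_i}$. For the $o_i$-associated part of $\text{extra}_2(i-1)$ and $\text{extra}_2(i)$, I reorient as in Lemma \ref{lemma:smart-choose} so the two relevant centers lie on the $y$-axis. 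The key geometric estimate is
\[
\text{ycoord}(p_i)\ge r_i(1-\cos\phi_i)=\Omega(r_i\phi_i^2).
\]
For $\phi_i\le \sqrt{\eps_i}$ the bound $\vmag{o_i}\le r_i\eps_i$ controls the $\min$ in Claim \ref{claim:splitting-extra} directly; for $\phi_i>\sqrt{\eps_i}$,
\[
\frac{\vmag{o_i}^2}{\text{ycoord}(p_i)}\le \frac{(r_i\sqrt{\eps_i}\,\phi_i)^2}{\Omega(r_i\phi_i^2)}=\BIGO{r_i\eps_i}.
\]

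The main subtlety is that disk $i$ participates in both the $(i-1,i)$ and $(i,i+1)$ pair procedures, so $S_i$ should be taken as the union of the two generated point sets, and I must verify the radius-scaled spacing estimate holds in each local frame. Once that is checked, summing the per-disk bounds yields $OPT+\sum_i r_i\eps_i$ as asserted. The hard part is really just keeping track of which summands in $\text{extra}_2$ should be charged to $o_i$ versus $o_{i+1}$; Claim \ref{claim:splitting-extra} already provides a clean separation, so the charges telescope correctly across adjacent pairs and the final bound is additive in $i$.
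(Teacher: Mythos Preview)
Your proposal is correct and is precisely the intended argument: the paper states the corollary ``with small modifications to the proof of \Cref{lemma:smart-choose}'' and gives no further details, and what you wrote \emph{is} those modifications spelled out. Rescaling $\spacing$ by $r_i$ so that the angular increment stays $\max(\eps_i,\sqrt{\eps_i}\,\phi)$, observing that the point count integral is unchanged, that the radial component of $o_i$ becomes $\|o_i\|^2/(2r_i)\le r_i\eps_i/2$, and that $\text{ycoord}(p_i)\ge r_i(1-\cos\phi_i)=\Omega(r_i\phi_i^2)$ once the origin is placed in the gap between the two disjoint disks, is exactly the right adaptation; Claim~\ref{claim:splitting-extra} carries over verbatim since its proof uses only $\text{ycoord}(p_i)>0>\text{ycoord}(p_{i+1})$ and triangle inequalities, never the unit radius.
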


Now, we finally prove \cref{thm:non-isect-disks,thm:non-isect-balls}.

\begin{proof}[Proof of \cref{thm:non-isect-disks} (Non-Unit Disks)]
We first present a slightly weaker result, and then show how to improve it. Recall that by \cref{corollary:sp-lower-bound}, the $i$th largest disk has radius $\BIGO{\frac{OPT}{i}}$ for $i\ge 3$. So if we set $\eps_i=\eps'=\frac{\eps}{\log n}$ for each of the $i$th largest disks for $i\ge 3$, the total additive error contributed by these disks becomes
\begin{equation*}
\BIGO{\sum_{i=3}^n\frac{OPT}{i}\cdot \eps_i}\le \BIGO{OPT\cdot \eps'\cdot \sum_{i=3}^n\frac{1}{i}}\le \BIGO{\eps OPT}
\end{equation*}
by \cref{corollary:non-unit}. For the two largest disks, we use the previous naive discretization (placing $\BIGO{\frac{1}{\eps}}$ points uniformly on the intersection of the circles with a square of side length $\BIGO{OPT}$ centered about the starting point). We may assume we have already computed a constant approximation to $OPT$ in $\BIGO{n}$ time by applying \cref{thm:fat-non-isect-2d-1} with $\eps=1$. After selecting the point sets, we can finish with \cref{corollary:linear-touring}. The overall time complexity is
$
\BIGO{\frac{n}{\sqrt{\eps'}}\log \frac{1}{\eps'}+\frac{1}{\eps}}\le \BIGO{\frac{n\sqrt {\log n}}{\sqrt \eps}\log \paren{\frac{\log n}{\eps}}+\frac{1}{\eps}}$.

We can remove the factors of $\log n$ by selecting the $\eps_i$ to be an increasing sequence. Set $\eps_i=\Theta\paren{\frac{\eps i^{2/3}}{n^{2/3}}}$ for each $i\in [3,n]$ such that more points are placed on larger disks. Then the total added error remains
\begin{align*}
\BIGO{OPT\cdot \paren{\eps+\sum_{i=3}^n\frac{\eps_i}{i}}}&=
\BIGO{OPT\cdot \paren{\eps+\sum_{i=3}^n\frac{1}{i}\cdot \frac{\eps i^{2/3}}{n^{2/3}}}}\\
&=\BIGO{OPT\eps\cdot \paren{1+n^{-2/3}\cdot \sum_{i=3}^ni^{-1/3}}}\le \BIGO{OPT\eps},
\end{align*}
and the factors involving $\log n$ drop out from the time complexity:
\begin{align*}
\BIGO{\sum_{i=3}^n\frac{1}{\sqrt \eps_i}\log\paren{\frac{1}{\eps_i}}+\frac{1}{\eps}}&\le \BIGO{\int_{i=3}^n\frac{1}{\sqrt \eps}n^{1/3}i^{-1/3}\log\paren{ \frac{n^{2/3}}{i^{2/3}\eps}}di +\frac{1}{\eps}}\\
&\le \BIGO{\frac{3n^{1/3}}{2\sqrt{\epsilon}}i^{2/3}\paren{\log{\frac{n^{2/3}}{i^{2/3}\epsilon}}+1}\bigg\rvert_{3}^{n}+\frac{1}{\epsilon}} \\
&\le \BIGO{\frac{n}{\sqrt \eps}\log\paren{\frac{1}{\eps}}+\frac{1}{\eps}}. \qedhere
\end{align*}
\end{proof}

We note that under certain additional assumptions, the time complexity of \Cref{thm:non-isect-disks} can be slightly improved. We summarize these in the following corollary, which we state without proof.

\begin{corollary}
If any of the following conditions hold:
\begin{itemize}
    \item the two largest disks are not adjacent in the order
    \item the two largest disks are separated by distance $\Omega(OPT)$
    \item the second-largest disk has radius $\BIGO{OPT}$
\end{itemize}
then a generalization of \Cref{lemma:smart-choose} may be applied to remove the $\frac{1}{\eps}$ term from the time complexity of \Cref{thm:non-isect-disks}.
\end{corollary}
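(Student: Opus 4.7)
The $\frac{1}{\eps}$ additive term in \cref{thm:non-isect-disks} comes from the step that handles the two largest disks via naive uniform discretization, placing $\BIGO{1/\eps}$ points on each. My plan is to prove a generalization of \cref{lemma:smart-choose} that discretizes the two largest disks with only $\BIGO{\frac{1}{\sqrt \eps}\log \frac{1}{\eps}}$ points each, folding their cost into the $\frac{n}{\sqrt \eps}\log \frac{1}{\eps}$ term; the three conditions guarantee that this generalization applies.

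The key observation is that the proof of \cref{claim:splitting-extra} bounds the per-disk error via a term proportional to $\vmag{o_i}^2/\text{ycoord}(p_i)$, where $\text{ycoord}(p_i)$ measures how far $p_i$ is from the neighboring circle. With an appropriately rescaled spacing for a non-unit disk $c_i$ of radius $r_i$, this yields an error controlled not by $r_i \cdot \eps_i$ in isolation, but by $\min(r_i, L)\cdot \eps_i$, where $L$ is the geometric scale set by $c_i$'s neighbors. Consequently, whenever the local geometry around $c_i$ is bounded by $\BIGO{OPT}$ — either because both neighbors have radii $\BIGO{OPT}$, or because $p_i$ is guaranteed to be $\Omega(OPT)$-far from a neighbor — choosing $\eps_i = \Theta(\eps)$ yields error $\BIGO{\eps \cdot OPT}$ while keeping $|S_i|=\BIGO{\frac{1}{\sqrt \eps}\log\frac{1}{\eps}}$.

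It remains to verify that each of the three conditions implies the above locality at both of the top two disks. Under Condition~1, the two largest are non-adjacent, so each of their neighbors is at most third-largest and has radius $\BIGO{OPT}$ by \cref{corollary:sp-lower-bound}. Under Condition~2, the two largest are adjacent but separated by distance $\Omega(OPT)$, making $\text{ycoord}(p_i) = \Omega(OPT)$ on both, which directly triggers the improved bound. Under Condition~3, the second-largest has radius $\BIGO{OPT}$, so \cref{corollary:non-unit} handles it directly with $\eps_i = \Theta(\eps)$; if we are not already in Condition~1, the second-largest is adjacent to the largest, and then the largest's two neighbors are the second-largest (of size $\BIGO{OPT}$) and a disk of rank $\ge 3$ (also $\BIGO{OPT}$ by \cref{corollary:sp-lower-bound}), so the generalization applies to the largest as well. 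Once $|S_i|=\BIGO{\frac{1}{\sqrt\eps}\log \frac{1}{\eps}}$ holds on every disk, \cref{corollary:linear-touring} finishes the DP step in the claimed time.

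The main obstacle is carrying out the generalization of \cref{lemma:smart-choose} cleanly. Specifically, the four cases of \cref{claim:splitting-extra} must be redone while tracking an explicit neighbor scale $L$ in addition to $r_i$, and one must verify that a density of the form $\spacing(\phi) = L \max(\eps_i, \sqrt{\eps_i}\phi)$, restricted to the $\BIGO{OPT}$-scale arc of the disk that a near-optimal tour can actually visit, still produces $\BIGO{\frac{1}{\sqrt\eps}\log \frac{1}{\eps}}$ points and accumulates the correct total error when summed across adjacent pairs. This bookkeeping — separating a disk's intrinsic radius from the scale induced by its neighborhood — is the only nontrivial step; the rest of the proof of \cref{thm:non-isect-disks} then goes through verbatim without the naive-discretization detour.
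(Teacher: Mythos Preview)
The paper explicitly states this corollary \emph{without proof}; the only hint it gives is that ``a generalization of \cref{lemma:smart-choose} may be applied,'' which is precisely what you propose. Your identification of the source of the $\frac{1}{\eps}$ term (the naive uniform discretization of the two largest disks in the proof of \cref{thm:non-isect-disks}) and your case analysis of the three conditions are both correct: in each case the local geometry around each of the two largest disks is controlled by a scale $\BIGO{OPT}$, either because both neighbors have rank $\ge 3$ and hence radius $\BIGO{OPT}$ by \cref{corollary:sp-lower-bound}, or because the $\Omega(OPT)$ separation directly lower-bounds $\|p_i-p_{i+1}\|$ in \cref{claim:splitting-extra}, or because the second-largest disk is itself $\BIGO{OPT}$-sized and its neighbor of rank $\ge 3$ supplies the bound on the other side of the largest.

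One small remark on your sketch: writing the spacing as $\spacing(\phi)=L\max(\eps_i,\sqrt{\eps_i}\phi)$ conflates angle and arc length. For a disk with $r_i\gg OPT$ the relevant arc has length $\BIGO{OPT}$ but subtends angle only $\BIGO{OPT/r_i}$, so it is cleaner to parametrize by arc length $t$ from the point on $\partial c_i$ closest to the neighbor. Then the radial component of $o_i$ is $\BIGO{|o_i|^2/r_i}$, which for $|o_i|\le \BIGO{\sqrt{\eps}\cdot OPT}$ already gives $\text{extra}_1(i)\le \BIGO{\eps\cdot OPT^2/r_i}\le \BIGO{\eps\cdot OPT}$; the non-uniform refinement is needed only for $\text{extra}_2$, exactly as in the unit case. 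With that adjustment the bookkeeping you flag as the main obstacle goes through, and the rest of the argument is as you describe.
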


To extend to multiple dimensions, we generalize the construction from \Cref{lemma:smart-choose}.

\begin{proof}[Proof of \cref{thm:non-isect-balls} (Balls)]
 As in \Cref{lemma:smart-choose}, set $\spacing(\phi)=\max(\eps,\sqrt \eps \phi)$ for a point $p_i$ satisfying $m\angle p_ic_ic_{i+1}=\phi$, meaning that there must exist $p_i'\in S_i$ satisfying $\vmag{p_i-p_i'}\le r_i\cdot \spacing(\phi)$. The total number of points $|S_i|$ placed on the surface of a $d$-dimensional sphere is proportional to
\begin{align*}
\int_0^{\pi}\frac{\sin^{d-2}(\phi)}{\spacing(\phi)^{d-1}}\, d\phi
&\le \frac{1}{(\sqrt \eps)^{d-1}}\int_0^{\pi} \frac{\phi^{d-2}}{\max(\sqrt \eps,\phi)^{d-1}}\, d\phi\\
&=\frac{1}{\eps^{(d-1)/2}}\paren{\int_0^{\sqrt \eps}\frac{\phi^{d-2}}{(\sqrt \eps)^{d-1}}\, d\phi + \int_0^{\sqrt \eps}\frac{1}{\phi}\, d\phi} \\
&\le \BIGO{\frac{1}{\eps^{(d-1)/2}}\log \frac{1}{\eps}}.
\end{align*}
where the derivation of the integration factor $\sin^{d-2}(\phi)$ can be found in \cite{unit-ball}. 

It remains to describe how to space points so that they satisfy the given spacing function. For each spacing $s=\eps, 2\eps, 4\eps, \ldots, \sqrt{\eps}$, we can find a $d$-dimensional hypercube of side length $O(s/\sqrt \eps)$ that encloses all points on the hypersphere with required spacing at most $2s$. Evenly space points with spacing $s$ across the surface of this hypercube according to \Cref{lemma:equal-spacing-points}, and project each of these points onto the hypersphere. There are a total of $\BIGO{\log \frac{1}{\eps}}$ values of $s$, and each $s$ results in $\BIGO{\frac{1}{\eps^{(d-1)/2}}}$ points being projected onto the hypersphere, for a total of $\BIGO{\frac{1}{\eps^{(d-1)/2}}\log \frac{1}{\eps}}$ points.
\end{proof}

% \section{Conclusion}

% Possible extensions (inaccurate oracles? lower bounds? hardness of solving disks exactly?)

\bibliography{bib.bib}

\appendix

\section{Appendix}

Organization: The three subsections contain omitted proofs from sections 2, 3, and 4, respectively.

\subsection{Convex bodies: omitted proofs}\label{sec:omitted-proofs-convex-bodies}

\begin{proof}[Proof of \cref{lemma:project-contract}]
Define $c_1\triangleq \text{closest}_C(p_1)$ and $c_2\triangleq \text{closest}_C(p_2)$. Since $C$ is convex, $C$ must contain all points on the segment connecting $c_1$ and $c_2$. Thus, it must be the case that $(p_1-c_1)\cdot (c_2-c_1)\le 0$, or some point on the segment connecting $c_1$ and $c_2$ would be closer to $p_1$ than $c_1$. Similarly, it must be the case that $(p_2-c_2)\cdot (c_2-c_1)\ge 0$. To finish,
\begin{align*}
(p_2-p_1)\cdot (c_2-c_1)&\ge (p_2-c_2+c_2-c_1+c_1-p_1)\cdot (c_2-c_1)\\
&\ge (c_2-c_1)\cdot (c_2-c_1)\ge \vmag{c_2-c_1}^2,
\end{align*}
implying $\vmag{p_2-p_1}\ge \vmag{c_2-c_1}$.
\end{proof}

\begin{proof}[Proof of \cref{lemma:equal-spacing-points}]
Consider some point $p \in (\partial C) \cap \mathcal H$. By the convexity of $C$ and because $p$ lies on its boundary, there exists some unit vector $\vec{v}$ such that $p \cdot \vec{v} \geq x \cdot \vec{v}$ for any $x \in C$. Let the intersection of $\mathcal H$ with the ray starting at $p$ and going in the direction of $\vec{v}$  be the point $s$. Notice that $\closest_C(s) = p$. 

Now, define $s' \in S_{\mathcal H}$ to be the closest such point to $s$. From our construction of $S_{\mathcal H}$, $\vmag{s-s'} \leq r\eps$. From our construction of $S$, $\closest_C(s') \in S$, and since $\closest_C(s) = p$, after applying \cref{lemma:project-contract}, we have $\vmag{p-\closest_C(s')} \leq r\eps$.
\end{proof}

\begin{proof}[Proof of \cref{lemma:pseudo-approx}]
First, we show that there exists a path $p'$ touring the $S_i$ that is nearly as short as the shortest path touring the $R_i$.

\begin{claim*}
If $OPT\le L_{APPROX}$, there exists an increasing sequence $0 = z_0 < z_1 < \dots < z_l = n+1$ and points $p_{z_i}'$ such that $p_{z_i}' \in S_{z_i}$ for all $i$ and $p_{z_i}' \in R_{j}$ for all $i, j$ satisfying $0 \leq i \leq l-1, z_i \leq j < z_{i+1}$, and $\sum_{i=0}^{l-1} \vmag{p_{z_i}'-p_{z_{i+1}}'} \leq OPT+\gamma L_{APPROX}$. Additionally, if the regions are disjoint, there exists a sequence that satisfies the above conditions that also satisfies $z_i = i$ for all $i$.
\end{claim*}

\begin{claimproof}
Define $D(z_t, p_{z_t}')$ to be the minimum distance for touring regions $R_{z_t}, R_{z_t+1}, \dots, R_n$ starting at $p_{z_t}'$ and ending at $p_{n+1}$. We show by induction that for all $t\le l$, there exists a sequence $p_{z_0},\dots,p_{z_t}$ satisfying the invariant
\begin{equation}
    \sum_{i=0}^{t-1} \vmag{p_{z_i}'-p_{z_{i+1}}'} + D(z_t, p_{z_t}) \leq OPT+\frac{t}{2n} \cdot \gamma L_{APPROX}.\label{ineq:induct}
\end{equation}
Note that for $t=l$, (\ref{ineq:induct}) implies
\begin{equation*}
    \sum_{i=0}^{l-1} \vmag{p_{z_i}'-p_{z_{i+1}}'} \leq OPT + \frac{l}{2n} \cdot \gamma L_{APPROX} \leq OPT + \gamma L_{APPROX},
\end{equation*}
as desired.

\noindent
\textbf{Base Case:} First, define $z_0 = 0, p_{z_0}' = p_0$. The invariant holds for $t=0$. 

\noindent
\textbf{Inductive Step:} Suppose we have already constructed $z_0, z_1, \dots, z_t$ and $p_{z_0}', p_{z_1}', \dots, p_{z_t}'$. It remains to show how to construct $z_{t+1}, p_{z_{t+1}}'$ from $z_{t}, p_{z_t}'$ such that the invariant is maintained.

Let $z_{t+1} \leq n$ be the smallest integer greater than $z_t$ such that $p_{z_t}' \notin R_{z_{t+1}}$. If all $R_i$ are disjoint, then clearly $z_{t+1} = z_t+1$. If no such $z_{t+1}$ exists, then we set $z_{t+1} = n+1$. Consider an optimal path $q_{z_t}, q_{z_t+1}, q_{z_t+2}, \dots, q_{n+1}$ for touring regions $R_{z_t}, R_{z_t+1}, \dots, R_n$ starting at $q_{z_t} = p_{z_t}'$ and ending at $q_{n+1} = p_{n+1}$, with total length $D(z_t, p_{z_t}')$. Because $q_{z_t} = p_{z_t}' \in R_{j}$ for all $z_t \leq j < z_{t+1}$, we can consider an optimal path that satisfies $q_{z_t} = q_{z_t+1} = q_{z_t+2} = \dots = q_{z_{t+1}-1}$, and $q_{z_{t+1}}$ lies on the boundary of $R_{z_{t+1}}$. In other words, the optimal path does not need to move from its starting point if it is contained within some prefix of the regions, and it can always choose $q_{z_{t+1}}$ to be its first point of contact with $R_{z_{t+1}}$. Therefore $q_{z_{t+1}} \in \partial C_{z_{t+1}, w}$ for some $1\le w\le |R_{z_{t+1}}|$. 

Furthermore, a path exists from $p_0$ to $p_{z_t}' = q_{z_t}$ to $p_{n+1}$ with length 
\begin{equation*}
    \sum_{i=0}^{t-1} \vmag{p_{z_i}'-p_{z_{i+1}}'} + D(z_t, p_{z_t}') \leq OPT+\frac{t}{2n}\cdot \gamma L_{APPROX} \leq OPT + L_{APPROX} \leq 2L_{APPROX},
\end{equation*}
indicating that $\vmag{q_{z_{t+1}}-p_0}\le 2L_{APPROX}$, which in turn implies $q_{z_{t+1}}\in \mathcal H$.

Thus, $q_{z_{t+1}}\in (\partial C_{z_{t+1}, w})\cap \mathcal H$, so by \cref{lemma:equal-spacing-points} there exists some point $p_{z_{t+1}}' \in S_{z_{t+1}}$ such that $\vmag{p_{z_{t+1}}'-q_{z_{t+1}}} \leq \epsilon \cdot r = \frac{\gamma}{16n} \cdot 4 L_{APPROX}$. Now, we can show that the invariant holds for $t+1$:
\begin{align}
    &\sum_{i=0}^{t} \vmag{p_{z_i}'-p_{z_{i+1}}'} + D(z_{t+1}, p_{z_{t+1}}') \nonumber \\
    &\leq 
    \sum_{i=0}^{t} \vmag{p_{z_i}'-p_{z_{i+1}}'} + \vmag{p_{z_{t+1}}'-q_{z_{t+1}}}+\sum_{i=z_{t+1}}^{n} \vmag{q_i-q_{i+1}}  \label{align:long-series-1} \\
    &\le \sum_{i=0}^{t-1} \vmag{p_{z_i}'-p_{z_{i+1}}'} +
    \vmag{p'_{z_t}-q_{z_{t+1}}} + 2\vmag{p_{z_{t+1}}'-q_{z_{t+1}}}+\sum_{i=z_{t+1}}^{n} \vmag{q_i-q_{i+1}} \label{align:long-series-2} \\
    &\leq \sum_{i=0}^{t-1} \vmag{p_{z_i}'-p_{z_{i+1}}'} + \vmag{p'_{z_t}-q_{z_{t+1}}} + 2\eps r +\sum_{i=z_{t+1}}^{n} \vmag{q_i-q_{i+1}} \nonumber\\
    &= 2\eps r + \sum_{i=0}^{t-1} \vmag{p_{z_i}'-p_{z_{i+1}}'}  + D(z_t, p_{z_t}') \nonumber \\
    &\leq 2 \eps r + OPT + \frac{t}{2n} \cdot \gamma L_{APPROX}\nonumber \\
    &= OPT + \frac{\gamma L_{APPROX}}{2n} + \frac{t}{2n} \cdot \gamma L_{APPROX} = OPT + \frac{t+1}{2n} \cdot \gamma L_{APPROX}. \nonumber
\end{align}
(\ref{align:long-series-1}) follows from $p_{z_{t+1}}', q_{z_{t+1}+1}, \dots, q_{n+1}$ being a valid tour of regions $R_{z_{t+1}}, R_{z_{t+1}+1}, \dots, R_n$, while (\ref{align:long-series-2}) follows from the triangle inequality.

Recall that if all $R_i$ are disjoint, then $z_{t+1} = z_t+1 \implies z_i = i$ for all $i$.
\end{claimproof}

It remains to show that we can recover a path touring the $S_i$ that is at least as good as $p'$. For each point $p \in S_i$ for some $i$, define $successor(p)$ to be the minimum $j > i$ such that $p \notin R_j$. Notice that $successor(p_{z_i}') = z_{i+1}$. Now, define $pred(S_j)$ for each $S_j$ to be the set of all points $p$ such that $successor(p) = j$.

We use dynamic programming: first, set $dp(p_0) = 0$. Then, iterate over the sets $S_i$ in increasing $i$. For each point $x \in S_i$, we will set $dp(x)\triangleq \min_{y \in pred(S_j)}dp(y)+\vmag{x-y}$. By storing the optimal transition for each point, we can recover a path $q_{0}, q_1, \dots, q_{l_q}$ that tours the $R_i$ regions. Because $successor(p_{z_i}') = z_{i+1}$ for the path $(p_i')$, we must have that the path recovered from dynamic programming has length at most $\sum_{i=0}^{l-1} \vmag{p_{z_i}'-p_{z_{i+1}}'}$.

If the $R_i$ are disjoint, then $pred(S_j) = S_{j-1}$ and the total number of transitions in the DP is given by
\begin{equation*}
   \BIGO{\sum_{i=0}^{n} |S_i||S_{i+1}|} = \BIGO{\sum_{i=0}^{n} |R_i|\paren{\frac{n}{\gamma}}^{d-1}\cdot |R_{i+1}| \paren{\frac{n}{\gamma}}^{d-1}}, 
\end{equation*}
as desired. If the $R_i$ are possibly non-disjoint, notice that the total number of transitions in the DP is given by 
\begin{align*}
\sum_{i=0}^{n} \sum_{p \in S_i} \BIGO{|S_{successor(p)}|\paren{\frac{n}{\gamma}}^{d-1}} &\leq  \BIGO{\sum_{i=0}^{n} \sum_{p \in S_i} \max_{j} |R_j| \paren{\frac{n}{\gamma}}^{d-1}} \\
&\leq \BIGO{\sum_{i=0}^{n} |R_i|\paren{\frac{n}{\gamma}}^{d-1} \max_{j} |R_j| \paren{\frac{n}{\gamma}}^{d-1}}\\
&\leq \BIGO{\paren{\frac{n}{\gamma}}^{2d-2}\sum_{i=0}^{n} |R_i|\max_j|R_{j}|}.\qedhere
\end{align*}
\end{proof}

\begin{proof}[Proof of \cref{lemma:trivial-n-approx}]
Using the oracle to project $p_0$ onto each of the regions $R_1, R_2, \dots, R_n$ to obtain points $p_1, p_2, \dots, p_n$. Now, let $OPT$ be the total length of the optimal solution for the problem. Clearly, we must have $\vmag{p_i-p_0} \leq OPT$. Thus,
\begin{align*}
\sum_{i=0}^{n}\vmag{p_{i+1}-p_i} &\leq \vmag{p_1-p_0} + \sum_{i=1}^{n} \paren{\vmag{p_{i+1}-p_0}+\vmag{p_i-p_0}} \\
&\leq OPT + 2n \cdot OPT = (2n+1)OPT.\qedhere
\end{align*}
\end{proof}

\begin{proof}[Proof of \cref{lemma:non-intersect-d-dim-constant-approx}]
First, apply \cref{lemma:trivial-n-approx} to get a $1+2n$ approximation. Label the path length of this approximation $L_{0}$. Define $B_{0} = 1+2n$, where we know $\frac{L_0}{B_0} \le OPT\le L_0$. Our goal is to show that given some $L_{t}, B_t$, we can generate $L_{t+1}, B_{t+1}$ so that $\frac{L_{t+1}}{B_{t+1}} \le OPT\le L_{t+1}$, and $B_{t+1} \leq 2\sqrt{B_t}$. 

We apply \cref{lemma:pseudo-approx} with $\gamma = 1$, $L_{APPROX} = \frac{L_t}{\sqrt{B_t}}$. Let $L_{APPROX}'$ be the length of the optimal tour that visits $S_0, S_1, \dots, S_{n+1}$. There are two possible cases.

\begin{enumerate}
    \item $L'_{APPROX} \le 2L_{APPROX}$. In this case, we know that 
    $\frac{L_t}{B_t}\le OPT\le L'_{APPROX}\le 2L_{APPROX}=\frac{2L_t}{\sqrt B_t}$, so we can set $L_{t+1}=\frac{2L_t}{\sqrt B_t}$ and $B_{t+1}=2\sqrt {B_t}$.
    
    \item $L'_{APPROX} > 2L_{APPROX}$. In this case we know that $OPT>L_{APPROX}$, because \cref{lemma:pseudo-approx} guarantees that if $OPT \leq L_{APPROX}$, then $L'_{APPROX} \leq OPT + \gamma L_{APPROX} \leq 2L_{APPROX}$. Thus,
    $\frac{L_t}{\sqrt B_t}=L_{APPROX}< OPT\le L_t$,
    so we can set $L_{t+1}=L_t$ and $B_{t+1}=\sqrt B_t$.
\end{enumerate}

Thus, we can generate the sequences $(L_i)$ and $(B_i)$ until we reach some $B_l \leq 4$ for some $l \leq \BIGO{\log \log{n}}$ in $\BIGO{\log \log{n} \cdot n^{2d-2}\sum_{i=1}^{n-1}|R_i||R_{i+1}|}$ time. This gives us some $L_l$ such that $OPT \leq L_l \leq 4 \cdot OPT$, a constant approximation of $OPT$.
\end{proof}

\begin{proof}[Note for \cref{thm:2d-unions}]
\begin{sloppypar}
The bulk of the time for the second bound is spent computing $successor(p)$ for every one of the $\BIGO{\frac{n^2}{\eps}}$ points in the discretization, which could take $\Theta(n)$ calls to the oracle, contributing the factor of $\BIGO{\frac{n^3}{\eps}}$. On the other hand, the actual dynamic programming updates contribute only $\BIGO{\frac{n^2}{\eps}\log\paren{\frac{n^2}{\eps}}}$.\qedhere
\end{sloppypar}
\end{proof}

\begin{proof}[Proof of \cref{thm:nonconvex-polygons-2d}]
Here we describe how to modify the method of \cite{Mozafari2012TouringPA} to achieve the desired time complexity. The approach we describe in this paper is a more general method (which can be modified similarly).

As in our approach, the method of \cite{Mozafari2012TouringPA} involves a pseudo-approximation: the idea is to intersect every one of the $|V|$ edges with a disk of radius $L_{APPROX}$ centered at $p_0$, discretize every one of the $|V|$ edges into $\frac{n}{\eps}$ evenly spaced points, and then apply dynamic programming as described in the previous subsection. Naively, these DP transitions run in $\BIGO{\paren{\frac{|V|n}{\eps}}}^2$ time, but \cref{lemma:additive-voronoi} speeds these transitions up to $\BIGO{\frac{|V|n}{\eps}\log \frac{|V|}{\eps}}$ time. The first term in the time complexity corresponds to the time required to obtain a constant approximation by setting $\eps=1$, as described in \cref{lemma:pseudo-approx}.

When the regions can possibly intersect, we additionally need a data structure that will compute $successor(p)$ for any $p\in \partial R_i$ in $\BIGO{\log |V|}$ time. The construction of such a data structure contributes the additional term to the time complexity.

Now, we describe how to generate a separate data structure for each edge $e\in \delta R_i$ such that each data structure can answer $successor(p)$ for any $p\in \partial e$ in $\BIGO{\log |V|}$ time. This construction runs in $\BIGO{|V|^2\alpha(|V|)}$ time.

First, for each edge  $e\in \delta R_i$ we need to compute which parts of it belong to each other region $R_j$ where $j>i$. To do so, we need to know
\begin{itemize}
    \item For each such region, whether the endpoints of $e$ are contained within that region.
    \item The intersection points of $e$ with all such regions in sorted order along the edge (at most $|V|$, assuming non-degeneracy).
\end{itemize}
These quantities can be computed in $\BIGO{|V|^2}$ time due to Balaban \cite{segment-isect}. 

The intersection points partition $e$ into ranges such that the first $j>i$ such that a point belongs to $R_j$ is the same for all points within the range (see \cref{fig:v2-ds}). If we have computed the first $j>i$ for every such range, we can answer queries in $\BIGO{\log v}$ time via binary search.

\begin{figure}[h]
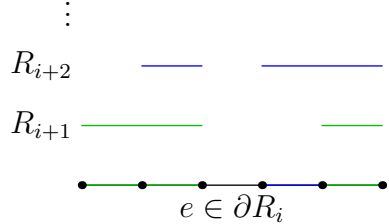

    \centering
    \begin{asy}
    size(5cm);
// dot((0,0));
// dot((1,0));
draw((0,0)--(1,0));
label("$e\in \partial R_i$", (0.5,0), S);

draw((0,0.2)--(0.4,0.2), heavygreen);
draw((0.8,0.2)--(1.0,0.2), heavygreen);
// dot((0,0.2), heavygreen);
// dot((0.4,0.2), heavygreen);
// dot((0.8,0.2), heavygreen);
// dot((1.0,0.2), heavygreen);
label("$R_{i+1}$", (0,0.2), W);

draw((0.2,0.4)--(0.4,0.4), blue);
draw((0.6,0.4)--(1.0,0.4), blue);
label("$R_{i+2}$", (0,0.4), W);
label("$\vdots$", (0,0.6), W);

draw((0,0)--(0.4,0), heavygreen);
draw((0.8,0)--(1.0,0), heavygreen);
draw((0.6,0)--(0.8,0), blue);
dot((0,0));
dot((0.2,0));
dot((0.4,0));
dot((0.6,0));
dot((0.8,0));
dot((1.0,0));
    \end{asy}
    \caption{Diagram for \cref{thm:nonconvex-polygons-2d}: Constructing a data structure for edge $e$. For each point $p\in e$, the goal is to identify the first $j>i$ such that $p$ belongs to $R_j$. Each point on $e$ has been colored with the color corresponding to the corresponding $R_j$. The intersection points of $e$ with $\delta R_{i+1}$ and $\delta R_{i+2}$ partition $e$ into five ranges, of which the first two and last one satisfy $j=i+1$, and the fourth satisfies $j=i+2$.}
    \label{fig:v2-ds}
\end{figure}

To do so, we start by iterating over all intervals of $e$ contained within $R_{i+1}$ and setting all ranges that they cover to have $successor=i+1$. This determines the answer for three of the ranges in \cref{fig:v2-ds}. Then do the same for $R_{i+2},R_{i+3},\ldots$ and so on. Note that there are $\BIGO{|V|}$ intervals in total.

Naively, such an implementation would run in $\BIGO{|V|^2}$ time; however, using Tarjan's disjoint set union data structure \cite{dsu}, we can speed up this process to $\BIGO{|V|\alpha(|V|)}$ time, where $\alpha$ is the inverse Ackermann function.  We assume an implementation of DSU that initially assumes that every range is its own representative, and supports both of the following operations in amortized $\alpha(|V|)$ time:
\begin{itemize}
    \item $\texttt{Find}(x)$: Return the representative of range $x$.
    \item $\texttt{Unite}(x,y)$: Given a range $x$ such that $\texttt{Find}(x)=x$, for each range that $x$ is a representative of, set its representative to be $\texttt{Find}(y)$.
\end{itemize}

For an interval covering ranges $[l,r]$ belonging to $R_j$, we use the following process to set the answers for every range it covers whose successor has not been set yet:
\begin{enumerate}
    \item Set $l=\texttt{Find}(l)$.
    \item If $l>r$, break.
    \item Set $successor(l)=j$
    \item Call $\texttt{Unite}(l,l+1)$.
    \item Return to step 1.
\end{enumerate}
The correctness of this procedure follows from the DSU maintaining the invariant that the representative for a range is the first range that succeeds it whose successor has not been set yet.
\end{proof}

\begin{figure}[h]
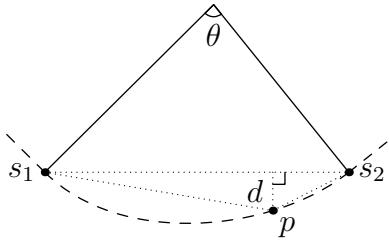

    \centering
    \begin{asy}
    import olympiad;
    
    unitsize(1cm);
    pair s1 = (-2,0);
    pair s2 = (2,0);
    dot("$s_1$",s1,W);
    dot("$s_2$",s2);
    pair x1 = (-1,-1);
    pair x2 = (1,-0.8);
    draw(s1..controls x1 and x2..s2, dashed);
    draw(s1--(s1+0.5*(s1-x1)),dashed);
    draw(s2--(s2+0.5*(s2-x2)),dashed);
    pair p = intersectionpoints((1,-1)--(1,1),s1..controls x1 and x2..s2)[0];
    dot("$p$",p,SE);
    draw(s1--s2, dotted);
    pair f = foot(p,s1,s2);
    draw(f--p,dotted);
    label("$d$",(p+(1,0))/2,W);
    pair orig = extension(s1,s1+(x1-s1)*dir(90),s2,s2+(x2-s2)*dir(-90));
    draw((-2,0)--orig--(2,0));
    draw(s1--p--s2,dotted);
    markscalefactor=0.02;
    draw(rightanglemark(p,f,s2));
    draw(anglemark((-2,0),orig,(2,0)));
    label("$\theta$",orig,2*S);
    \end{asy}
    \caption{Diagram for \Cref{lemma:polyapprox1}. The dashed line is part of $\partial C$ while the solid lines denote the normals to $\partial C$ at $s_1$ and $s_2$ respectively.}
    \label{fig:polyapprox1}
\end{figure}

\begin{proof}[Proof of \cref{lemma:half-eps-intersect}]First, see \cref{lemma:polyapprox1} for the case where $C$ lies strictly within $U$. Next, we describe what needs to be changed when this is not the case:

If we have exact access to $\partial(U)\cap C$, then we can convert a closest point oracle for $C$ into a closest point oracle for $C\cap U$ and apply the previous lemma to select points on $\partial (C\cap U)$. Note that due to the convexity of $C$, the intersection of each side of $U$ with $C$ is a line segment if it is nonempty, and the endpoints of this line segment can be approximated to arbitrary precision using the oracle to binary search. 

Specifically, if $C$ intersects a side $s\subseteq \partial(U)$, we can approximate this intersection by binary searching on $s$. Say we take some $p\in s$ such that $p\not\in C$; then the direction of the vector from $p$ to $\closest_C(p)$ tells us which side of $s\cap C$ $p$ lies on). After $\BIGO{\log \frac{1}{\eps}}$ queries, we either end up with:
\begin{enumerate}
    \item A point within $C\cap s$, and approximations of both endpoints of $C\cap s$ to within distance $o(\epsilon)$ each.
    \item A segment of length $o(\epsilon)$ containing $C\cap s$.
\end{enumerate}
In either case, given an estimate of an endpoint of $C\cap s$ that is within $o(\epsilon)$ of an endpoint of $C\cap s$, by projecting the estimate onto $C$ we get an estimate of the endpoint that is within $o(\epsilon)$ to the true endpoint (and now is part of $C$).

When querying the closest point in $C\cap U$ to a point $p$, 
\begin{enumerate}
    \item Compute the closest point in $U$ to $p$. If the point is within $C$, return it.
    \item Otherwise, query the oracle for the closest point in $C$ to $p$. If the point is within $U$, return it.
    \item Otherwise, return the closest estimated endpoint (which is guaranteed to be within $o(\epsilon)$ of the true answer). \qedhere
    %This increases our error bounds by $o(\epsilon)$.
    
    % the closest point in $C\cap U$ to $p$ must lie on both $\partial(U)$ and $\partial(C)$. Return the closest point in $\partial(U)\cap C$ to $p$.
\end{enumerate}

% However, the assumption of exact access isn't quite realistic, in the sense that we can't quite compute the intersection points exactly using only the closest point oracle. Nevertheless, the construction from the first proof of \cref{lemma:polyapprox1} still works (though it potentially includes points in $C\backslash U$).\todob{prove this?}

\end{proof}

\begin{lemma}[Polygonal Approximation (Simpler)]

\label{lemma:polyapprox1}
Given a closest point oracle for a convex region $C$ that is strictly contained within a unit square $U$, we may select $\BIGO{\eps^{-1/2}}$ points on $\partial C$ such that every point within $C$ is within distance $\eps$ of the convex hull of the selected points.
\end{lemma}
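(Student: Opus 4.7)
The plan is to greedily choose points $s_1, \dots, s_k$ on $\partial C$ (traversed counterclockwise) so that on each arc between consecutive selected points, the \emph{sagitta} --- the maximum distance from the arc to its chord --- is at most $\eps$. Since $C$ is convex, the convex hull $P$ of the $s_i$ lies inside $C$, and any $p \in C\setminus P$ lies in a single ``cap'' cut off by an edge $\overline{s_is_{i+1}}$ of $P$, so its distance to $P$ is at most the sagitta of that arc. Hence it suffices to bound each sagitta by $\eps$.

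The geometric ingredient, illustrated by \Cref{fig:polyapprox1}, is the sagitta estimate. Let $\ell_i = \vmag{s_{i+1}-s_i}$ and let $\theta_i$ denote the total turning of the tangent direction along the arc from $s_i$ to $s_{i+1}$. Provided $\theta_i \le \pi/2$, the tangent lines at the two endpoints meet on the convex side in a triangle that contains the arc; maximizing its altitude over the two base angles (which sum to $\theta_i$) via the Law of Sines gives the upper bound $\ell_i\tan(\theta_i/2)/2 \le \ell_i\theta_i/2$ on the sagitta. So the conjunction $\theta_i \le \pi/2$ and $\ell_i\theta_i \le 2\eps$ implies sagitta $\le \eps$.

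Now run the greedy: starting from any $s_1\in\partial C$, take $s_{i+1}$ as far as possible (counterclockwise) along $\partial C$ subject to $\theta_i \le \pi/2$ and $\ell_i\theta_i\le 2\eps$, stopping when wrapping back past $s_1$. Because $\theta_i$ is nondecreasing and (right-)continuous in the position of $s_{i+1}$, each nonterminal piece is either \emph{type B} with $\theta_i = \pi/2$ or \emph{type A} with $\ell_i\theta_i = 2\eps$. Since $\sum_i \theta_i = 2\pi$, at most $4$ pieces are type B; for the $k_A$ type A pieces, Cauchy--Schwarz yields
\begin{equation*}
k_A\sqrt{2\eps} \;=\; \sum_{i\in A} \sqrt{\ell_i\theta_i} \;\le\; \sqrt{\left(\sum_i \ell_i\right)\left(\sum_i \theta_i\right)} \;\le\; \sqrt{L\cdot 2\pi},
\end{equation*}
where $L \le 4$ is the perimeter of $C$ (a planar convex set inside the unit square $U$ has perimeter at most that of $U$, by Cauchy's formula). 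Hence $k \le k_A + 4 = \BIGO{\eps^{-1/2}}$. Computationally, the greedy is realizable by binary search on $\partial C$ parametrized using support points obtained from the closest-point oracle (by projecting a faraway query point in the desired direction onto $C$).

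The main conceptual hurdle is choosing the right quantity for Cauchy--Schwarz: since both $\sum \ell_i \le L$ and $\sum \theta_i = 2\pi$ are $\BIGO{1}$, pairing them as $\sqrt{\ell_i\theta_i}$ extracts the $\sqrt\eps$ saving, whereas slicing $\partial C$ into pieces of equal arclength or of equal turning would only give $\BIGO{1/\eps}$ in the worst case (e.g., for long thin convex bodies). A secondary technicality is enforcing $\theta_i \le \pi/2$ so that the tangent-triangle sagitta estimate applies; this contributes only a constant to $k$ and is absorbed into the $\BIGO{\cdot}$.
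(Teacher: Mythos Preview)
Your proof is correct but proceeds quite differently from the paper's primary argument. The paper imposes two \emph{separate} thresholds---consecutive selected points should be at most $\sqrt\eps$ apart \emph{and} the turning between them should be at most $\sqrt\eps$---and satisfies them by two projections: projecting $\Theta(\eps^{-1/2})$ equally spaced points of $\partial U$ onto $C$ (which, by the contraction property of $\closest_C$, controls chord length), and projecting a dilated copy of the same points (which acts as a tangent-direction oracle and controls turning). Taking the union of these two point sets yields the desired $S$. By contrast, you use a single \emph{product} condition $\ell_i\theta_i\le 2\eps$ together with a greedy sweep, and count via Cauchy--Schwarz against the two global budgets $\sum\ell_i\le L$ and $\sum\theta_i=2\pi$. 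The paper's route has the advantage that the construction is explicit and non-adaptive---just $\Theta(\eps^{-1/2})$ oracle calls at predetermined query points---whereas your greedy needs a binary search per selected point and a parametrization of $\partial C$ by normal direction (which is a little delicate at corners and flat sides). On the other hand, your product criterion is more adaptive and your Cauchy--Schwarz counting is very clean; in fact, the paper's \emph{alternative} proof of this lemma (given immediately after the primary one) also adaptively subdivides based on a product $xy>\eps$ of arc length and turning, but bounds the number of pieces via a potential-function argument rather than Cauchy--Schwarz---your counting is arguably the slicker of the two.
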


\begin{proof}
Let $S$ denote the set of all selected points. First, we state a sufficient pair of conditions for $S$ to satisfy the desired property: For every pair of consecutive points $s_1$ and $s_2$ in $S$, 
\begin{enumerate}
    \item The distance between $s_1$ and $s_2$ along the border of $C$ is at most $\sqrt \eps$.
    
    \item The difference in angle $\theta$ between the normals to $C$ at $s_1$ and $s_2$ is at most $\sqrt \eps$.
\end{enumerate}

To see that this is true, consider any point $p$ on $\partial C$ lying between $s_1$ and $s_2$, and define $d\triangleq \text{dist}(p,s_1s_2)$. See \Cref{fig:polyapprox1} for an illustration. Then 
\begin{align}
\frac{d}{\sin \angle ps_1s_2}+\frac{d}{\sin \angle ps_2s_1}=|s_1-s_2|&\implies \frac{d}{\sin \angle s_1ps_2}< |s_1-s_2| \nonumber \\
&\implies \frac{d}{\sin \theta} < |s_1-s_2| \label{ineq:sin} \\
&\implies \frac{d}{\sqrt \eps}<\sqrt \eps \implies d<\eps \nonumber
\end{align}

Inequality (\ref{ineq:sin}) follows from $m\angle s_1ps_2>\pi-\theta$.

It remains to construct $S$ satisfying both of the desired conditions. Note that we can actually construct two separate sets of points $S_1$ and $S_2$, one for each of the two conditions, and then set $S=S_1\cup S_2$. Let $S'$ denote a set of $4\ceil{\frac{1}{\sqrt \eps}}$ points spaced equally about the border of $U$.

\begin{enumerate}
    \item Form $S_1$ by projecting each point in $S'$ onto $C$.
    
    \item Form $S_2$ by dilating each point in $S'$ by a sufficiently large constant about the center of $U$ and then projecting each of these points onto $C$. Essentially, we use the closest point oracle to implement a tangent line oracle.  There is an alternate proof of \cref{lemma:polyapprox1} that only involves querying the closest point oracle at points on the border of $U$.
\end{enumerate}

Since the distance between any two consecutive points in $S'$ is at most $\frac{1}{\sqrt \eps}$, $S_1$ satisfies condition 1 by a stronger version of \Cref{lemma:project-contract}. Furthermore, condition 1 continues to be satisfied as points from $S_2$ are added to $S$. Similarly, it's easy to verify that $S_2$ satisfies condition 2, and that condition 2 continues to be satisfied as points from $S_1$ are added to $S$.
\end{proof}

\begin{proof}[Proof of \cref{lemma:polyapprox1} (Alternative)]

There is no need to query the closest point oracle for points that can be infinitely far away from $C$ (which is done in the original proof). Let's start with a set $S'$ containing a single point on the border of $U$. While there are two consecutive points $s_1'$ and $s_2'$ in $S'$ such that the clockwise distance between $s_1'$ and $s_2'$ along the border of $U$ times the difference in angle between the normals to $C$ at $\closest(s_1')$ and $\closest(s_2')$ is greater than $\eps$, insert an additional point $m$ into $S$ such that $m$ is the midpoint of the portion of $\partial(U)$ that goes clockwise about $U$ from $s_1'$ and $s_2'$. Once no such pair of points exists in $S'$, set $S=\{\closest(s')\mid s'\in S'\}$.

Here is a different way to interpret this process. Start with a list of pairs initially containing only $(4,2\pi)$, corresponding to the length of the border of $U$ and the measure of a full angle, respectively. While there exists a pair $(x,y)$ satisfying $xy>\eps$, remove it from the list and add the pairs $(x/2,r)$ and $(x/2,y-r)$ to the list, where $r\in [0,y]$. Here,

\begin{itemize}
    \item $x$ represents an upper bound on the distance from $\closest(s_1')$ to $\closest(s_2')$ along $\partial(C)$, and $x/2$ is an upper bound on both the distances from $\closest(s_1')$ to $\closest(m)$ and the distances from $\closest(m)$ to $\closest(s_2')$ along $\partial(C)$.
    \item $y$ is the difference in angle between the normals to $C$ at $\closest(s_1')$ and $\closest(s_2')$. Adding $m$ in between $s_1'$ and $s_2'$ splits this angle into two parts.
\end{itemize}

The correctness of the stopping condition can be proved similarly to the original proof of \Cref{lemma:polyapprox1}.  It remains to prove that the size of the list upon the termination of this process is $\BIGO{\frac{1}{\sqrt \eps}}$. Define $\mathit{potential}(x,y)=\max(1,4\sqrt{\frac{xy}{\eps}}).$ We claim that $\mathit{potential}(4,2\pi)\le \BIGO{\frac{1}{\sqrt \eps}}$ is an upper bound on the size of the list upon termination. It suffices to show that whenever $xy>\eps$, the following inequality holds for any choice of $r$:
\begin{equation*}
\mathit{potential}(x,y)\ge \mathit{potential}(x/2,r)+\mathit{potential}(x/2,y-r).
\end{equation*}

This may be rewritten as:
\begin{equation*}
 4\sqrt{\frac{xy}{\eps}}\ge \max\paren{2,4\sqrt{\frac{x/2\cdot y}{\eps}}+1,4\sqrt{\frac{x/2\cdot r}{\eps}}+4\sqrt{\frac{x/2\cdot (y-r)}{\eps}}}  
\end{equation*}

which can easily be verified. Equality holds when $r=y/2$.
\end{proof}

\subsection{Disjoint convex fat bodies: omitted proofs}\label{sec:omitted-proofs-convex-fat}

\begin{proof}[Proof of \cref{lemma:packing-lb}]
	We first see that it suffices to find an $n$ such that the optimal path must be $\Omega(\min r_h)$.
	Once we find such an $n$, we can use the bound on subsequences of size $n$ to obtain the desired result.

	Let $m\triangleq\min r_h$. Suppose that $OPT<m$; it suffices to show that $n= \BIGO{1}$.

	Let $C(P, r)$ be the sphere centered at $P$ with radius $r$.
	Pick an arbitrary point $p$ on $OPT$. Observe that all of $OPT$ lies inside $C(p, m)$, and for each region $R$, there is a point $q\in R\cap C(p,m)$. 
	
	\begin{claim*}
		
		For all convex fat regions $R$ with $r_h\geq m$, we must have
		$\vol(C(q,m)\cap R)\geq \Omega\left(\vol(C(q,m))\right)$.
		
	\end{claim*}
	
	If the claim holds, since the regions $C(q,m)\cap R$ are disjoint subsets of $C(p,2m)$,
	we would have $n\cdot \Omega\left(\vol(C(q,m))\right)\leq \vol(C(p,2m))$,
	implying the desired result $n= \BIGO{1}$.

	\begin{claimproof}
		Let $R_c$ be the center of a ball contained in $R$ with radius $r_h$.
		Consider the case that $\vmag{R_c-q}> \frac 12 m$.
		By convexity, since $q\in R$ and $\vol(C(R_c,r_h))\subseteq R$,
		the image of a dilation of $\vol(C(R_c,r_h))$ with center $q$ and ratio $\frac {m}{2\vmag{R_c-q}}\leq 1$
		is also a subset of $R$.
		Let image of the dilation be $\vol(C(R_c',r_h'))$.
		We have that $\vmag{q-R_c'}=\frac 12 m$ and
		$r_h'=r_h\cdot \frac {m}{\vmag{R_c-q}}\geq r_h\cdot \frac {m}{r_H}\geq \Omega(m)$.
		Therefore, it suffices to prove that
		$\vol(C(q,m)\cap C(R_c, \Omega(m)))\geq \Omega\left(\vol(C(q,m))\right)$
		for all points $\vmag{q-R_c}\leq \frac 12 m$.

		WLOG $\Omega(m)\leq \frac 12 m$. Then $C(R_c, \Omega(m))$ lies entirely inside $\vol(C(q,m))$, so
		$\vol(C(q,m)\cap C(R_c, \Omega(m)))= \vol( C(R_c, \Omega(m))) \geq \Omega\left(\vol(C(q,m))\right)$,
		as desired.
	\end{claimproof}

	\begin{claim*}
		The bound for balls is $n_0=3$.
	\end{claim*}

	\begin{claimproof}
		First, we reduce this claim to the 2D case.  More specifically, we want to show that in the case of three balls,
		the optimal path must lie on the plane $P$ containing the centers of the balls. 
		
		For any path $p$, the projection $p'$ of $p$ onto $P$ is also a valid path
		and has length at most the length of $p$. We note that projection never increases
		the length of a segment; therefore, the distances of the points on the path to
		the centers of their respective balls must have decreased. Therefore, $p'$ still passes through all three balls, and the length of $p'$ is at most that of $p$.

		It remains to show the claim in the 2D case. We claim that we must have $OPT\geq \frac1{100}\left( \min r_h\right)$. We proceed similarly to the general case:
		Assume the contradiction, and let $m\triangleq\frac1{10}\left( \min r_h\right)$.
		Then the optimal path lies in a disk $C\left(p, \frac{1}{10}m\right)$.
		For each region $R$, let $q$ be any point in $p\cap R$. Then $C\paren{q, m}$ is contained inside $C\paren{p, \paren{1+\frac{1}{10}}m}$.
		Note that this differs from the general case as we assumed $OPT<\frac{1}{10}m$ instead of
		$OPT<m$.

		Therefore, to create a contradiction, it remains to show that
		\begin{equation*}
		    \vol(C(q, m)\cap R)> \frac13 \vol\paren{C\paren{p, \paren{1+\frac{1}{10}}m}}=
		\frac{1}{3}\paren{1+\frac{1}{10}}^2\vol(C(p, m)),
		\end{equation*}
		where $R$ is any disk with radius at least $10m$ and $q\in R$.

		Let $r$ be the center of $R$, and $a$ be any point inside $C(q,m)$.
		Also let $x\triangleq \vmag{r-q}$, $y\triangleq \vmag{q-a}$, and $z\triangleq \vmag{r-a}$.
		By the Law of Cosines,
		$z^2=x^2 + y^2 -2xy \cos\paren{\angle{rqa}}$.

		Suppose that we have $\cos\paren{\angle{rqa}}\geq \frac{1}{20}$;
		we claim that this implies $z\leq 10m$.
		Note that from our constraints we have $x\leq 10m$ and $y\leq m\leq 10m$,
		so we have
		\begin{equation*}
		 (x-10m)(y-10m)=xy-10mx-10my+100m^2\geq 0.
		\end{equation*}
		Next, we have
		\[
			\begin{split}
				z^2
				&\leq x^2 + y^2
				  -\frac{1}{10}xy \\
				&\leq x^2 + y^2
				  -\frac{1}{10}\paren{10mx+10my-100m^2} \\
				&= \paren{x-\frac{1}{2}m}^2 + \paren{y-\frac12 m}^2
				  +\frac{19}{2}m^2 \\
				&\le \paren{10m-\frac{1}{2}m}^2 + \paren{m-\frac12 m}^2
				  +\frac{19}{2}m^2 \\
				&=100m^2.
			\end{split}
		\]
		To conclude,
		\begin{equation*}\vol(C(q, m)\cap R)\geq \frac{2\arccos{\paren{\frac1{20}}}}{2\pi}\vol(C(q, m))>
		\frac13 \vol\paren{C\paren{q, \paren{1+\frac{1}{10}}m}},
		\end{equation*}
		as desired.
	\end{claimproof}
\end{proof}

\begin{proof}[Proof of \cref{lemma:sp-logn} (Remainder)]

For the construction, we'll address the case where all the bodies are disks on the 2D plane; the result can trivially be extended to higher dimensions. Let $x_i$ be the $x$-coordinate of the center of the $i$th largest disk. We will show that it is possible to have $r_i=\frac1i$ for all $i$ such that every disk is tangent to the segment connecting $(0,0)$ and $(8,0)$, every disk has center above the $x$-axis, and no two disks intersect. As $\sum r_i=\Theta(\log n)$ and $OPT\le 8$, this would give the desired bound.

We claim that regardless of how $x_1,x_2,\ldots,x_{i-1}$ have been selected, there is always a valid choice for $x_i$ such that the $i$th disk does not intersect with any of the first $i-1$. Observe that $x_i$ is valid if $|x_i-x_j|\ge \sqrt{(r_i+r_j)^2-(r_i-r_j)^2}= 2\sqrt{r_ir_j}$ for all $j\in [1,i-1]$, where $r_j$ denotes the radius of the $j$th disk. The total length of the $x$-axis rendered invalid by any of the first $i-1$ disks is at most
\begin{equation*}
\sum_{j=1}^{i-1}4\sqrt{r_ir_j}=\frac{4}{\sqrt i} \sum_{j=1}^{i-1}\frac{1}{\sqrt j}< \frac{8\sqrt i}{\sqrt i}< 8.
\end{equation*}

Therefore, some $x\in [0,8]$ must exist that was not rendered invalid and is thus a valid candidate for $x_i$.
\end{proof}

\subsection{Balls: omitted proofs} \label{sec:omitted-proofs-balls}

\begin{proof}[Proof of \cref{lemma:local-global-opt}]
As the problem of touring disks be formulated as a convex optimization problem by \cref{lemma:socp} and the optimal value is lower bounded, a global optimum is guaranteed. This global optimum must be locally optimal, in the sense that it should not be able to decrease the objective by moving any single $p_i$. This means that
\begin{enumerate}
    \item For all $p_i$ satisfying $|p_i-c_i|<r_i$, the gradient of the objective with respect to $p_i$ must be 0.
    \item For all $p_i$ satisfying $|p_i-c_i|=r_i$, the gradient of the objective with respect to $p_i$ must be perpendicular to the $i$th circle.
\end{enumerate}
The gradient of the objective with respect to $p_i$ is precisely $\unit(p_i-p_{i-1})+\unit(p_i-p_{i+1})$, where $\unit(x)\triangleq \frac{x}{\vmag{x}}$. Case 1 corresponds to the tour passing straight through the $i$th disk while case 2 corresponds to reflecting off the $i$th circle.

Conversely, given a locally optimal solution, we can certify its optimality by choosing $z_i=\unit(p_{i+1}-p_i)$, where $z_i$ is defined in the proof of \cref{lemma:socp}.
\end{proof}

\begin{lemma}\label{lemma:socp}
The touring balls problem can be formulated as a convex optimization problem (specifically, a second-order cone problem).
\end{lemma}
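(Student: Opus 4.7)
The plan is to cast the touring balls problem in the canonical SOCP form by introducing auxiliary variables $t_0, t_1, \dots, t_n$ representing segment lengths. Concretely, I would rewrite the problem as
\begin{align*}
\min \quad & \sum_{i=0}^{n} t_i \\
\text{s.t.}\quad & \vmag{p_i - p_{i+1}}_2 \le t_i, \quad i = 0, 1, \dots, n, \\
& \vmag{p_i - c_i}_2 \le r_i, \quad i = 1, 2, \dots, n,
\end{align*}
where $p_0$ and $p_{n+1}$ are fixed and the decision variables are $p_1,\dots,p_n \in \R^d$ together with $t_0,\dots,t_n \in \R$. The objective is linear, and each of the $2n+1$ constraints has the form $\vmag{Ax+b}_2 \le c^\top x + d$, which is exactly a second-order cone (Lorentz) constraint. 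Hence the program is an SOCP by definition, and in particular convex.

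The second step is to check that this reformulation is equivalent to the original minimization. Any feasible tour $(p_1,\dots,p_n)$ gives an SOCP-feasible point of the same objective value via $t_i\triangleq \vmag{p_i - p_{i+1}}_2$, and conversely every SOCP-feasible point satisfies $\sum_i \vmag{p_i - p_{i+1}}_2 \le \sum_i t_i$ with each $p_i$ lying in its ball, so the two optima coincide. A feasible point exists (e.g.\ pick $p_i = c_i$ and $t_i = \vmag{c_i - c_{i+1}}_2$, treating $c_0 \triangleq p_0$ and $c_{n+1} \triangleq p_{n+1}$) and the objective is bounded below by $0$, so a global minimizer exists.

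The last, essentially bookkeeping, step is to identify the dual variables that Lemma~\ref{lemma:local-global-opt} invokes. Attach a Lorentz multiplier $(z_i,\sigma_i) \in \R^d \times \R$ with $\vmag{z_i}_2 \le \sigma_i$ to each segment constraint, and a nonnegative multiplier $\mu_i$ to each ball constraint. Stationarity of the Lagrangian in $t_i$ forces $\sigma_i = 1$; stationarity in $p_i$ balances $z_{i-1}$, $z_i$, and a normal vector to the $i$th ball; and on a segment of positive length, complementary slackness pins down $z_i = \unit(p_{i+1} - p_i)$, exactly reproducing the certificates referenced in Figure~\ref{fig:local-global-opt}. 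There is no real mathematical obstacle: the Lorentz cone is self-dual and Slater's condition holds (perturb any feasible tour so every segment has positive length and every ball constraint is strict), so strong duality applies. The only care required is in matching sign conventions so that the $z_i$ point in the direction used earlier.
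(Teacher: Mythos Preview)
Your proposal is correct and essentially identical to the paper's proof: both introduce auxiliary segment-length variables (your $t_i$, the paper's $d_i$), impose the cone constraints $\vmag{p_i-p_{i+1}}\le t_i$ and $\vmag{p_i-c_i}\le r_i$, minimize the linear objective $\sum t_i$, and then read off dual variables $z_i$ attached to the segment constraints satisfying $\vmag{z_i}\le 1$ with $z_i=\unit(p_{i+1}-p_i)$ at optimality. The only cosmetic difference is that the paper also writes $p_0,p_{n+1}$ as ball constraints with radius~$0$ rather than fixing them, and it states the dual explicitly rather than deriving it via Slater and complementary slackness as you do.
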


\begin{proof}
Stated in \cite{polishchuk2005touring}. We restate a possible formulation here:

\proofsubparagraph*{Primal.}

\noindent
Constants: $c_i \in \R^d, \forall i\in [0,n+1]$. $r_i \in \R^+, \forall i\in [0,n+1]$.

\noindent
Decision Variables: $p_i \in \R^d, \forall i\in [0,n+1]$. $d_i \in \R^+, \forall i\in [0,n]$.

\noindent
Constraints: $\vmag{p_i-c_i} \leq r_i, \forall i \in [0, n+1]$. $\vmag{p_{i+1}-p_{i}} \leq d_i, \forall i \in [0, n]$.

\noindent
Objective: $\min\paren{\sum_{i=0}^nd_i}$

\proofsubparagraph*{Dual.}

\noindent
Constants: Same as primal.

\noindent
Decision Variables: 
\begin{itemize}
    \item Associate a variable $y_i\in \R^d, \forall i\in [0,n+1]$ and a real $w_i\in \R^+, \forall i\in [0,n+1]$ for each constraint of the first type ($y_i\cdot (p_i-c_i)\le w_i\cdot r_i$).
    \item Associate a variable $z_i\in \R^d, \forall i\in [0,n]$ for each constraint of the second type ($z_i\cdot (p_{i+1}-p_i)\le d_i$).
\end{itemize}

\noindent
Constraints: $\vmag{z_i}\le 1, \forall i\in [0,n]$. $y_i=z_i-z_{i-1}, \forall i\in [0,n+1]$. $\vmag{y_i}\le w_i, \forall i\in [0,n+1]$. 

\noindent
Objective: $\max\paren{-\sum_{i=0}^{n+1}w_ir_i-\sum_{i=0}^{n+1}y_i\cdot c_i}$.
\end{proof}
\end{document}